\documentclass[
  reprint,
  amsmath,
  amssymb,
  aps,
  pre,
  floatfix
]{revtex4-2}
\usepackage{graphicx,amsthm}
\usepackage{dcolumn}
\usepackage{bm}
\usepackage{xcolor}
\usepackage{comment}

\newtheorem{theorem}{Theorem}[section]
\newtheorem{lemma}[theorem]{Lemma}
\newtheorem{proposition}[theorem]{Proposition}
\newtheorem{corollary}[theorem]{Corollary}
\newtheorem{result}{Result}

\theoremstyle{definition}
\newtheorem{definition}[theorem]{Definition}
\newtheorem{remark}[theorem]{Remark}

\newcommand{\Var}{\mathrm{Var}}
\newcommand{\Cov}{\mathrm{Cov}}

\newcommand{\E}{\mathbb{E}}
\newcommand{\PP}{\mathbb{P}}

\newcommand{\bc}{\beta_n^{(L)}}
\newcommand{\Phic}{\Phi_\alpha^{(L)}}
\newcommand{\hatphi}{\widehat{\varphi}_{\alpha,L}}
\newcommand{\ZnL}{Z_n^{(L)}}
\newcommand{\GnL}{G_n^{(L)}}
\newcommand{\HnL}{H_n^{(L)}}
\newcommand{\LnL}{\mathcal{L}_n^{(L)}}

\begin{document}

\preprint{APS/123-QED}

\title{Phenotypic Assortment and the Evolution of Cooperation in Finite Periodic Phenotype Spaces
}

\author{Dhaker Kroumi}
\email{dhaker.kroumi@kfupm.edu.sa}
 \affiliation{Department of Mathematics, King Fahd University of Petroleum and Minerals, Dhahran, S.A.}

\date{\today}


\begin{abstract}
We study the evolution of cooperation in a finite haploid population whose individuals carry both a strategy and a phenotype on a finite periodic space. Cooperators help with a probability that decays exponentially with phenotypic distance, generating graded phenotype-dependent assortment. Under weak selection and a large-population mutation scaling, we combine coalescent arguments with the spectral representation of a random walk on the discrete torus to derive an explicit benefit-to-cost threshold for cooperation to be favored in stationary abundance. Stronger phenotypic discrimination and higher phenotype-space dimension strictly lower this threshold, whereas strategy mutation raises it. In contrast, phenotype mutation has a nonmonotone effect: the threshold diverges for both very rare and very rapid phenotype mutation and therefore attains at least one minimum at an intermediate rate. The high-mutation inhibition results from mixing on the finite phenotype space and distinguishes the periodic model from its unbounded-lattice counterpart.
\end{abstract}

\maketitle




\section{Introduction}
In the Prisoner's Dilemma, a cooperator pays a cost \(c>0\) to give a benefit \(b>c\) to another individual, whereas a defector pays no cost and gives no benefit. In a well-mixed population with one-shot interactions, this creates a basic difficulty for cooperation: defectors may receive benefits without paying the cost of helping \citep{Nowak2006}.  Then, cooperation can be favored only when helpers are, in some way, more likely to interact with other helpers than with defectors. This non-random association is usually described as positive assortment.  Several familiar mechanisms can generate such assortment, including kinship \citep{Hamilton1963,Hamilton1964a,Hamilton1964b}, repeated interaction \citep{Trivers1971,Axelrod1981,Axelrod1984}, and spatial or population structure \citep{Nowak1992,Ohtsuki2006}. Although these mechanisms differ biologically, they all increase the chance that the benefit of cooperation is directed toward individuals associated with cooperation rather than toward defectors.

Population structure provides one of the most studied sources of assortment. In spatial and graph-structured populations, local interactions and replacement events can create correlations among neighboring strategies and promote cooperation under suitable conditions \citep{Nowak1992,Lieberman2005,SantosPacheco2005,Ohtsuki2006,Taylor2007}. More generally, the effect of structure depends on the update rule, graph topology, and payoff accumulation \citep{SzaboFath2007,Roca2009,PercSzolnoki2010}. Related effects arise in subdivided populations, where limited dispersal and group formation generate genetic and strategic correlations and connect assortment with kin selection and local competition \citep{KimuraWeiss1964,Rousset2004,LehmannRousset2010,Ohtsuki2010,Lessard2011a,Ohtsuki2014,Kristensen2022,MartinLessard2023,Kristensen2025}. Although these models differ in their demographic assumptions, they share the central principle that population structure changes the pattern of association among individuals and thereby the assortment available to cooperation.

Phenotypic similarity is another example of such assortment. If individuals carry heritable markers and condition their behavior on marker similarity, then cooperation can become associated with recognizable phenotypic types. This idea is closely related to the green-beard effect \citep{Dawkins1976,Gardner2010}. It is also connected to tag-based models of cooperation, in which behavior
depends directly on a visible heritable marker rather than on explicit
pedigree information
\citep{Riolo2001,Axelrod2004,Traulsen2007}.
Related evolutionary models have shown that tag-based and
similarity-dependent interactions can generate cooperation and phenotypic
segregation
\citep{TraulsenSchuster2003,TraulsenClaussen2004}.
Many of these models assume that a helper cooperates fully with those
carrying the same tag and defects against those carrying a different tag.

Phenotype-space models make this idea more explicit by placing recognition traits in a metric space. \citet{Antal2009} studied cooperation based on phenotypic similarity in a one-dimensional integer phenotype space, where cooperators help only individuals with the same phenotype.  \citet{Kroumi2015} extended this approach to the multidimensional lattice \(\mathbb Z^n\) and proposed that increasing the phenotype-space dimension promotes cooperation. These studies assume that a cooperator is able to detect phenotypic dissimilarity between itself and other individuals in the population. Recently, \citet{Kroumi2026} considered graded recognition, in which cooperation decreases with phenotypic distance rather than being restricted to identical phenotypes, for phenotype spaces of the form \(\mathbb Z^n\).

The integer lattice \(\mathbb Z^n\) is a useful idealization, but it allows infinitely many phenotypic states and arbitrarily large phenotypic distances. Real recognition systems are usually more constrained. Chemical recognition in social insects, for example, depends on cuticular-hydrocarbon profiles that are perceived through finite sensory systems and used to distinguish nestmates from non-nestmates \citep{Yusuf2010,vanWilgenburg2012,Pask2017}. In plant self-incompatibility systems, self/nonself recognition is controlled by highly polymorphic but discrete genetic determinants \citep{Nasrallah2019,Hua2008}. MHC-mediated recognition likewise involves discrete genetic variation with consequences for immune recognition \citep{Migalska2019}. Visible signals provide further examples: some species exhibit a finite number of genetically determined color morphs \citep{Brock2020,WhiteKemp2016ColourPolymorphism}, while perceptual color spaces are bounded by the sensory capacities of the receiver \citep{Renoult2017}. These systems differ biologically, but they share the feature that the phenotypic cues relevant to recognition are finite, bounded, or effectively discrete.

A natural representation of such a bounded phenotype space is the discrete torus $\mathbb T_L^n=\{0,1,\ldots,L-1\}^n$. It contains finitely many phenotypic states, permits mutation by local movement between neighboring states, and avoids artificial boundary effects by making each coordinate periodic. This means that the torus retains the local geometry of the lattice while replacing its unbounded state space by a finite and homogeneous one. The parameters \(L\) and \(n\) have distinct interpretations: \(L\) determines the number of states available in each coordinate, whereas \(n\) determines the number of phenotypic coordinates used in recognition.

The one-dimensional case \(\mathbb T_L\) is the discrete cycle and is a natural model for periodic recognition traits, such as orientation, activity phase, or seasonal timing, for which the two ends of a linear scale should be identified \citep{Landler2018CircularData}. A cycle may also represent a coarse periodic component of a more complex recognition cue. Thus, \(\mathbb T_L\) provides a simple finite analogue of a one-dimensional phenotype space while preserving local mutational movement.

Another important special case is the \(n\)-dimensional hypercube
\(\mathbb T_2^n=\{0,1\}^n\), obtained when \(L=2\). In this case, the periodic distance reduces to Hamming distance, namely the number of coordinates at which two phenotypes differ. Binary tags and Hamming-type recognition have been used in several models of conditional cooperation \citep{HalesEdmonds2005,Kim2010,Traulsen2007}. Binary phenotype spaces also arise naturally when recognition depends on several discrete molecular or genetic components, as in microbial kin discrimination and allorecognition \citep{Wall2016KinRecognition,Lyons2016CombinatorialKin,Hirose2015Allorecognition}. More generally, hypercube graphs are widely used to represent binary genotype and phenotype spaces in which one-step mutations alter a single coordinate \citep{Zagorski2016BeyondHypercube}.

The finite geometry also changes the analytical treatment of the phenotype
process. Although the coalescent framework is similar to that used for the
unbounded space \(\mathbb Z^n\) \citep{Kroumi2026}, the phenotypic
difference process is now a random walk on a finite periodic state space.
Its transition kernel is determined by the discrete spectrum of the random
walk on the torus, leading to a finite Fourier representation. This differs
from the analysis on the unbounded lattice and introduces finite-state
recurrence and mixing that have no direct analogue in \(\mathbb Z^n\).

In this paper, we analyze a phenotypic-recognition model on the torus, where
individuals with similar phenotypes are more likely to provide cooperative
benefits to one another, whereas phenotypically distant individuals receive
smaller benefits. Under weak selection, we combine the Kingman coalescent
with the spectral representation of the periodic random walk to derive an
explicit threshold for the benefit-to-cost ratio above which selection favors
the abundance of cooperation. We prove monotonicity results for the effects
of discrimination, phenotype-space dimension, and strategy mutation. The
finite geometry also produces a qualitative difference from the unbounded
model: the cooperation threshold diverges both for very small and for very
large phenotype mutation rates and therefore attains at least one minimum at
an intermediate mutation rate. At high mutation rates, this increase results
from mixing on the finite phenotype space, which weakens the genealogical
information carried by phenotypic similarity.

\section{Methods \label{sec1}}


\subsection{\label{sec1-1}Model}
Consider a finite haploid population of size $N$, labeled by $1,\ldots,N$. Individual $k$ is characterized by a strategy $S(k)\in\{C,D\}$, where $C$ and $D$ stand for cooperation and defection, respectively. It also carries a phenotype
\[
\mathbf y(k)=(y_1(k),\ldots,y_n(k))\in \mathbb{T}_L^n,
\]
where $\mathbb{T}_L=\{0,1,\ldots,L-1\}$.
This means that each phenotype is given by an \(n\)-dimensional vector with \(L\) possible states in each coordinate. The population is well-mixed in the sense that each individual can interact with every other individual in the population. However, an individual's behavior depends on its phenotypic dissimilarity from its partner. To measure the similarity, we use the periodic distance $d_n$ defined as follows. Consider two individuals $k$ and $j$ with phenotypes $\mathbf{y}(k)$ and $\mathbf{y}(j)$, then 
\begin{equation}\label{eq:distance}
d_n(k,j)=\sum_{r=1}^n d_1\bigl(y_r(k),y_r(j)\bigr),
\end{equation}
where
\begin{equation}\label{eq:distance1}
d_1(x,y)=\min\bigl(|y-x|,\,L-|y-x|\bigr)
\end{equation}
is the shortest distance between sites \(x\) and \(y\) on the circle \(\mathbb{T}_L\).

For social interactions, we assume pairwise interactions according to the simplified Prisoner's Dilemma. When an individual cooperates, it pays a cost \(c>0\) to provide a benefit \(b>c\) to its partner. If it decides not to cooperate, it never pays this cost and never provides the benefit to its partner.

An individual \(k\) with strategy \(C\) conditions its action on the phenotypic distance from its partner \(j\). More precisely, it helps \(j\) with probability
\begin{equation}\label{eq:kernel}
\varphi(k,j)=e^{-\alpha d_n(k,j)}.
\end{equation}
Because \(\varphi(k,j)=1\) when the two individuals have the same phenotype, individual $k$ always helps any partner with the same phenotype.  When their phenotypes differ, the recognition system of \(k\) fails to detect the dissimilarity with probability \(\varphi(k,j)\), in which case \(k\) helps \(j\). With the complementary probability \(1-\varphi(k,j)\), the dissimilarity is detected and \(k\) withholds help. The parameter \(\alpha>0\) measures the strength of discrimination: cooperation decreases slowly with phenotypic distance when \(\alpha\) is small and rapidly when \(\alpha\) is large. Thus, increasing \(\alpha\) concentrates cooperative acts on phenotypically similar partners. An individual with strategy \(D\) defects independently of its phenotypic similarity to its partner. See Fig.~\ref{fig1} for an illustration.
\begin{figure*}
  \centering
  \includegraphics[width=\textwidth]{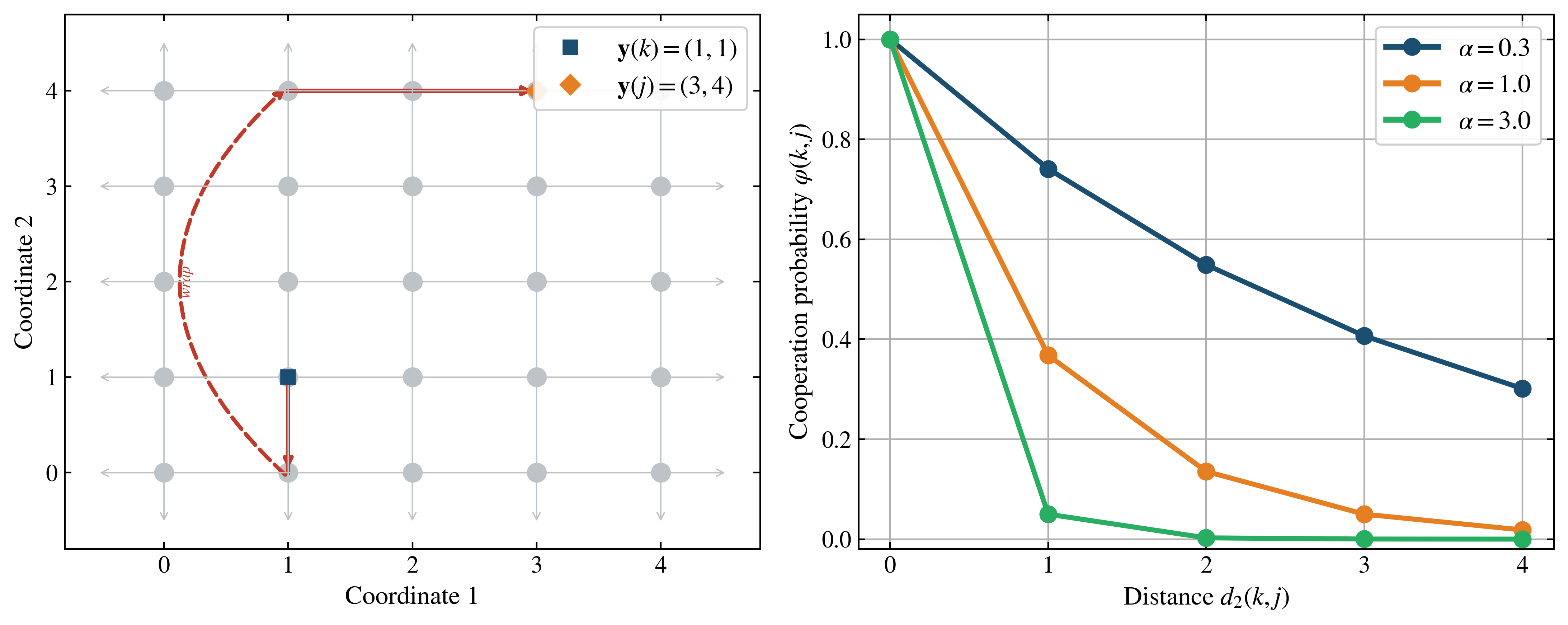}
  \caption{The periodic phenotype space $\mathbb{T}_5^2$ for $n=2$ and $L=5$, and its associated cooperation kernel. 
\textbf{Left:} each coordinate wraps around at $L=5$, forming a torus. As a result, the shortest route from phenotype $(1,1)$ to phenotype $(3,4)$ goes one step downward, crosses the periodic boundary, and then moves two steps to the right, for a total distance of $4$. 
\textbf{Right:} the cooperation probability $\varphi(k,j)$ shown as a function of the phenotypic distance for three values of the discrimination parameter $\alpha$. Since the largest possible distance on $\mathbb{T}_5^2$ is four, there are only five distinct cooperation levels. Larger values of $\alpha$ make the decline steeper, focusing cooperation more strongly on the closest phenotypic neighbors.}
  \label{fig1}
\end{figure*}

After interactions, the expected payoff to individual \(k\) can be written as
\begin{equation}\label{eq:payoff}
\omega_k=\frac{1}{N-1}\sum_{j\ne k}
\left[
b\,\varphi(j,k)\mathbf 1_{\{S(j)=C\}}
-c\,\varphi(k,j)\mathbf 1_{\{S(k)=C\}}
\right],
\end{equation}
where \[\mathbf 1_{\{S(j)=C\}}
=
\begin{cases}
1, & \text{if } S(j)=C,\\
0, & \text{otherwise}.
\end{cases}\]
Here, the first argument of \(\varphi\) is the actor and the second is the recipient. Individual \(k\) receives a benefit from each cooperating partner \(j\), discounted by the probability \(\varphi(j,k)\) that the recognition system of \(j\) does not detect their phenotypic dissimilarity. Individual \(k\) pays the cost \(c\) in an interaction with partner \(j\) when \(k\) is a cooperator and its own recognition system fails to detect their phenotypic dissimilarity. Since \(d_n(k,j)=d_n(j,k)\), the present kernel is symmetric, where
\[
\varphi(j,k)=\varphi(k,j).
\]
We nevertheless retain the actor--recipient ordering in Eq.~\eqref{eq:payoff} to keep the two roles explicit. Then, the expected payoff received by each individual is translated into a reproductive value given by
\begin{equation}\label{eq:fertility}
f_k=1+\delta \omega_k,
\end{equation}
where $\delta>0$ is the selection intensity.  This means that each individual has a baseline reproductive value given by $1$ and an additional value given by its advantage in interactions with others governed by $\delta$. We assume throughout that \(\delta\) is sufficiently small that \(f_k>0\) for every individual and every population configuration.

For reproduction, we use a discrete-time Moran model. At each time step, individuals interact, and each individual accumulates a payoff that is translated into a reproductive value. Then, one individual is chosen with probability proportional to its reproductive value to produce an offspring. This offspring inherits the parent's strategy with probability $1-u$, where $0<u<1$. With the complementary probability $u$, it chooses a strategy randomly selected among $\{C,D\}$. For the phenotype, the offspring inherits the parent's phenotype with probability $1-v$, where $0<v<1$. With the complementary probability $v$, a phenotypic mutation occurs in exactly one coordinate $r\in\{1,2,\ldots,n\}$, while all the other coordinates remain the same as those of the parent's phenotype. In that case, coordinate $r$ moves one step forward with probability $1/2$, or one step backward with probability $1/2$. More precisely, with probability $v/(2n)$ the phenotype of the offspring is $\mathbf y(k)+\mathbf e_r \pmod L$, and with probability $v/(2n)$ it is $\mathbf y(k)-\mathbf e_r \pmod L$,
where $\mathbf e_r$ is the $r$-th unit vector. We assume that mutation events act independently on the two traits, strategy and phenotype. Finally, the offspring replaces an individual chosen at random from the population and takes its label in the next time step.

Now consider the phenotype--strategy configuration of the entire population at
time \(t\), denoted by
\begin{equation}
\mathbf D_t=\bigl((\mathbf y_t(1),S_t(1)),\ldots,(\mathbf y_t(N),S_t(N))\bigr).
\end{equation}
Here, \(\mathbf y_t(i)\) and \(S_t(i)\) denote, respectively, the phenotype and strategy of individual \(i\) at time \(t\), for \(i=1,\ldots,N\). The process \((\mathbf D_t)_{t\geq0}\) takes values in the finite state space $\mathcal E=\bigl(\mathbb T_L^n\times\{C,D\}\bigr)^N$, where any configuration can be reached from any other through a finite sequence of reproduction events, nearest-neighbor phenotype mutations, and strategy mutations. Hence, \((\mathbf D_t)_{t\geq0}\) is an irreducible Markov chain on the finite state space \(\mathcal E\). It therefore admits a unique stationary distribution, denoted by $\bigl(\pi_\delta(\mathbf d)\bigr)_{\mathbf d\in\mathcal E}$. We write \(\E_\delta[\cdot]\) for expectation with respect to this stationary distribution.

Under neutrality, that is, when \(\delta=0\), the transition rule is equivariant under translations of the phenotype torus. Hence, each one-lineage phenotype marginal is uniform on \(\mathbb{T}_L^n\). The joint neutral distribution, however, is not product-uniform, because neutral reproduction copies phenotypes and creates genealogical correlations among individuals. We denote expectation with respect to the neutral stationary distribution by \(\E_0[\cdot]\), and we compute the required joint and relative distributions using a coalescent approach. Since the state space is finite and the transition probabilities depend smoothly on \(\delta\), the stationary distribution also depends smoothly on \(\delta\) for \(\delta\) near \(0\). Hence, we have 
\begin{equation}\label{eq:delta_neutral}
\E_\delta[\cdot]=\E_0[\cdot]+o(1).
\end{equation}

\subsection{Weak-selection analysis}
In this section, we compute the average abundance of cooperation at stationarity of the process \((\mathbf D_t)_{t\geq0}\). We then derive the condition on the ratio $b/c$ for weak selection to favor the average abundance of cooperation.

Suppose that the population is in configuration \(\mathbf d\in \mathcal{E}\) at a given time
step at stationarity, and denote by
\[
X(\mathbf d)=\frac1N\sum_{k=1}^N \mathbf 1_{\{S(k)=C\}}
\]
the frequency of cooperators in that configuration. Any individual
\(k\in\{1,2,\ldots,N\}\) is selected to produce an offspring with probability
\begin{equation}\label{eq:probP}
p_k(\mathbf d)
=
\frac{1+\delta\omega_k(\mathbf d)}
{\sum_{\ell=1}^N \bigl(1+\delta\omega_\ell(\mathbf d)\bigr)}.
\end{equation}

When strategy mutation occurs, with probability \(u\), the offspring is of
type \(C\) or type \(D\), each with probability \(1/2\). This offspring
replaces an individual chosen uniformly at random from the population. Under
this scenario, the frequency of type \(C\) increases (respectively decreases)
by \(1/N\) if the offspring is of type \(C\) (respectively type \(D\)) and
the individual chosen for replacement is of type \(D\) (respectively type
\(C\)). Hence, the conditional expected change in the frequency of type \(C\)
in one time step, denoted by \(\Delta X_{\mathrm{mut}}\), is
\begin{equation}
\begin{split}
\E_\delta\left[\Delta X_{\mathrm{mut}}\mid \mathbf d\right]&=\frac{1}{N}\left(\frac{1}{2}(1-X(\mathbf d))-\frac{1}{2}X(\mathbf d)\right)\\
&=\frac{1}{N}\left(\frac{1}{2}-X(\mathbf d)\right).
\end{split}
\end{equation}

In the absence of strategy mutation, which occurs with probability \(1-u\),
the frequency of cooperators increases (respectively decreases) by \(1/N\)
if a cooperator (respectively a defector) is selected to reproduce and its
offspring replaces a defector (respectively a cooperator). In this case, the
conditional expected change in the frequency of cooperation in one time
step, denoted by \(\Delta X_{\mathrm{sel}}\), is
\begin{align}
\E_\delta\left[\Delta X_{\mathrm{sel}}\mid \mathbf d\right]&=\frac{1}{N}\left[\sum_{k=1}^Np_k(\mathbf d)\mathbf 1_{\{S(k)=C\}}(1-X(\mathbf d))\right.\nonumber\\
&\quad\left.-\sum_{k=1}^Np_k(\mathbf d)\mathbf 1_{\{S(k)=D\}}X(\mathbf d)\right]\label{eq:selectioneffect}\\
&=\frac{1}{N}\left[\sum_{k=1}^Np_k(\mathbf d)\mathbf 1_{\{S(k)=C\}}-X(\mathbf d)\right].\nonumber
\end{align}
Here, we have used $\sum_{k=1}^N p_k(\mathbf d)=1$ and $\mathbf 1_{\{S(k)=D\}}=1-\mathbf 1_{\{S(k)=C\}}$.

Finally, conditioning on the presence/absence of strategy mutation, the expected change in the frequency of cooperation in one time step can be written as
\begin{align}
\E_\delta\left[\Delta X\mid \mathbf d\right]&=(1-u)\E_\delta\left[\Delta X_{\mathrm{sel}}\mid \mathbf d\right]+u\E_\delta\left[\Delta X_{\mathrm{mut}}\mid \mathbf d\right]\nonumber\\
&=(1-u)\,\E_\delta[\Delta X_{\mathrm{sel}}\mid \mathbf d]+\frac{u}{N}\left(\frac{1}{2}-X(\mathbf d)\right).\label{eq:decomp}
\end{align}
Taking the expectation according to the distribution $(\pi_{\delta}(\mathbf d))_{\mathbf{d}\in\mathcal{E}}$, we get
\begin{equation}\label{eq100}
 \E_\delta\left[\Delta X\right]=(1-u)\,\E_\delta[\Delta X_{\mathrm{sel}}]+\frac{u}{N}\left(\frac{1}{2}- \E_\delta\left[X\right]\right). 
\end{equation}
At stationarity, note that the average abundance of type \(C\) is constant, which is equivalent to
\[
\E_\delta[\Delta X]=0.
\]
Then, Eq.~\eqref{eq100} yields
\begin{equation}\label{eq:abundance}
\E_\delta[X]=\frac12+\frac{N(1-u)}{u}\E_\delta[\Delta X_{\mathrm{sel}}].
\end{equation}

Under the neutral regime $\delta=0$, Eq.~\eqref{eq:probP} gives $p_k(\mathbf d)=1/N$ for any individual $k$ and any state $\mathbf{d}$. Then, from \eqref{eq:selectioneffect}, we deduce that the expected change in the frequency of cooperation in one time step in the absence of strategy mutation vanishes $\E_0[\Delta X_{\mathrm{sel}}]=0$ leading to
\[
\E_0[X]=\frac12.
\]
Thus, weak selection favors the abundance of cooperation when the average abundance of $C$, $\E_\delta[X]$, exceeds its neutral value $\E_0[X]=1/2$. Using Eq.~\eqref{eq:abundance}, this is equivalent to
\[
\E_\delta[\Delta X_{\mathrm{sel}}]>0.
\]

To evaluate this quantity, we expand $p_k(\mathbf d)$ given in Eq.~\eqref{eq:probP} to first order in $\delta$ as
\[
p_k(\mathbf d)=\frac1N+\frac{\delta}{N}(\omega_k(\mathbf d)-\bar\omega(\mathbf d))+O(\delta^2),
\]
where $\bar\omega(\mathbf{d})=\frac1N\sum_{\ell=1}^N \omega_\ell(\mathbf{d})$ is the average payoff in the population at state $\mathbf{d}$.
Substituting this expression in Eq.~\eqref{eq:selectioneffect} and using Eq.~\eqref{eq:payoff} and \(\varphi(j,k)=\varphi(k,j)\), we obtain 
\begin{align}
&\E_\delta\left[\Delta X_{\mathrm{sel}}\mid \mathbf d\right]\nonumber\\
&=\frac{\delta}{N^2}\left(\sum_{k}\mathbf 1_{\{S(k)=C\}}\omega_k(\mathbf d)-
X(\mathbf d)\sum_{k}\omega_k(\mathbf d)
\right)+\mathcal{O}(\delta^2)\nonumber\\
&=\frac{\delta}{N^2(N-1)}\Bigg( b\sum_{k\ne j}\varphi(k,j)\mathbf 1_{\{S(k)=S(j)=C\}}\nonumber\\
 & \qquad-\frac{b-c}{N}\sum_{k\ne j}\sum_{\ell}\varphi(k,j)\mathbf 1_{\{S(k)=S(\ell)=C\}} \nonumber\\
 &\qquad-c\sum_{k\ne j}\varphi(k,j)\mathbf 1_{\{S(k)=C\}} \Bigg)+\mathcal{O}(\delta^2)\label{eq:selchange}
\end{align}
Taking the expectation and using Eq.~\eqref{eq:delta_neutral}, we obtain
\begin{equation}\label{eq:sel_change}
\begin{split}
 & \E_\delta\left[\Delta X_{\mathrm{sel}}\right]\\
  =&\frac{\delta}{N^2(N-1)}\Bigg( b\sum_{k\ne j}\E_0\left[\varphi(k,j)\mathbf 1_{\{S(k)=S(j)=C\}}\right]\\
 & \qquad-\frac{b-c}{N}\sum_{k\ne j}\sum_{\ell}\E_0\left[\varphi(k,j)\mathbf 1_{\{S(k)=S(\ell)=C\}}\right]\\
 &\qquad -c\sum_{k\ne j}\E_0\left[\varphi(k,j)\mathbf 1_{\{S(k)=C\}}\right] 
  \Bigg)+o(\delta)
\end{split}
\end{equation}

 Under neutrality, note that strategies are equally likely to be $C$ or $D$. Then, we can express the needed expectations in Eq.~\eqref{eq:sel_change} in terms of the three basic identity measures defined as 
\begin{subequations}\label{identities}
\begin{align}
\ZnL(\alpha) &:= \E_0[\varphi(k,j)],\label{eq:Zdef}\\
\GnL(\alpha) &:= \E_0\bigl[\varphi(k,j)\mathbf 1_{\{S(k)=S(j)\}}\bigr],\label{eq:Gdef}\\
\HnL(\alpha) &:= \E_0\bigl[\varphi(k,j)\mathbf 1_{\{S(j)=S(\ell)\}}\bigr],\label{eq:Hdef}
\end{align}
\end{subequations}
where $k$, $j$ and $\ell$ designate three distinct individuals chosen at random at the same time step in the neutral population at stationarity.
The identity $Z_n^{(L)}(\alpha)$ measures the joint phenotypic kernel between individuals $k$ and $j$, while $G_n^{(L)}(\alpha)$ captures both their joint phenotypic kernel and strategic correlation. Lastly, $H_n^{(L)}(\alpha)$ measures the joint phenotypic kernel between individuals $k$ and $j$ and the strategic correlation between individuals $j$ and $\ell$.

Using neutral \(C\leftrightarrow D\) symmetry and exchangeability, we have
\begin{align*}
\sum_{k\ne j}
\E_0\left[
\varphi(k,j)\mathbf 1_{\{S(k)=C\}}
\right]
&=
\frac{1}{2}
\sum_{k\ne j}
\E_0\left[\varphi(k,j)\right]
\\
&=
\frac{N(N-1)}{2}\ZnL(\alpha).
\end{align*}
Similarly,
\begin{align*}
&\sum_{k\ne j}
\E_0\left[
\varphi(k,j)\mathbf 1_{\{S(k)=S(j)=C\}}
\right]\\
&=
\frac{1}{2}
\sum_{k\ne j}
\E_0\left[
\varphi(k,j)\mathbf 1_{\{S(k)=S(j)\}}
\right]
\\
&=
\frac{N(N-1)}{2}\GnL(\alpha).
\end{align*}
For the third sum, distinguishing the cases
\(\ell=k\), \(\ell=j\), and \(\ell\notin\{k,j\}\), we obtain
\begin{align*}
&\sum_{k\ne j}\sum_{\ell}
\E_0\left[
\varphi(k,j)
\mathbf 1_{\{S(k)=S(\ell)=C\}}
\right]
\\
&\qquad=
\frac{N(N-1)}{2}
\Bigl[
\ZnL(\alpha)
+\GnL(\alpha)
+(N-2)\HnL(\alpha)
\Bigr].
\end{align*}

Inserting these expressions in Eq.~\eqref{eq:sel_change}, the expected change in the frequency of cooperation in the absence of strategy mutation can be written as
\begin{equation}\label{eq:finalsel}
\begin{split}
&\E_\delta[\Delta X_{\mathrm{sel}}]\\
=&\frac{\delta}{2N^2}\Bigl\{b\bigl[(N-1)\GnL-\ZnL-(N-2)\HnL\bigr]\\
&-c\bigl[(N-1)\ZnL-\GnL-(N-2)\HnL\bigr]\Bigr\}+o(\delta).
\end{split}
\end{equation}
Then, we deduce that weak selection favors the average abundance of cooperation to first order if and only if
\begin{equation}\label{eq:condition}
\begin{split}
&b\bigl[(N-1)\GnL(\alpha)-\ZnL(\alpha)-(N-2)\HnL(\alpha)\bigr]\\
&>c\bigl[(N-1)\ZnL(\alpha)-\GnL(\alpha)-(N-2)\HnL(\alpha)\bigr],
\end{split}
\end{equation}
which is equivalent to
\[
\frac{b}{c}>\beta_{n}^{(L)}(\alpha),
\]
where
\begin{equation}\label{eq:betafinite}
\beta_{n}^{(L)}(\alpha)=\frac{(N-1)\ZnL(\alpha)-\GnL(\alpha)-(N-2)\HnL(\alpha)}{(N-1)\GnL(\alpha)-\ZnL(\alpha)-(N-2)\HnL(\alpha)}.
\end{equation}

This formula is exact, but it is not yet sufficiently explicit in terms of the model parameters to facilitate further analysis. To obtain a simpler expression, consider the large-population limit \(N\to\infty\), with \(u\to0\) and \(v\to0\) such that the scaled mutation rates $\mu=Nu/2$ and $\nu=Nv/2$ are kept fixed. In this limit, the threshold becomes
\begin{equation}\label{eq:betalarge}
\bc(\alpha)=\frac{\ZnL(\alpha)-\HnL(\alpha)}{\GnL(\alpha)-\HnL(\alpha)}.
\end{equation}

\section{Threshold formula}\label{sec:threshold}

We now derive an explicit expression for the large-population threshold
\(\beta_n^{(L)}(\alpha)\) appearing in Eq.~\eqref{eq:betalarge}.
The detailed calculations are given in
Appendix~\ref{app:threshold_derivation}.

Measuring time in units of \(N/2\) generations, the ancestral lineages
converge, as \(N\to\infty\), to the Kingman coalescent
\citep{Kingman1982Genealogy}. Along the branches of the coalescent tree,
strategy and phenotype mutations converge to two independent Poisson
processes with rates \(\mu\) and \(\nu\), respectively. For a related
derivation, see \citet{Kroumi2015}.

Consider two individuals \(k\) and \(j\) sampled uniformly at random from
the population at the same time in a neutral population at stationarity, and trace their
ancestral lineages backward in time. Let
\(T_{2}=T(k,j)\) denote their coalescence time to their most recent
common ancestor. Then
\[
f_{T_{2}}(\tau)=e^{-\tau},
\qquad \tau>0.
\]
For three sampled individuals \(k,j,\ell\), let \(T_{3}\) denote the
time until the first coalescence event and let \(T_{2}\) denote the
additional time until the remaining two lineages coalesce. These two
random variables are independent, with joint density
\[
f_{T_{2},T_{3}}(\tau_2,\tau_3)
=
3e^{-(3\tau_3+\tau_2)},
\qquad
\tau_2,\tau_3>0.
\]

Note that strategy mutations occur independently at rate \(\mu\) along each ancestral lineage. Conditional on a coalescence time \(T_{2}=\tau\), the total length of the two ancestral branches is \(2\tau\). Therefore, two individuals \(k\) and \(j\), sampled from the population at the same time, have the same strategy with probability
\begin{equation}\label{eq:strategyprob}
\PP_0\bigl(S(k)=S(j)\mid T_{2}=\tau\bigr)
=\frac{1+e^{-2\mu\tau}}{2}.
\end{equation}
Indeed, no strategy mutation occurs on either ancestral branch with probability \(e^{-2\mu\tau}\), in which case both individuals inherit the strategy of their most recent common ancestor. With probability \(1-e^{-2\mu\tau}\), at least one mutation occurs. Conditional on this event, consider the most recent mutation between the common ancestor and the time in which the individuals are sampled. Since this mutation redraws the strategy uniformly from \(\{C,D\}\), the resulting strategy agrees with that carried by the other lineage with probability \(1/2\). Hence
\[
\begin{aligned}
\PP_0\bigl(S(k)=S(j)\mid T_{2}=\tau\bigr)&=e^{-2\mu\tau}+\frac{1}{2}\bigl(1-e^{-2\mu\tau}\bigr)\\
&=\frac{1+e^{-2\mu\tau}}{2}.
\end{aligned}
\]

We next consider the phenotype process. For each coordinate
\(r\in\{1,\ldots,n\}\), the phenotypic difference between the two
ancestral lineages evolves as a continuous-time symmetric nearest-neighbor
random walk on \(\mathbb T_L\), with total jump rate \(2\nu/n\).
Its conditional transition probability is
\begin{equation}\label{eq:skellam}
\begin{split}
&\PP_0\bigl(y_r(k)-y_r(j)=m\pmod L\mid T_{2}=\tau\bigr) \\
=&\frac1L\sum_{p=0}^{L-1}e^{-\sigma(1-c_p)}e^{2\pi\mathrm{i}pm/L},
\end{split}
\end{equation}
for any $m\in\mathbb T_L$, where $\sigma=2\nu\tau/n$ and $c_p=\cos\!\left(2\pi p/L\right)$. Here, \(\mathrm{i}\) denotes the imaginary unit.
The derivation of Eq.~\eqref{eq:skellam} from the spectral decomposition
of the periodic random walk is given in
Appendix~\ref{app_1:threshold_derivation}.

Define the one-coordinate phenotypic function
\begin{equation}\label{eq:Phidef_main}
\Phi_\alpha^{(L)}(\sigma):=\E_0\left[e^{-\alpha d_1(0,y_r(k)-y_r(j))}\Big| T_{2}=\tau\right].
\end{equation}
Using \eqref{eq:skellam}, we get
\begin{equation}\label{eq:Phidef}
\begin{split}
&\Phi_\alpha^{(L)}(\sigma)\\
=&  \sum_{m=0}^{L-1} e^{-\alpha d_1(0,m)}\PP_0\Big(y_r(k)-y_r(j)=m\Big| T_{2}=\tau\Big)\\
=&\sum_{m=0}^{L-1} e^{-\alpha d_1(0,m)}\frac1L\sum_{p=0}^{L-1} e^{-\sigma(1-c_p)}e^{2\pi \mathrm{i} p m/L}\\
=&\frac1L\sum_{p=0}^{L-1}\hatphi(p)e^{-\sigma(1-c_p)}.
\end{split}
\end{equation}
where
\begin{equation}\label{eq:DFT_main}
\hatphi(p)
=
\sum_{m=0}^{L-1}
e^{-\alpha d_1(0,m)}
e^{2\pi\mathrm{i}pm/L}.
\end{equation}
The coefficients \(\hatphi(p)\) are strictly positive for \(p=0,\ldots,L-1\).
The explicit formulas for \(\hatphi(p)\), together with some properties of $\Phi_\alpha^{(L)}$, are given in
Appendix~\ref{app_2:threshold_derivation}.

Since the \(n\) coordinate processes are independent conditional on the
coalescence time, the full phenotypic kernel satisfies
\begin{equation}\label{eqqq}
\begin{split}
&\E_0\left[\varphi(k,j)\mid T_{2}=\tau\right]\\
=\;&\E_0\!\left[e^{-\alpha d_n(k,j)}\,\Big|\, T_{2}=\tau\right]\\
=\;&\prod_{r=1}^n
   \E_0\!\left[e^{-\alpha d_1(0,\,y_r(k)-y_r(j))}\,\Big|\,
   T_{2}=\tau\right]\\
=\;&\left[
\Phi_\alpha^{(L)}
\left(\frac{2\nu\tau}{n}\right)
\right]^n.
\end{split}
\end{equation}
 Conditioning on the corresponding coalescence times gives
\begin{subequations}\label{eq:ZGexplicit_main}
\begin{align}
Z_n^{(L)}(\alpha)=&
\frac{n}{2\nu}
\mathcal L_n^{(L)}
\left(
\alpha,\frac{1}{2\nu}
\right),
\label{eq:Zexplicit}
\\
G_n^{(L)}(\alpha)
=&
\frac{n}{4\nu}
\left[
\mathcal L_n^{(L)}
\left(
\alpha,\frac{1}{2\nu}
\right)
+
\mathcal L_n^{(L)}
\left(
\alpha,\frac{1+2\mu}{2\nu}
\right)
\right],
\label{eq:Gexplicit}\\
H_n^{(L)}(\alpha)
=&
\frac{n}{8\nu}
\Bigg[
\frac{3+2\mu}{1+\mu}
\mathcal L_n^{(L)}
\left(
\alpha,\frac{1}{2\nu}
\right)
+
\mathcal L_n^{(L)}
\left(
\alpha,\frac{1+2\mu}{2\nu}
\right)
\nonumber\\
&
-
\frac{\mu(3+2\mu)}
{(1+\mu)(1+2\mu)}
\mathcal L_n^{(L)}
\left(
\alpha,\frac{3+2\mu}{2\nu}
\right)
\Bigg],\label{eq:Hexplicit}
\end{align}
\end{subequations}
where $\mathcal L_n^{(L)}$ is a generalized Laplace transform given by
\begin{equation}\label{eq:Ldef}
\mathcal L_n^{(L)}(\alpha,x)
:=
\int_0^\infty
\left[\Phi_\alpha^{(L)}(\sigma)\right]^n
e^{-nx\sigma}\,d\sigma.
\end{equation}
The details of these coalescent calculations are given in
Appendix~\ref{app_identity_measures}.

Substituting Eqs.~\eqref{eq:ZGexplicit_main} into
Eq.~\eqref{eq:betalarge} yields the explicit threshold
\begin{widetext}
\begin{equation}\label{eq:betaexplicit}
\beta_n^{(L)}(\alpha)
=
\frac{
(1+2\mu)^2
\mathcal L_n^{(L)}
\left(\alpha,\frac1{2\nu}\right)
+
\mu(3+2\mu)
\mathcal L_n^{(L)}
\left(\alpha,\frac{3+2\mu}{2\nu}\right)
-
(1+\mu)(1+2\mu)
\mathcal L_n^{(L)}
\left(\alpha,\frac{1+2\mu}{2\nu}\right)
}{
(1+\mu)(1+2\mu)
\mathcal L_n^{(L)}
\left(\alpha,\frac{1+2\mu}{2\nu}\right)
+
\mu(3+2\mu)
\mathcal L_n^{(L)}
\left(\alpha,\frac{3+2\mu}{2\nu}\right)
-
(1+2\mu)
\mathcal L_n^{(L)}
\left(\alpha,\frac1{2\nu}\right)
}.
\end{equation}
\end{widetext}

The numerator minus the denominator in Eq.~\eqref{eq:betaexplicit} is
\[
2(1+\mu)(1+2\mu)
\left[
\mathcal L_n^{(L)}
\left(\alpha,\frac1{2\nu}\right)
-
\mathcal L_n^{(L)}
\left(\alpha,\frac{1+2\mu}{2\nu}\right)
\right],
\]
which is strictly positive because
\(x\mapsto\mathcal L_n^{(L)}(\alpha,x)\) is strictly decreasing. See Appendix~\ref{ln_is_decreasing} for the proof.
Consequently,
\[
\beta_n^{(L)}(\alpha)>1.
\]
Thus, the benefit-to-cost ratio must always exceed \(1\) before weak
selection can favor the abundance of cooperation.

\section{Main results}
In this section, we study how the parameters of the evolutionary process affect the large-population threshold \(\beta_n^{(L)}(\alpha)\), with the remaining parameters held fixed unless stated otherwise.

\subsection{Effect of phenotypic discrimination}
Here, we investigate how the strength of phenotypic discrimination, controlled by the parameter $\alpha>0$, affects the threshold \(\beta_n^{(L)}(\alpha)\). The following result summarizes the effect of \(\alpha\) on the critical threshold. Its proof is given in Appendix~\ref{AppendixB}.

\begin{result}\label{thm:torus-alpha-monotonicity}
The map 
\(\alpha\mapsto\beta_n^{(L)}(\alpha)\) is strictly decreasing on \((0,\infty)\). Thus, sharper phenotypic discrimination lowers the threshold required for selection to favor the abundance of cooperation.
\end{result}

This result shows that cooperation is promoted most effectively when benefits are directed toward phenotypically close individuals ($\alpha\to\infty$). Although weak discrimination allows an individual to help a larger set of phenotypes, those additional benefits are spread across individuals that are less phenotypically associated with the actor. This weakens the assortment generated by phenotypic similarity. Increasing \(\alpha\) concentrates cooperative benefits among individuals with similar phenotypes. In this case, the recipient set becomes smaller, but the cooperative act becomes more strongly aligned with phenotypic assortment. 

Now, we describe two interesting limiting regimes.
\begin{itemize}
\item \textbf{Weak-discrimination limit \((\alpha\to0^+)\).}
When \(\alpha\) approaches zero, the exponential discrimination kernel becomes nearly constant across the torus:
\[
\varphi(k,j)=e^{-\alpha d_n(k,j)} \sim 1 .
\]
Thus, individuals distribute cooperative benefits almost independently of phenotypic similarity with their partners. This removes any phenotype-dependent bias in helping, and the assortment mechanism generated by recognition is lost. In this case, we have
\[
\lim_{\alpha\to0^+}\beta_n^{(L)}(\alpha)=\infty,
\]
which corresponds to phenotype-independent helping, where the recognition mechanism generates no assortment. For the proof, see Corollary~\ref{cor:alpha}.
Biologically, this means that indiscriminate helping does not favor the abundance of cooperation, because cooperative benefits are not preferentially directed toward phenotypically similar individuals. By contrast, individuals with similar phenotypes are more likely to share a recent common ancestor and, consequently, are more likely to carry the same strategy. Phenotypic similarity therefore provides an informative signal of strategic similarity, allowing cooperative benefits to be directed preferentially toward other cooperators.

\item \textbf{Sharp-discrimination limit \((\alpha\to\infty)\).}
At the opposite extreme, if $\alpha$ is very large, then the discrimination kernel converges to that of the exact-matching model, where
\[
\varphi(k,j)=e^{-\alpha d_n(k,j)}\sim\mathbf 1_{\{d_n(k,j)=0\}} .
\]
Thus, cooperation is directed only toward individuals with the same phenotype. By Result~\ref{thm:torus-alpha-monotonicity}, this threshold is the limiting lowest value of \(\beta_n^{(L)}(\alpha)\) for $\alpha\in(0,\infty)$.
\end{itemize}

Together, these limiting regimes clarify the role of phenotypic discrimination on the torus. Weak discrimination spreads benefits broadly but erases the effective association between helping and phenotypic similarity. Strong discrimination narrows the set of beneficiaries but strengthens assortment. In addition, the monotone decrease of \(\beta_n^{(L)}(\alpha)\) shows that the gain from stronger assortment outweighs the loss from helping fewer phenotypes.

These conclusions are illustrated in Fig.~\ref{fig2}.

\begin{figure}
  \centering
  \includegraphics[height=7cm, width=\columnwidth]{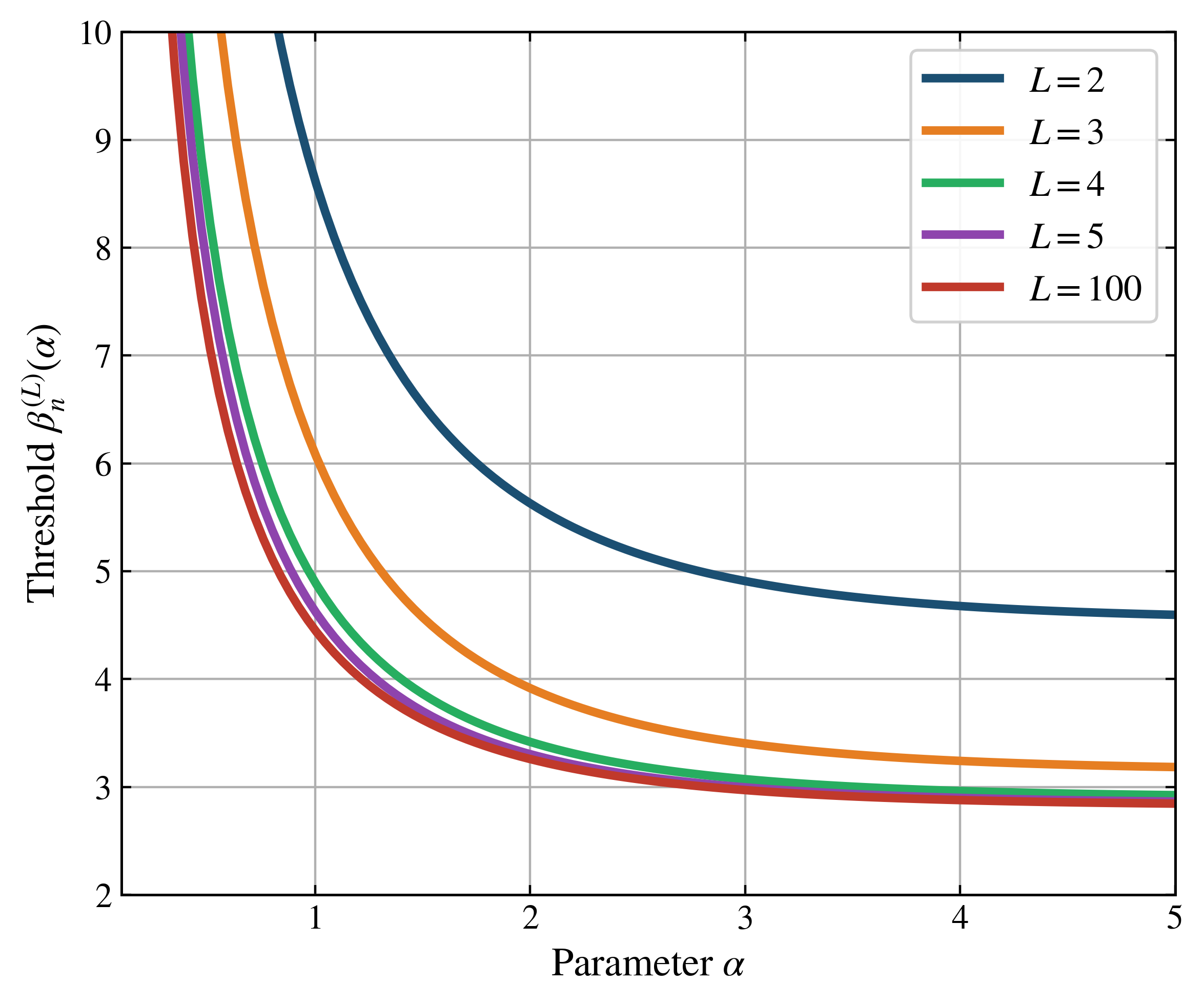}
  \caption{
  Threshold \(\beta_n^{(L)}(\alpha)\) as a function of the discrimination parameter \(\alpha\), with \(n=2\), \(\mu=0.5\), and \(\nu=1\), for \(L=2,3,4,5,100\). For each value of \(L\), the threshold decreases as \(\alpha\) increases, showing that stronger phenotypic discrimination makes cooperation easier to favor. For the parameter values displayed, the curves are also ordered downward as $L$ increases, suggesting a diminishing finite-state effect for larger $L$.
}
  \label{fig2}
\end{figure}

\subsection{\label{sec:torus-dimension}Effect of phenotype-space dimension}
Next, we study the effect of the dimension of the phenotype space $n\geq 1$ on the threshold \(\beta_n^{(L)}(\alpha)\). Note that increasing \(n\) adds independent phenotypic directions along which mutations can occur. This changes the geometry of phenotypic relatedness: two lineages may remain close in some coordinates while separating in others, and the total phenotypic distance reflects the accumulation of these coordinate-wise differences.

We have the following result, where the proof is given in Appendix~\ref{AppendixC}.

\begin{result}\label{thm:torus-n-monotonicity-section}
We have
\[
\beta_{n+1}^{(L)}(\alpha)<\beta_n^{(L)}(\alpha)
\qquad\text{for every integer } n\ge 1.
\]
\end{result}

This means that among phenotype spaces of this type, the one-dimensional torus is the least favorable for
the abundance of cooperation, while adding further phenotypic coordinates strictly lowers the threshold. The reason
is not that higher dimension directly increases the number of cooperative acts. Rather, it changes
how phenotypic similarity is distributed. In one dimension, mutational differences accumulate along
a single circular trait axis, so distinct lineages have relatively more overlap in their phenotypic
neighborhoods. In higher dimensions, the same mutational process is spread across several independent
coordinates. Lineages that acquire mutations in different coordinates separate more efficiently,
and the set of individuals that remain phenotypically close becomes more selectively filtered.

This geometric filtering strengthens assortment. Individuals that remain close in \(\mathbb{T}_L^n\) for large values of $n$
are more likely to share recent phenotypic history, whereas phenotypically distant individuals
receive rapidly diminishing cooperative benefits. Consequently, increasing \(n\) makes the phenotype
space a more effective mediator of assortment, which lowers the benefit-to-cost threshold.

In the limit $n\to\infty$, we have (see Theorem~\ref{cor:n_infty} for the proof) 
\begin{equation}\label{eq:torus-beta-infty-alpha-section}
\begin{split}
\beta_\infty^{(L)}(\alpha)&:=\lim_{n\to\infty}\beta_n^{(L)}(\alpha)\\
&=\frac{4\gamma_\alpha^2\nu^2+(8\mu+6)\gamma_\alpha\nu+4\mu^2+8\mu+3}{4\gamma_\alpha\nu\bigl(\gamma_\alpha\nu+\mu+1\bigr)}.
\end{split}
\end{equation}
Here \(\gamma_\alpha=1-e^{-\alpha}\). 
Note that this limiting threshold still depends on the discrimination parameter \(\alpha\). This means that increasing the dimension does not remove the role of phenotypic discrimination. Instead, dimension and discrimination act through complementary mechanisms. Larger \(n\) makes phenotypic neighborhoods geometrically thinner, while larger \(\alpha\) makes cooperative benefits more sharply concentrated inside those neighborhoods.

When discrimination vanishes,  we have \(\lim_{\alpha\to0^+}\gamma_\alpha=0\), and then Eq.~\eqref{eq:torus-beta-infty-alpha-section} yields
\[
\lim_{\alpha\to0^+}\beta_\infty^{(L)}(\alpha)=\infty.
\]
Even in very high dimension, cooperation cannot be favored if helping is essentially independent
of phenotypic distance. Conversely, in the sharp-discrimination limit, we have \(\lim_{\alpha\to\infty}\gamma_\alpha=1\), from which we deduce that 
\begin{equation}\label{eq:torus-beta-infty-infty-section}
\begin{split}
\beta_\infty^{(L)}(\infty)&:=\lim_{\alpha\to\infty} \beta_\infty^{(L)}(\alpha) \\
&=\frac{4\nu^2+(8\mu+6)\nu+4\mu^2+8\mu+3}{4\nu(\nu+\mu+1)}.
\end{split}
\end{equation}
This is the lowest threshold obtained by simultaneously taking the high-dimensional limit and the
exact-matching discrimination limit.

These conclusions are illustrated in Fig.~\ref{fig3}.

\begin{figure}
  \centering
  \includegraphics[height=7cm, width=\columnwidth]{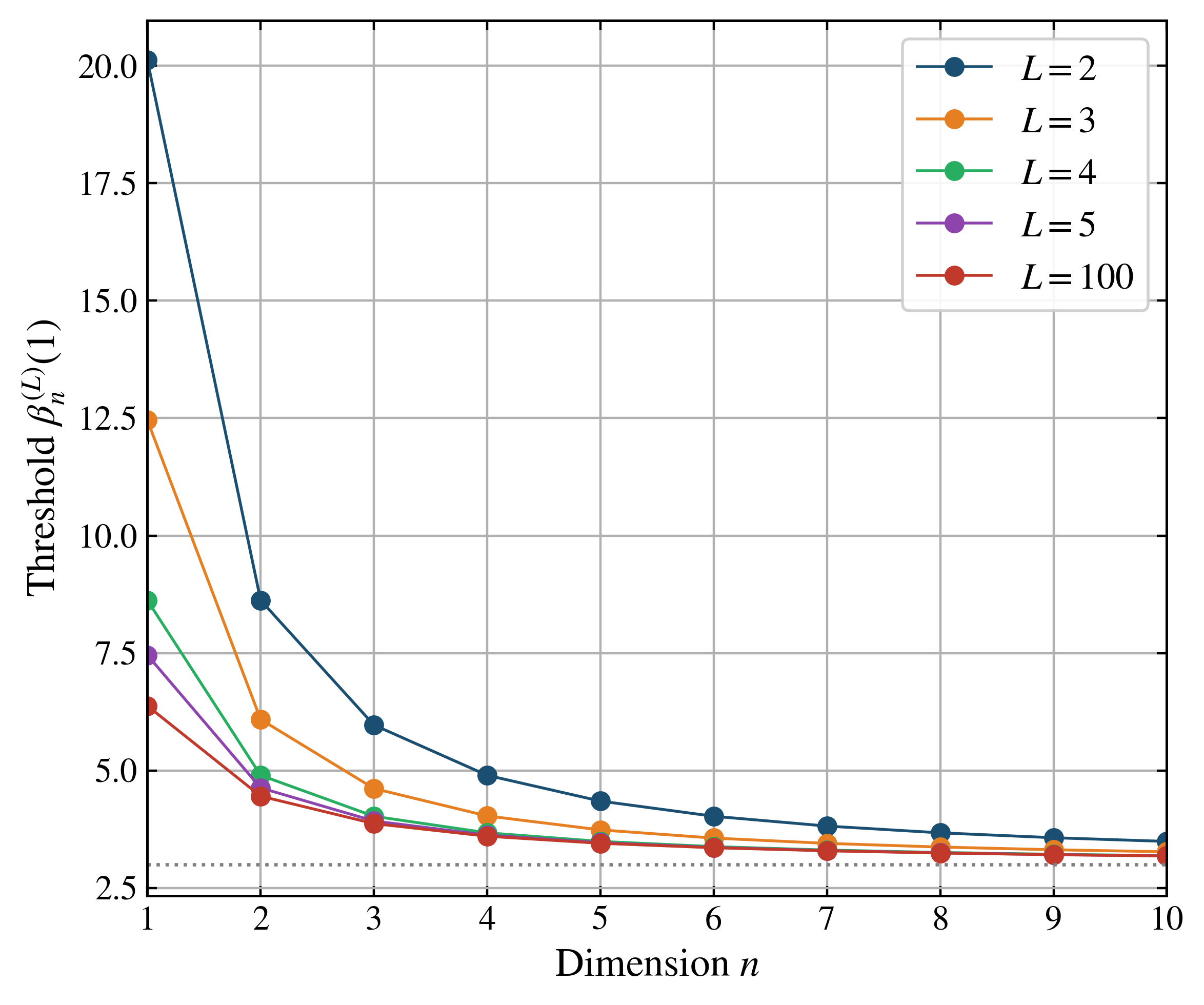}
  \caption{
Threshold \(\beta_n^{(L)}(\alpha)\) as a function of the phenotype-space dimension \(n\), with \(\alpha=1\), \(\mu=0.5\), and \(\nu=1\), for \(L=2,3,4,5,100\). For each value of \(L\), the threshold decreases as \(n\) increases, showing that higher-dimensional phenotype spaces make the abundance of cooperation easier to favor. For the parameter values displayed, increasing the number of phenotypic states \(L\) also lowers the threshold, suggesting a diminishing finite-state effect for larger $L$.
}
  \label{fig3}
\end{figure}

\subsection{\label{sec:phenotype-mutation-rate}Effect of the phenotype mutation rate}
In this section, we examine how the phenotype mutation rate \(\nu\) affects the threshold $\beta_n^{(L)}(\alpha)$. The following result summarizes our findings (the proof is given in Appendix \ref{AppendixD}).

\begin{result}\label{thm:nu-effect-section}
For fixed \(\alpha>0\), \(L\ge2\), \(n\ge1\), and \(\mu>0\), we have
\[
\beta_n^{(L)}(\alpha)\sim\frac{(1+2\mu)(3+2\mu)}{4(1+\mu)\gamma_\alpha}\,\frac1{\nu},
\]
when $\nu\to0^+$, and 
\[
\beta_n^{(L)}(\alpha)\sim\frac{2\,\left(\hatphi(0)\right)^n}{nL^n\,C_n^{(L)}(\alpha)}\,\nu,
\]
when $\nu\to\infty$. Here, 
\[
C_n^{(L)}(\alpha)=\frac{1}{L^n}\sum_{\substack{(p_1,\ldots,p_n)\in\{0,\ldots,L-1\}^n\\ \text{not all }p_j=0}}\frac{\prod_{j=1}^n \hatphi(p_j)}{n-\sum_{j=1}^n \cos(2\pi p_j/L)}.
\]
Consequently, \(\nu\mapsto\beta_n^{(L)}(\alpha)\) is not monotone and attains at least one global minimum at an interior phenotype mutation rate \(\nu^*>0\).
\end{result}

 For \(\nu\) very small, phenotypes change only rarely. This means that individuals can retain the same phenotype for a long time, even when their most recent common ancestor lies far back in the past. Along such long ancestral branches, strategy mutations may occur. Hence, sharing the same phenotype is no longer a reliable signal that two individuals share the same strategy. In this regime, phenotypic discrimination creates little effective assortment and selection requires an increasingly large benefit-to-cost ratio to favor the abundance of cooperation. Increasing \(\nu\) away from this endpoint can lower the threshold because it makes phenotype similarity a more informative signal of recent ancestry and strategic similarity.

A similar problem occurs when \(\nu\) is very large. Since the phenotype space is finite, rapid phenotype mutation makes individuals move quickly through the available phenotype classes. This leads to frequent convergent matches between distantly related individuals. Because these individuals also share long ancestral lineages, phenotypic similarity once again fails to guarantee strategic similarity. Thus, excessive phenotype mutation destroys the assortment that moderate phenotype mutation can generate.

These two limits demonstrate that the critical benefit-to-cost threshold depends non-monotonically on phenotype mutation and has at least one minimum at an intermediate mutation rate. Thus, the abundance of cooperation is favored most easily at an intermediate phenotype mutation rate.

The non-monotone dependence on \(\nu\) is illustrated in Fig.~\ref{fig4}.

\begin{figure}
  \centering
  \includegraphics[height=7cm, width=\columnwidth]{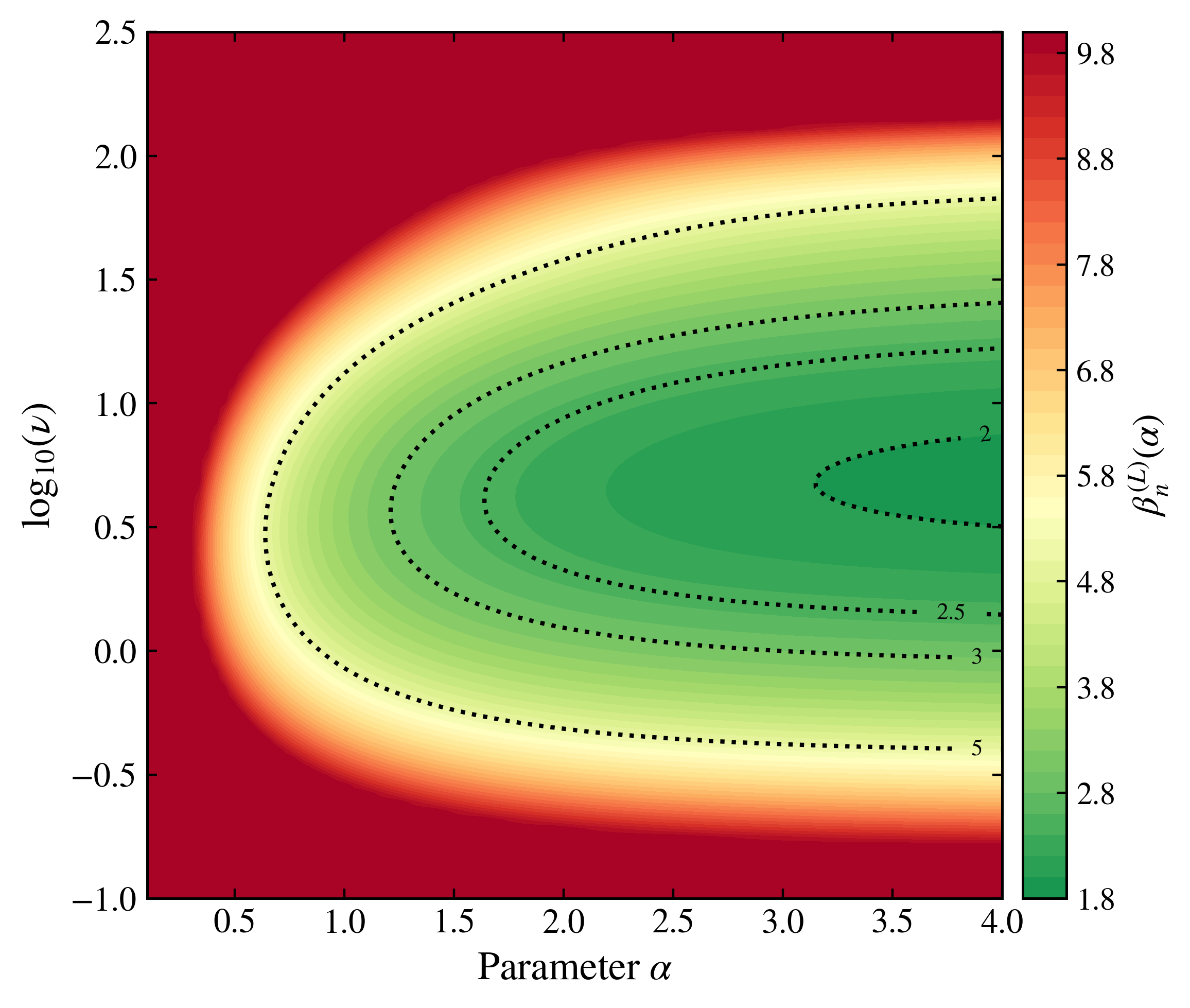}
  \caption{The threshold $\beta_n^{(L)}(\alpha)$ across the two-dimensional parameter space for $L=5$, $n=2$ and $\mu=0.5$. The horizontal axis is $\alpha$, and the vertical axis is the transformed coordinate $\log_{10}(\nu)$. Green indicates a low threshold, at which the abundance of cooperation is easier to favor under weak selection, whereas red indicates a high threshold, at which the abundance of cooperation is harder to favor. The black contour lines show where the threshold equals $2,2.5,3,5$ in the $(\alpha,\log_{10}(\nu))$ plane. The threshold is minimized at an intermediate value of $\nu$. Too little phenotype mutation produces insufficient phenotypic variation for discrimination to generate effective assortment. Too much phenotype mutation mixes phenotypes too rapidly and again destroys the association between phenotypic similarity and ancestry. The low-threshold region forms a band in the $(\alpha,\log_{10}(\nu))$ plane.}
  \label{fig4}
\end{figure}

\subsection{\label{sec:strategy-mutation-rate}Effect of the strategy mutation rate}

Finally, we examine the effect of the strategy mutation rate \(\mu\) on the threshold $\beta_n^{(L)}(\alpha)$. Unlike the phenotype mutation
rate \(\nu\), which has a non-monotone effect, the strategy mutation rate has a one-sided
effect on the threshold given in the following result. The proof is given in Appendix \(\ref{AppendixE}\).

\begin{result}\label{thm:mu-monotonicity-section}
Fix \(\alpha>0\), \(L\ge2\), \(n\ge1\), and \(\nu>0\). Then, the threshold \(\mu\mapsto\beta_n^{(L)}(\alpha)\) is strictly increasing on \((0,\infty)\). Consequently, higher strategy mutation makes selection for the abundance of cooperation more difficult.
\end{result}

This monotonicity has a direct evolutionary interpretation. Increasing \(\mu\) weakens the correspondence between phenotypic similarity and strategic similarity. Even when two individuals with the same or nearby phenotypes have coalesced recently, strategy mutations may have occurred along their ancestral branches. Thus, a cooperator that directs help toward phenotypically similar individuals faces a higher risk of helping a defector. As \(\mu\) becomes large, repeated strategy mutations destroy the association between shared phenotype and shared strategy. Phenotypic similarity may still determine how strongly cooperative behavior is directed toward a partner, but it no longer reliably identifies cooperative partners. Therefore, the effect of increasing \(\mu\) is purely disruptive: it always raises the threshold $\beta_n^{(L)}(\alpha)$.

In the case of large values of \(\mu\), we have
\begin{equation}
\beta_n^{(L)}(\alpha)\sim\frac{2\mathcal L_n^{(L)}\!\left(\alpha,\frac1{2\nu}\right)}{\frac{2\nu}{n}-\mathcal L_n^{(L)}\!\left(\alpha,\frac1{2\nu}\right)}\mu,
\end{equation}
where the coefficient of $\mu$ in the right side is positive (see Corollary \ref{cor:mu_infty_revised}). As a consequence, we obtain
\[
\lim_{\mu\rightarrow\infty}\beta_n^{(L)}(\alpha)=\infty.
\]
This shows that for every fixed benefit-to-cost ratio $b/c$, sufficiently large strategy mutation prevents weak selection from favoring the abundance of cooperation.

These conclusions are illustrated in Fig.~\ref{fig5}.
\begin{figure}
    \centering
    \includegraphics[height=7cm, width=\columnwidth]{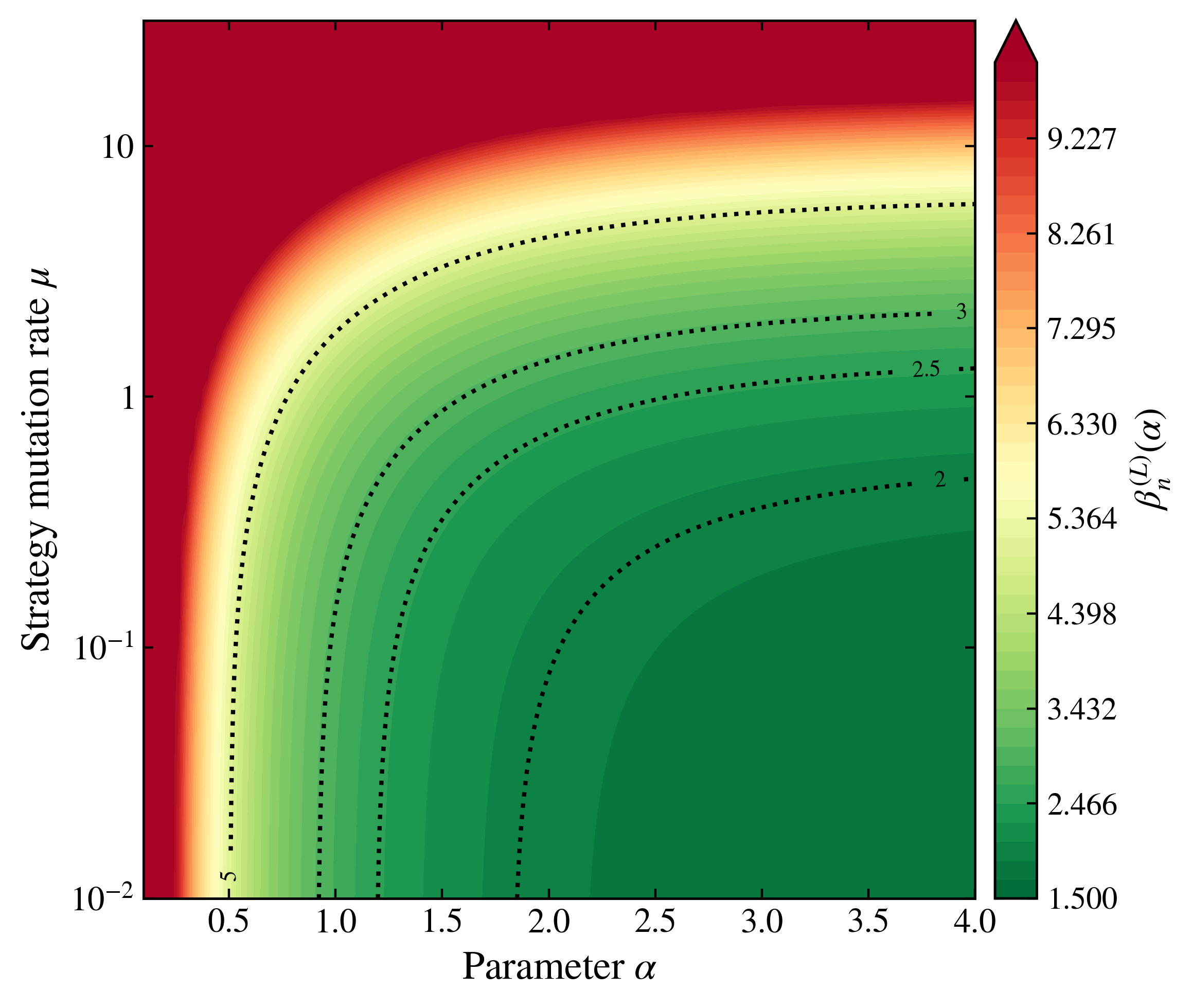}
    \caption{
Effect of parameter $\alpha$ and strategy-mutation rate $\mu$ on the threshold $\beta_n^{(L)}(\alpha)$ with
\(L=5\), \(n=2\), and \(\nu=3\). The vertical axis is shown
on a logarithmic scale. Darker green regions correspond to lower thresholds, where
selection can favor the abundance of cooperation for smaller benefit-to-cost ratios.
Red regions correspond to larger thresholds, where the abundance of cooperation is harder to favor.
The black contour curves indicate the parameter combinations for which
\(\beta_n^{(L)}(\alpha)=2,2.5,3,5\). 
}
    \label{fig5}
\end{figure}

\section{Discussion}

We have studied the abundance of cooperation in a finite population when cooperative behavior is conditioned on phenotypic similarity. Phenotypes lie on the discrete torus \(\mathbb T_L^n\), and a cooperator helps a partner with probability \(e^{-\alpha d_n}\), where \(d_n\) is the periodic phenotypic distance. Under weak selection and the large-population mutation scaling, we obtain an explicit threshold \(\beta_n^{(L)}(\alpha)\) such that weak selection favors the abundance of cooperation to first order precisely when
\[
\frac{b}{c}>\beta_n^{(L)}(\alpha).
\]
The threshold is evaluated using the Kingman coalescent and periodic random walks on the finite torus. In contrast with models on the unbounded lattice \(\mathbb Z^n\) \citep{Antal2009,Kroumi2015,Kroumi2026}, the finite geometry introduces recurrent phenotypic overlap and a uniform long-run distribution of phenotypes.

The unifying theme of the main results is that phenotypic assortment
promotes cooperation only to the extent that phenotypic proximity remains
informative about shared ancestry and strategy. Increasing the discrimination
parameter \(\alpha\) strictly lowers the threshold as this 
reduces helping toward phenotypically distant individuals and concentrates
cooperative benefits on partners whose phenotypes are more strongly associated
with recent common ancestry. In the limit \(\alpha\to\infty\), the interaction
rule becomes exact phenotype matching. By contrast, as
\(\alpha\to0^+\), helping becomes independent of phenotypic distance, the
assortment mechanism disappears, and the cooperation threshold diverges.
This shows that phenotypic variation alone is insufficient; it must be
accompanied by a recognition rule that preferentially directs benefits
toward phenotypically similar individuals. This conclusion is consistent with the graded-recognition
result obtained on \(\mathbb Z^n\) by \citet{Kroumi2026}.

The dimension \(n\) has a monotone effect: increasing the number of phenotypic coordinates strictly
decreases \(\beta_n^{(L)}(\alpha)\). Additional coordinates provide more
independent directions along which two lineages may become phenotypically
distinct. Consequently, remaining close in the full phenotype space becomes a
more selective event and carries more information about common phenotypic
history. This reduces indiscriminate helping toward phenotypically dissimilar
partners and strengthens effective assortment. The result extends the
dimension effects found for exact matching and graded recognition on
unbounded phenotype spaces \citep{Kroumi2015,Kroumi2026}, while showing that
the same mechanism persists when the number of phenotypic states is finite.

The phenotype mutation rate \(\nu\) produces the principal qualitative
difference between finite and unbounded phenotype spaces. On
\(\mathbb Z^n\), increasing phenotype mutation can continue to separate
lineages and thereby strengthen the information contained in phenotypic
similarity \citep{Antal2009,Kroumi2015,Kroumi2026}. On the finite torus,
however, the threshold diverges both as \(\nu\to0^+\) and as
\(\nu\to\infty\). When phenotype mutation is very rare, insufficient
phenotypic variation is generated for discrimination to create effective
assortment. When mutation is very rapid, ancestral lineages repeatedly explore
the entire torus, so identical or nearby phenotypes may arise through
recurrent mixing rather than recent common descent. Phenotypic similarity then
loses much of its genealogical information. The threshold therefore attains
at least one minimum at an intermediate mutation rate \(\nu^*>0\). This
intermediate optimum is a genuinely finite-space effect caused by the
competition between the production of phenotypic variation and the recurrent
mixing of a bounded state space.

Strategy mutation has a different and purely disruptive role. Increasing
\(\mu\) strictly raises the threshold because it weakens the association
between phenotype and behavior. Even individuals with identical or nearby
phenotypes may carry different strategies if strategy mutation has occurred
since their common ancestor. A cooperator that conditions help on phenotypic
similarity is then more likely to direct a benefit toward a defector. This
result gives an analytical form to the concern raised by
\citet{RobertsSherratt2002}: phenotypic similarity becomes ineffective when
the recognition cue does not reliably predict cooperative behavior. It is
also consistent with the reply of \citet{RioloCohenAxelrod2002}, since the
result does not imply that similarity can never support cooperation; rather,
it shows that a larger benefit-to-cost ratio is required as phenotype becomes
less informative about strategy. A related conclusion was obtained by
\citet{Traulsen2007}, although their mutation mechanism couples changes
in strategy and tag, whereas phenotype and strategy mutate independently in
the present model.

The analysis identifies a general principle: finite phenotypic
diversity supports cooperation most effectively when recognition is
sufficiently discriminating, strategy is sufficiently stable, and phenotype
mutation is strong enough to generate variation but not so strong that the
recognition space becomes rapidly mixed.

\appendix

\section*{Data Availability Statement}
The numerical data underlying the figures are generated from the analytical expressions derived in this article. The corresponding numerical data and code will be made publicly available in an appropriate repository.
\section{Derivation of the threshold formula}
\label{app:threshold_derivation}

In this appendix, we provide the technical details leading to the
spectral representation and the identity measures used in
Sec.~\ref{sec:threshold}.

\subsection{Periodic phenotype process}
\subsubsection{\label{app_1:threshold_derivation}Proof of Eq. \eqref{eq:skellam}}
Before coalescence, the two ancestral lineages evolve independently. In coordinate \(r\), each lineage mutates at rate \(\nu/n\). At each such mutation, the coordinate changes by \(+1\) or \(-1\) modulo \(L\), each with probability \(1/2\). Therefore, the difference $Z_r(s)=y_r(k,s)-y_r(j,s)\pmod L$ jumps by \(+1\) either when the first lineage mutates by \(+1\) or when the second lineage mutates by \(-1\). Hence the rate of \(+1\) jumps is $\nu/(2n)+\nu/(2n)=\nu/n$. Similarly, the rate of \(-1\) jumps is also \(\nu/n\). Thus \(Z_r\) is a continuous-time symmetric nearest-neighbor random walk on \(\mathbb T_L=\mathbb Z/L\mathbb Z\) with total jump rate \(2\nu/n\).

Therefore, the generator matrix of \(Z_r\) is 
\[
Q=\frac{2\nu}{n}(A-I),
\]
where the matrix 
\(A=(A(x,y))_{x,y\in\mathbb T_L}\) is given by
\[
A(x,y)=\frac12\mathbf 1_{\{y=x+1\pmod L\}}+\frac12\mathbf 1_{\{y=x-1\pmod L\}}.
\]
Conditional on \(T_{2}=\tau\), the process starts from
\(Z_r(0)=0\), and its transition matrix after time \(\tau\) is $e^{\tau Q}$.
Hence,
\begin{align*}
&\PP_0\bigl(
y_r(k)-y_r(j)=m\pmod L
\mid T_{2}=\tau
\bigr)\\
&\qquad=
\bigl(e^{-\sigma(I-A)}\bigr)_{0m}.
\end{align*}
where  \(\sigma=2\nu\tau/n\). Since \(I\) and \(A\) commute, we obtain
\[
e^{-\sigma(I-A)}=e^{-\sigma}e^{\sigma A}.
\]
It therefore remains to compute \(e^{\sigma A}\). For \(p=0,\ldots,L-1\), set $\zeta_p=e^{-2\pi\mathrm{i}p/L}$ and define $v_p= (1,\zeta_p, \zeta_p^2,\cdots,\zeta_p^{L-1})^{\!T}$.
A direct matrix multiplication gives
\[
Av_p=\frac12\left(\zeta_p+\zeta_p^{-1}\right)v_p=\cos\left(\frac{2\pi p}{L}\right)v_p=c_pv_p.
\]
Thus, \(v_p\) is an eigenvector of \(A\) associated with the eigenvalue $c_p$. Moreover, we have
\[
v_p^*v_q=\sum_{x=0}^{L-1}e^{2\pi\mathrm{i}(p-q)x/L}=
\begin{cases}
L, & p=q,\\
0, & p\ne q,
\end{cases}
\]
where the second equality follows from the finite geometric-series
formula. Hence, \(v_0,\ldots,v_{L-1}\) form an orthogonal basis of
\(\mathbb C^L\). If
\[
V=(v_0\ v_1\ \cdots\ v_{L-1}),
\]
then
\[
V^{-1}=\frac1L V^*
\]
and
\[
A=V\operatorname{diag}(c_0,\ldots,c_{L-1})V^{-1}.
\]
Consequently,
\[e^{\sigma A}=V\operatorname{diag}\left(e^{\sigma c_0},\ldots,e^{\sigma c_{L-1}}\right)V^{-1}=\frac1L\sum_{p=0}^{L-1}e^{\sigma c_p}v_pv_p^*.\]
Therefore,
\[
e^{-\sigma(I-A)}=\frac1L\sum_{p=0}^{L-1}e^{-\sigma(1-c_p)}v_pv_p^*.
\]
Taking the \((0,m)\)-entry and using
\[
v_p(0)=1,\qquad\overline{v_p(m)}=e^{2\pi\mathrm{i}pm/L},
\]
we obtain
\[
\bigl(e^{-\sigma(I-A)}\bigr)_{0m}=\frac1L\sum_{p=0}^{L-1}e^{-\sigma(1-c_p)}e^{2\pi\mathrm{i}pm/L}.
\]
Thus,
\[
\begin{aligned}
&\PP_0\bigl(y_r(k)-y_r(j)=m\pmod L\mid T_{2}=\tau\bigr)\\
&=\frac1L\sum_{p=0}^{L-1}e^{-\sigma(1-c_p)}e^{2\pi\mathrm{i}pm/L},
\end{aligned}
\]
which proves \eqref{eq:skellam}.

\subsubsection{\label{app_2:threshold_derivation}Expression of $\hatphi(p)$ and Properties of $\Phic(\sigma)$}
Let us start with the expression of $\hatphi(p)$.
If $L=2M$ is even, the distances are $0, 1, \dots, M-1, M, M-1, \dots, 1$, where $m=M$ is the unique point at distance $M$ and does not have a symmetric pair. Its contribution to the sum is
\begin{equation*}
e^{-\alpha M} e^{-2\pi \mathrm{i} p M / (2M)}  = (-1)^pe^{-\alpha M} .
\end{equation*}
Summing the contribution of the origin, the pairs from $1$ to $M-1$, and the unique midpoint, we obtain
\begin{equation*}
\hatphi(p) = 1 + 2\sum_{m=1}^{M-1} e^{-\alpha m} \cos\left(\frac{2\pi p m}{L}\right) + (-1)^pe^{-\alpha M}.
\end{equation*}

If $L=2M+1$ is odd, then every non-zero distance $m \in \{1, \dots, M\}$ appears exactly twice in the set $d_1(0,1), \dots, d_1(0, L-1)$, where $d_1(0,m) = d_1(0, L-m) = m$ for $1 \le m \le M$. Now, grouping the terms for $m$ and $L-m$, we obtain
\begin{align}
\hatphi(p) &= 1 + \sum_{m=1}^{M}e^{-\alpha m} \left(  e^{-2\pi \mathrm{i} p m/L} +  e^{-2\pi \mathrm{i} p (L-m)/L} \right)\nonumber \\
&= 1 + 2\sum_{m=1}^{M} e^{-\alpha m} \cos\left(\frac{2\pi p m}{L}\right).
\end{align}

For the properties of $\Phic(\sigma)$, we have the following Lemma.
\begin{lemma}\label{lem:phi_properties}
For every $\alpha>0$ and every $L\ge2$, the following statements hold.
\begin{enumerate}
\item [(i)] $\hatphi(p)>0$ for any $p=0,\ldots,L-1$.
\item [(ii)] $\Phic(0)=1$.
\item [(iii)] $\lim_{\sigma\to\infty}\Phic(\sigma)= \hatphi(0)/L$.
\item [(iv)] $\partial_\sigma \Phic(0)=-\gamma_\alpha<0$.
\item [(v)] $\partial_\sigma \Phic(\sigma)<0$, for any $\sigma>0$.
\end{enumerate}
\end{lemma}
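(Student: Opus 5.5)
The plan is to prove the five items in order. Everything except (i) follows quickly from the spectral formula Eq.~\eqref{eq:Phidef}, so (i) is where most of the work will go.

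For (i) we write $q=e^{-\alpha}\in(0,1)$ and $\theta_p=2\pi p/L$, and compute $\hatphi(p)$ in closed form from the expressions derived just above the lemma. Set $D_p=1-2q\cos\theta_p+q^2=|1-qe^{\mathrm i\theta_p}|^2>0$. Summing the geometric series $\sum_{|m|\le K}q^{|m|}e^{\mathrm i m\theta}$ gives
\[
\frac{1-q^2-2q^{K+1}\cos((K+1)\theta)+2q^{K+2}\cos(K\theta)}{D_p}.
\]

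\emph{Odd case.} For $L=2M+1$ we take $K=M$. Since $L\theta_p=2\pi p$, we have $(M+1)\theta_p=2\pi p-M\theta_p$, so $\cos((M+1)\theta_p)=\cos(M\theta_p)$. The numerator therefore becomes
\[
(1-q)\bigl(1+q-2q^{M+1}\cos(M\theta_p)\bigr)\ge(1-q)\bigl(1+q-2q^{M+1}\bigr)>0 .
\]

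\emph{Even case.} For $L=2M$ we take $K=M-1$ and add the midpoint term $(-1)^pq^M$. Now $\cos(M\theta_p)=(-1)^p$ and $\cos((M-1)\theta_p)=(-1)^p\cos\theta_p$. Collecting terms over $D_p$ should give
\[
\hatphi(p)=\frac{(1-q^2)\bigl(1-(-1)^pq^M\bigr)}{D_p}>0 .
\]
In both cases $\hatphi(p)$ is real, and this also handles $L=2$. I expect this trigonometric bookkeeping, in particular getting the boundary terms right in the even case, to be the main obstacle. It is a finite check, but it is error-prone.

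For (ii), set $\sigma=0$ in Eq.~\eqref{eq:Phidef}. Fourier inversion gives $\frac1L\sum_p\hatphi(p)=e^{-\alpha d_1(0,0)}=1$; equivalently, the walk has not moved at time $0$.

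For (iii), note that $1-c_p>0$ for every $p\ne0$. Hence every term except $p=0$ decays exponentially as $\sigma\to\infty$, and the limit is $\hatphi(0)/L$.

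For (iv), differentiate the finite sum termwise:
\[
\partial_\sigma\Phic(0)=-\frac1L\sum_p\hatphi(p)(1-c_p).
\]
By inversion, $\frac1L\sum_p\hatphi(p)c_p$ equals the average of the kernel at $m=\pm1$, which is $q$; this holds also for $L=2$, where $\pm1$ coincide. The derivative is therefore $-(1-q)=-\gamma_\alpha$.

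For (v), write
\[
\partial_\sigma\Phic(\sigma)=-\frac1L\sum_p\hatphi(p)(1-c_p)e^{-\sigma(1-c_p)} .
\]
By (i) every term is nonnegative, and the $p=1$ term is strictly positive because $c_1<1$ when $L\ge2$. So the derivative is strictly negative for all $\sigma\ge0$.
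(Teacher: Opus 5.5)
Your proposal is correct and follows essentially the same route as the paper. For (i) you use closed-form geometric sums: your odd-case factor $\cos(M\theta_p)$ equals the paper's $(-1)^p\cos(\theta_p/2)$, and your even-case formula matches the paper's exactly. For (iii) and (v) you use the spectral sum, as the paper does. The only difference is in (ii) and (iv). There you use discrete Fourier inversion, $\frac1L\sum_p\hatphi(p)e^{-2\pi\mathrm{i}pj/L}=e^{-\alpha d_1(0,j)}$, whereas the paper uses the random-walk representation $\Phic(\sigma)=\E_0[K(X_\sigma)]$ together with a first-jump expansion. Your version is equally valid and slightly more self-contained.
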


\begin{proof}
Let $r=e^{-\alpha}\in(0,1)$ and $\theta=\frac{2p\pi}{L}$.

We start with part $(i)$. If $L=2M$ is even, we have
\begin{align*}
\hatphi(p) 
=&1 + (-1)^pr^{M}+ 2\sum_{m=1}^{M-1} r^{m} \cos\left(\theta m\right) \\
=& 1 + (-1)^pr^{M}+2\operatorname{Re} \left[ \sum_{m=1}^{M-1} (re^{\mathrm{i}\theta})^m \right] \\
=& 1 + (-1)^pr^{M}+2\operatorname{Re} \left[  \frac{re^{\mathrm{i}\theta} - (re^{\mathrm{i}\theta})^M}{1 - re^{\mathrm{i}\theta}} \right] \\
=&1 + (-1)^pr^{M}\\
&+ 2 \frac{r\cos\theta+ (-1)^pr^{M+1}\cos\theta -r^2  - (-1)^pr^{M} }{1-2r\cos(\theta)+r^2}  \\
=& \frac{(1-r^2)\bigl(1-(-1)^p r^M\bigr)}{1-2r\cos(\theta)+r^2}.
\end{align*}
The denominator is strictly positive because $0<r<1$, and the numerator is also strictly positive. Hence $\hatphi(p)>0$.

If $L=2M+1$ is odd, the same computation gives
\[
\hatphi(p)=\frac{(1-r)\Bigl(1+r-2(-1)^p r^{M+1}\cos(\theta/2)\Bigr)}{1-2r\cos(\theta)+r^2}.
\]
Again the denominator is strictly positive. Also, note that
\[
1+r-2(-1)^p r^{M+1}\cos(\theta/2)\ge 1-r^{M+1}+r(1-r^{M})>0,
\]
so $\hatphi(p)>0$ in the odd case as well.

For part $(ii)$, let
\[
K(m)=e^{-\alpha d_1(0,m)},\qquad m\in\mathbb T_L,
\]
and let \((X_\sigma)_{\sigma\ge0}\) be the rate-one continuous-time symmetric
random walk on \(\mathbb T_L\) starting from \(0\). Note that the function \(\Phic\) can be written as
\begin{equation}\label{eq_main}
\Phic(\sigma)
=
\E_0\left[K(X_\sigma)\right].
\end{equation}
Indeed, \(X_\sigma\) has the same distribution as the one-coordinate
phenotypic difference conditional on \(T_{2}=\tau\), with
\(\sigma=2\nu\tau/n\).
At \(\sigma=0\), we have \(X_0=0\) almost surely. Therefore,
\[
\Phic(0)=K(0)=e^{-\alpha d_1(0,0)}=1.
\]

For part $(iii)$, only the mode $p=0$ survives when $\sigma\to\infty$, because $c_0=1$ and $c_p<1$ for every $p\ne0$. Therefore, we obtain
\[
\lim_{\sigma\to\infty}\Phic(\sigma)= \frac{\hatphi(0)}{L}.
\]

For part $(iv)$, using the representation \eqref{eq_main}, consider a small time interval \(h>0\). Since \(X_\sigma\) is a rate-one continuous-time symmetric random walk starting from \(0\), as \(h\downarrow0\) we have
\[
\PP_0(X_h=0)=1-h+o(h),
\]
while a single jump occurs to either neighboring site with probability \(h/2+o(h)\). The probability of two or more jumps is \(o(h)\). Therefore,
\begin{equation*}
\begin{split}
\Phic(h)&=\E_0[K(X_h)]\\
&=(1-h)K(0)+\frac{h}{2}K(1)+\frac{h}{2}K(-1)+o(h).
\end{split}
\end{equation*}
Since \(K(0)=1\) and \(K(1)=K(-1)=e^{-\alpha}\), we obtain
\[
\Phic(h)=1-h(1-e^{-\alpha})+o(h).
\]
Hence,
\[
\partial_\sigma\Phic(0)=\lim_{h\downarrow0}\frac{\Phic(h)-\Phic(0)}{h}=-(1-e^{-\alpha}).
\]

For part (\textit{v}), differentiating \eqref{eq:Phidef} gives
\[
\partial_\sigma \Phi^{(L)}_\alpha(\sigma)
=
-\frac{1}{L}\sum_{p=0}^{L-1}(1-c_p)\,\hatphi(p)\,e^{-\sigma(1-c_p)}<0,
\]
since all $\hatphi(p)>0$ by part~$(i)$ and $(1-c_p)>0$ for $p\ne0$.
\end{proof}

\subsection{\label{app_identity_measures}Identity measures}
\subsubsection{Two-lineage identity measures}
For  $\ZnL(\alpha)$, by conditioning on $T_2$ the coalescence time of individuals $k$ and $j$ and using \eqref{eqqq}, we obtain
\begin{align*}
\ZnL(\alpha)&=\E_0\left[\varphi(k,j)\right]\\
&=\int_0^\infty\E_0\left[\varphi(k,j)\Big| T_2=\tau\right]e^{-\tau}\,d\tau\\
&=\int_0^\infty\left[\Phic\left(\frac{2\nu\tau}{n}\right)\right]^ne^{-\tau}\,d\tau\\
&=\frac{n}{2\nu}\int_0^\infty[\Phic(\sigma)]^n e^{-n\sigma/(2\nu)}\,d\sigma\\
&=\frac{n}{2\nu}\,\LnL\left(\alpha,\frac{1}{2\nu}\right).
\end{align*}

For $\GnL(\alpha)$, again condition on the coalescence time $T_2$ and use \eqref{eq:strategyprob} to get
\begin{align*}
&\GnL(\alpha)\\
&=\E_0\left[\varphi(k,j)\mathbf 1_{\{S(k)=S(j)\}}\right]\\
&=\int_0^\infty\E_0\bigl[\varphi(k,j)\mathbf 1_{\{S(k)=S(j)\}}|T_2=\tau\bigr]e^{-\tau}\,d\tau\\
&=\int_0^\infty\E_0\bigl[\varphi(k,j)|T_2=\tau\bigr]\PP_0\bigl(S(k)=S(j)\mid T_2=\tau\bigr)e^{-\tau}\,d\tau\\
&=\int_0^\infty\left[\Phic\left(\frac{2\nu\tau}{n}\right)\right]^n\frac{1+e^{-2\mu\tau}}{2}e^{-\tau}\,d\tau\\
&=\frac{n}{4\nu}\int_0^\infty[\Phic(\sigma)]^n\left[e^{-n\sigma/(2\nu)}+e^{-n(1+2\mu)\sigma/(2\nu)}\right]\,d\sigma\\
&=\frac{n}{4\nu}\left[\LnL\left(\alpha,\frac{1}{2\nu}\right)+\LnL\left(\alpha,\frac{1+2\mu}{2\nu}\right)\right].
\end{align*}

\subsubsection{Three-lineage identity measure}

Let $T_{3}$ denote the time until the first coalescence among the three sampled lineages of individuals $k$, $j$ and $\ell$, and let $T_{2}$ denote the additional time until the remaining two lineages coalesce. The joint density of $(T_{2},T_{3})$ is
\[
f(\tau_2,\tau_3)=3e^{-(3\tau_3+\tau_2)},
\]
for $\tau_3,\tau_2>0$.
Each of the three possible first-coalescing pairs among $(k,j)$, $(j,\ell)$ and $(k,\ell)$  occurs with probability $1/3$.
Then, we have
\begin{equation}\label{eq1}
\HnL(\alpha)=\frac13\Bigl(H_n^{(1)}(\alpha)+H_n^{(2)}(\alpha)+H_n^{(3)}(\alpha)\Bigr),
\end{equation}
where the three terms correspond to the three possible first coalescence events $(k,j)$, $(j,\ell)$ and $(k,\ell)$, respectively.

If $k$ and $j$ coalesce first at time $\tau_3$, then the phenotype factor depends on $\tau_3$, while the strategy factor depends on the total time $\tau_3+\tau_2$. Hence, we have
\begin{equation}\label{eq2}
\begin{split}
&H_n^{(1)}(\alpha)\\
=&\int_0^\infty\int_0^\infty3e^{-(3\tau_3+\tau_2)}\left[\Phic\left(\frac{2\nu\tau_3}{n}\right)\right]^n\\
&\qquad\qquad\qquad\qquad\qquad \times\frac{1+e^{-2\mu(\tau_3+\tau_2)}}{2}\,d\tau_2\,d\tau_3\\
=&\frac32\int_0^\infty\left[\Phic\left(\frac{2\nu\tau}{n}\right)\right]^n\left(e^{-3\tau}+\frac{1}{1+2\mu}e^{-(3+2\mu)\tau}\right)d\tau.
\end{split}
\end{equation}

If $j$ and $\ell$ coalesce first at time $\tau_3$, then the strategy factor depends on $\tau_3$, while the phenotype factor depends on the total time $\tau_3+\tau_2$. Therefore, we have
\begin{align}
H_n^{(2)}(\alpha)&=\int_0^\infty\int_0^\infty3e^{-(3\tau_3+\tau_2)}\left[\Phic\left(\frac{2\nu(\tau_3+\tau_2)}{n}\right)\right]^n\nonumber\\
&\qquad\qquad \qquad \qquad \qquad \times\frac{1+e^{-2\mu\tau_3}}{2}\,d\tau_2\,d\tau_3\nonumber\\
&=\frac34\int_0^\infty\left[\Phic\left(\frac{2\nu\tau}{n}\right)\right]^n\Bigg(\frac{2+\mu}{1+\mu}e^{-\tau}-e^{-3\tau}\nonumber\\
&\qquad\qquad\qquad\qquad-\frac{1}{1+\mu}e^{-(3+2\mu)\tau}\Bigg)d\tau.\label{eq3}
\end{align}

If $k$ and $\ell$ coalesce first, then both the phenotype factor and the strategy factor depend on the total time $\tau_3+\tau_2$. Thus
\begin{align}
H_n^{(3)}(\alpha)&=\frac34\int_0^\infty\left[\Phic\left(\frac{2\nu\tau}{n}\right)\right]^n\Big(e^{-\tau}+e^{-(1+2\mu)\tau}\nonumber\\
&\qquad\qquad\qquad-e^{-3\tau}-e^{-(3+2\mu)\tau}\Big)d\tau.\label{eq4}
\end{align}

Substituting Eqs.~\eqref{eq2}--\eqref{eq4} into \eqref{eq1} leads to Eq.~\eqref{eq:Hexplicit}.

\subsection{\label{ln_is_decreasing}Generalized Laplace transform}
Substituting the spectral representation \eqref{eq:Phidef} into \eqref{eq:Ldef} gives
\begin{align*}
\LnL(\alpha,x)
&=
\frac1{L^n}
\sum_{p_1,\ldots,p_n=0}^{L-1}
\left(\prod_{j=1}^n\hatphi(p_j)\right)\\
&\qquad\times
\int_0^\infty
e^{-\sigma\left(
n-\sum_{j=1}^n c_{p_j}+nx
\right)}
\,d\sigma.
\end{align*}
Since $c_{p_j}\le1$, we have
\[
n-\sum_{j=1}^n c_{p_j}+nx\ge nx>0,
\]
so each integral converges and we have 
\[
\int_0^\infty
e^{-\sigma\left(
n-\sum_{j=1}^n c_{p_j}+nx
\right)}
\,d\sigma=\frac1{n-\sum_{j=1}^n c_{p_j}+nx}.
\]
This proves that 
\begin{equation}\label{eq:closed_form}
\LnL(\alpha,x)=\frac{1}{L^n}\sum_{p_1,\ldots,p_n=0}^{L-1}\frac{\prod_{j=1}^n \hatphi(p_j)}{n-\sum_{j=1}^n c_{p_j}+nx}.
\end{equation}
The same formula combined with Lemma \ref{lem:phi_properties} (i) shows that $x\mapsto\LnL(\alpha,x)$ decreases strictly on $(0,\infty)$.

\section{\label{AppendixB}Effect of the Discrimination Parameter}
In this section, we will study the threshold $\beta_n^{(L)}(\alpha)$ with respect to the discrimination parameter $\alpha>0$. The rest of this section is organized as follows. First, we recall the definition of total positivity. Then, Lemma~\ref{lem:time_state_TP2} establishes the required time--state total positivity property for the one-coordinate distance process. Lemma~\ref{lem:mlr_order} shows that this property is preserved under positive tilting and derives the associated stochastic-ordering consequences. Using these two results, Lemma~\ref{lem:cross} proves that the mixed derivative of the logarithm of the one-coordinate transform with respect to time and the discrimination parameter is strictly negative. Lemma~\ref{lem:X_monotone} then transfers this pointwise monotonicity to the generalized Laplace transform appearing in the threshold formula. Theorem~\ref{thm:alpha} combines these monotonicity properties with the explicit representation of the threshold to prove that \(\alpha\mapsto\beta_n^{(L)}(\alpha)\) is strictly decreasing. Finally, Corollary~\ref{cor:alpha} identifies the limiting behavior of the threshold as \(\alpha\to0^+\) and \(\alpha\to\infty\).

\begin{definition}
\begin{enumerate}
\item  [(i)] A matrix \(A=(a_{ij})\) is called \emph{totally positive} if every
minor of \(A\) is nonnegative. Equivalently, for every integer \(r\ge 1\)
and every choice of indices $i_1<\cdots<i_r$ and  $j_1<\cdots<j_r$,
we have
\[
\det\bigl(a_{i_pj_q}\bigr)_{p,q=1}^r \ge 0.
\]
\item  [(ii)] A positive kernel \((x,y)\mapsto K(x,y)\) on two totally ordered sets is called \emph{totally positive of order two} (TP2) if for every \(x_1<x_2\) and \(y_1<y_2\), we have
\[
K(x_1,y_2)K(x_2,y_1)\le K(x_1,y_1)K(x_2,y_2).
\]
Equivalently, for every \(x_1<x_2\), the ratio
\(
y\longmapsto \dfrac{K(x_2,y)}{K(x_1,y)}
\)
is increasing.
\end{enumerate}
\end{definition}

\begin{lemma}\label{lem:time_state_TP2}
Let $(X_\sigma)_{\sigma\ge0}$ be the one-coordinate continuous-time simple random walk on $\mathbb{T}_L$ started from $0$ and 
$\mathcal{D}_\sigma=d_1(0,X_\sigma)\in\{0,1,\ldots,M\}$ be the one-coordinate distance process, with
$M=\lfloor L/2\rfloor$.
Let
\[
\mathcal{P}_\sigma(d)=\PP_0(\mathcal{D}_\sigma=d\mid \mathcal{D}_0=0),
\]
for $d=0,1,\ldots,M$. Then the kernel $(\sigma,d)\longmapsto \mathcal{P}_\sigma(d)$
is totally positive of order two on $(0,\infty)\times\{0,1,\ldots,M\}$. 
\end{lemma}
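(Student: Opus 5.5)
The plan is to treat $\mathcal D_\sigma$ as a birth--death chain on $\{0,1,\ldots,M\}$ and use two facts. First, the transition matrix of such a chain is TP2 in the (initial state, final state) pair. Second, a TP2 kernel preserves the likelihood-ratio order. Combining these with the semigroup property gives TP2 in $(\sigma,d)$.

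\textbf{Step 1 (lumpability).} I will check that $\mathcal D_\sigma=d_1(0,X_\sigma)$ is itself a Markov chain. This holds because the walk is symmetric under the reflection $x\mapsto -x$ and $d_1(0,\cdot)$ is invariant under it. The rates are:
\begin{itemize}
\item from $0$ to $1$ at rate $1$;
\item from $d$ to $d\pm1$ at rate $1/2$ each, for $1\le d\le M-1$;
\item at $d=M$: to $M-1$ at rate $1$ if $L$ is even, and at rate $1/2$ if $L$ is odd (the other half-rate jump stays at distance $M$).
\end{itemize}
If $L=2$ the chain is simply $\{0,1\}$ with both jump rates equal to $1$. So $\mathcal D$ is a birth--death chain with generator $Q$, a tridiagonal matrix. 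Write $P_\sigma(i,d)=(e^{\sigma Q})_{id}$, so that $\mathcal P_\sigma(d)=P_\sigma(0,d)$. Since the chain is irreducible on a finite set, $P_\sigma(i,d)>0$ for every $\sigma>0$, so the kernel is positive as the definition requires.

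\textbf{Step 2 (spatial TP2).} I will show that $(i,d)\mapsto P_\sigma(i,d)$ is TP2 for every $\sigma>0$.
\begin{itemize}
\item Write $P_\sigma=\lim_{k\to\infty}(I+\sigma Q/k)^k$.
\item For large $k$, the matrix $B=I+\sigma Q/k$ is tridiagonal and nonnegative, with diagonal entries close to $1$ and off-diagonal entries of order $1/k$.
\item Its $2\times2$ minors with $i_1<i_2$, $j_1<j_2$ are nonnegative. The only minors that could be negative involve $b_{i,i+1}b_{i+1,i}$, and these are dominated by $b_{ii}b_{i+1,i+1}\approx1$. All other products contain a zero entry because $B$ is tridiagonal.
\item By the Cauchy--Binet formula, products of TP2 matrices are TP2, so every power $B^k$ is TP2.
\item Nonnegativity of the minors passes to the limit, so $P_\sigma$ is TP2. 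Equivalently, I could cite the Karlin--McGregor total positivity theorem for birth--death transition functions.
\end{itemize}

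\textbf{Step 3 (time monotonicity in likelihood ratio).} Fix $0<\sigma_1<\sigma_2$ and set $h=\sigma_2-\sigma_1$. By the semigroup property,
\[
\mathcal P_{\sigma_2}(d)=\sum_{i=0}^M P_h(0,i)\,P_{\sigma_1}(i,d),\qquad \mathcal P_{\sigma_1}(d)=\sum_{i=0}^M \delta_0(i)\,P_{\sigma_1}(i,d).
\]
The point mass $\delta_0$ is smaller than $P_h(0,\cdot)$ in likelihood-ratio order, since $0$ is the minimal state. I will then apply the basic composition formula, which is again Cauchy--Binet. Write $\mathbf{q}=(q_1,q_2)$ with $q_1=\delta_0$ and $q_2=P_h(0,\cdot)$; this two-row matrix is TP2. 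Then
\[
\det\begin{pmatrix}\mathcal P_{\sigma_1}(d_1)&\mathcal P_{\sigma_1}(d_2)\\ \mathcal P_{\sigma_2}(d_1)&\mathcal P_{\sigma_2}(d_2)\end{pmatrix}
=\sum_{i_1<i_2}\det\begin{pmatrix}q_1(i_1)&q_1(i_2)\\ q_2(i_1)&q_2(i_2)\end{pmatrix}\det\begin{pmatrix}P_{\sigma_1}(i_1,d_1)&P_{\sigma_1}(i_1,d_2)\\ P_{\sigma_1}(i_2,d_1)&P_{\sigma_1}(i_2,d_2)\end{pmatrix}.
\]
For the first determinant, $q_1(i_2)=0$ whenever $i_2>0$, so it equals $\delta_0(i_1)\,q_2(i_2)\ge0$. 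The second determinant is nonnegative by Step 2. Hence the left-hand side is nonnegative for all $d_1<d_2$, which is exactly the TP2 inequality in the statement.

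The main obstacle is Step 2. The care needed there is mostly in checking the boundary rows at $d=M$ in both parities of $L$ and the degenerate case $L=2$. It also requires confirming that every $2\times2$ minor of a nonnegative tridiagonal matrix reduces to the adjacent-entry inequality.
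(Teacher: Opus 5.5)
Your proof is correct and follows essentially the same route as the paper: lump $X_\sigma$ to a birth--death chain on $\{0,\ldots,M\}$, use total positivity of $e^{\sigma Q}$ in the (initial state, final state) variables, and exploit that the chain starts at the minimal state $0$ to obtain total positivity in $(\sigma,d)$. The differences are in execution. You compose directly through the semigroup identity $\mathcal P_{\sigma_2}(\cdot)=\sum_i P_h(0,i)P_{\sigma_1}(i,\cdot)$ with $h=\sigma_2-\sigma_1$ and Cauchy--Binet, and you prove spatial TP2 by hand. The paper instead cites Karlin for both facts and reaches continuous time by discretizing with $h_n=1/n$ and passing to the limit. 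Your version is slightly more self-contained.
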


\begin{proof}
Note that the process $(\mathcal{D}_\sigma)_{\sigma\ge0}$ is a finite birth--death process on $\{0,1,\ldots,M\}$.
Let $\mathcal{A}$ be its generator, which is a finite Jacobi matrix
with nonnegative off-diagonal entries. Therefore, by 
\cite[Chapter~3, Theorem~3.4(a), p.~115]{Karlin1968},
the matrix $\mathcal P_h=e^{h\mathcal{A}}$ is totally positive for every $h\ge0$.

Now, fix $h>0$ and consider $\mathcal P_h$ as the one-step transition matrix of a discrete-time Markov chain on the ordered state space $\{0,1,\ldots,M\}$. By \cite[Chapter~1, Theorem~7.2, p.~43]{Karlin1968}, the $m$-step transition function
\[
(m,d)\longmapsto \mathcal P_h^m(0,d)=\mathcal P_{mh}(0,d)=\mathcal P_{mh}(d)
\]
is totally positive, in particular of order two. Here $\mathcal P_h^m(0,d)$ is the entry $(0,d)$ of the $m$-step transition matrix 
\[
\mathcal P^{m}_h=(e^{h\mathcal{A}})^m=e^{mh\mathcal{A}}=\mathcal P_{hm}.
\]
Then, for all integers \(m_1<m_2\) and all \(d_1<d_2\), we have
\begin{equation}\label{eq:tp2_discrete_kernel}
\mathcal P_{m_1h}(d_1)\mathcal P_{m_2h}(d_2)\ge \mathcal P_{m_1h}(d_2)\mathcal P_{m_2h}(d_1).
\end{equation}

Finally, let $0<\sigma_1<\sigma_2$ be two positive times and $d_1<d_2$ be two integers. For each $n\ge1$, set $h_n=\frac{1}{n}$ and $m_r^{(n)}=\lfloor n\sigma_r\rfloor$, for $r=1,2$. For all sufficiently large $n$, we have
\[
m_1^{(n)}<m_2^{(n)}.
\]
Using Eq.~\eqref{eq:tp2_discrete_kernel}, we obtain
\[
\mathcal P_{m_1^{(n)}h_n}(d_1)\,\mathcal P_{m_2^{(n)}h_n}(d_2)
\ge
\mathcal P_{m_1^{(n)}h_n}(d_2)\,\mathcal P_{m_2^{(n)}h_n}(d_1).
\]
Since the matrix exponential \(\mathcal P(t)=e^{t\mathcal{A}}\) depends continuously on \(t\), each entry \(\mathcal P_t(d)\) is continuous in \(t\). Letting \(n\to\infty\) and using
\[
\lim_{n\to\infty}m_r^{(n)}h_n=\sigma_r,
\]
we obtain
\[
\mathcal P_{\sigma_1}(d_1)\mathcal P_{\sigma_2}(d_2)\ge \mathcal P_{\sigma_1}(d_2)\mathcal P_{\sigma_2}(d_1).
\]
This proves that the kernel $(\sigma,d)\mapsto \mathcal P_\sigma(d)$ is totally positive of order two.
\end{proof}

The next lemma is a simple permanence property of monotone likelihood-ratio order under positive tilting.

\begin{lemma}\label{lem:mlr_order}
Let \(\{\kappa_t\}_{t>0}\) be a family of strictly positive probability distributions on the ordered set \(\{0,1,\ldots,M\}\), such that the kernel
$(t,d)\longmapsto \kappa_t(d)$
is totally positive of order two. Let $w(d)>0$ for all $d\in\{0,1,\ldots,M\}$, and define a new family of distributions
\[
\widetilde{\kappa}_t(d)
=
\frac{w(d)\kappa_t(d)}{\sum_{m=0}^M w(m)\kappa_t(m)}.
\]
Then the following statements hold.

\begin{enumerate}
\item The kernel $(t,d)\longmapsto \widetilde{\kappa}_t(d)$ is totally positive of order two.

\item For every $0<t_1<t_2$, the distribution $\widetilde{\kappa}_{t_2}$ stochastically dominates $\widetilde{\kappa}_{t_1}$, that is to say \[
\sum_{d=m}^M\bigl(\widetilde{\kappa}_{t_2}(d)-\widetilde{\kappa}_{t_1}(d)\bigr)\ge0,
\]
for any $m\in\{0,1,\ldots,M\}$.

\item For every increasing function \(\psi\), the map
\[
t\longmapsto \sum_{d=0}^M\psi(d)\widetilde{\kappa}_t(d)
\]
is nondecreasing. Moreover, if \(0<t_1<t_2\),
\(\widetilde{\kappa}_{t_1}\neq\widetilde{\kappa}_{t_2}\), and \(\psi\) is
strictly increasing, then
\[
\sum_{d=0}^M\psi(d)\widetilde{\kappa}_{t_2}(d)
>
\sum_{d=0}^M\psi(d)\widetilde{\kappa}_{t_1}(d).
\]
\end{enumerate}
\end{lemma}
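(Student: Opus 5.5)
Parts 1 to 3 are elementary consequences of total positivity of order two, and I will take them in order.

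For part 1, I fix $0<t_1<t_2$ and $0\le d_1<d_2\le M$. I will write $Z_t=\sum_{m=0}^M w(m)\kappa_t(m)>0$ for the normalizer. Then the TP2 inequality for $\widetilde\kappa$ reads
\[
\widetilde\kappa_{t_1}(d_2)\widetilde\kappa_{t_2}(d_1)\le\widetilde\kappa_{t_1}(d_1)\widetilde\kappa_{t_2}(d_2).
\]
After multiplying out, it becomes
\[
\frac{w(d_1)w(d_2)}{Z_{t_1}Z_{t_2}}\,\kappa_{t_1}(d_2)\kappa_{t_2}(d_1)\le\frac{w(d_1)w(d_2)}{Z_{t_1}Z_{t_2}}\,\kappa_{t_1}(d_1)\kappa_{t_2}(d_2).
\]
The positive factor $w(d_1)w(d_2)/(Z_{t_1}Z_{t_2})$ appears on both sides, so it cancels. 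The inequality then reduces to the TP2 property of $\kappa$. Strict positivity of $\widetilde\kappa_t$ is immediate from $w>0$ and $\kappa_t>0$.

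For part 2, I use the ratio form of TP2. The ratio $\rho(d)=\widetilde\kappa_{t_2}(d)/\widetilde\kappa_{t_1}(d)$ is nondecreasing in $d$. Both distributions sum to one, so $\sum_d(\rho(d)-1)\widetilde\kappa_{t_1}(d)=0$. Consequently $\rho-1$ cannot be of one strict sign, and there is an index $d^*$ with $\rho(d)\le1$ for $d<d^*$ and $\rho(d)\ge1$ for $d\ge d^*$. I then split on the position of $m$.

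If $m\ge d^*$, each term of $\sum_{d\ge m}(\widetilde\kappa_{t_2}-\widetilde\kappa_{t_1})(d)$ is nonnegative, so the sum is nonnegative. If $m<d^*$, I rewrite the tail sum as minus the sum over $d<m$. Each of those terms satisfies $\widetilde\kappa_{t_2}(d)-\widetilde\kappa_{t_1}(d)\le0$, so the tail sum is again nonnegative. This is the standard fact that likelihood-ratio order implies usual stochastic order.

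For part 3, I apply summation by parts. Setting $\psi(-1):=0$ formally, and using that $\widetilde\kappa_{t_2}-\widetilde\kappa_{t_1}$ has total mass zero, I get
\[
\sum_{d=0}^M\psi(d)\bigl(\widetilde\kappa_{t_2}(d)-\widetilde\kappa_{t_1}(d)\bigr)=\sum_{m=1}^M\bigl(\psi(m)-\psi(m-1)\bigr)\,S_m .
\]
Here $S_m=\sum_{d\ge m}(\widetilde\kappa_{t_2}-\widetilde\kappa_{t_1})(d)$. When $\psi$ is increasing, every increment $\psi(m)-\psi(m-1)$ is nonnegative, and every $S_m$ is nonnegative by part 2, so monotonicity follows.

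For the strict statement, assume $\psi$ is strictly increasing, so all increments are positive. It then suffices to find one $m\ge1$ with $S_m>0$. Suppose instead that $S_m=0$ for all $m\ge1$. Since $S_0=0$ as well, taking successive differences gives $\widetilde\kappa_{t_2}(d)-\widetilde\kappa_{t_1}(d)=S_d-S_{d+1}=0$ for every $d$. This contradicts $\widetilde\kappa_{t_1}\neq\widetilde\kappa_{t_2}$.

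The only point needing care is the sign-change argument in part 2, since the ratio $\rho$ may equal one on a block of indices. Monotonicity of $\rho$ together with the mass-zero identity handles this. Everything else is routine algebra.
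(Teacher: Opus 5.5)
Your proposal is correct and follows essentially the same route as the paper: cancel the positive factor $w(d_1)w(d_2)/(Z_{t_1}Z_{t_2})$ to inherit TP2, combine the nondecreasing likelihood ratio with the zero-mass identity to get a single sign change and hence tail dominance, then sum by parts against the tail sums. Your strictness step (if $S_m=0$ for all $m\ge1$ then $\widetilde\kappa_{t_1}=\widetilde\kappa_{t_2}$) makes explicit what the paper only states as a sufficient condition, namely that some $S_m>0$ with $m\ge1$.
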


\begin{proof}
For part (i), fix $0<t_1<t_2$ and $d_1<d_2$. As $(t,d)\mapsto \kappa_t(d)$ is $TP_2$, we have
\[
\begin{aligned}
&\widetilde{\kappa}_{t_1}(d_1)\widetilde{\kappa}_{t_2}(d_2)-\widetilde{\kappa}_{t_1}(d_2)\widetilde{\kappa}_{t_2}(d_1)\\
=&
\frac{w(d_1)w(d_2)}{Z_{t_1}Z_{t_2}}
\Bigl(
\kappa_{t_1}(d_1)\kappa_{t_2}(d_2)
-
\kappa_{t_1}(d_2)\kappa_{t_2}(d_1)
	\Bigr)\ge 0,
\end{aligned}
\]
where $Z_t=\sum_{m=0}^M w(m)\kappa_t(m)$. This shows that \((t,d)\mapsto \widetilde{\kappa}_t(d)\) is also $TP_2$.

For part (ii), fix $0<t_1<t_2$ and define
\[
r(d)=\frac{\widetilde{\kappa}_{t_2}(d)}{\widetilde{\kappa}_{t_1}(d)}.
\]
By (i), the function $d\mapsto r(d)$ is non-decreasing. On the other hand, we have
\[
\sum_{d=0}^M \widetilde{\kappa}_{t_1}(d)\bigl(r(d)-1\bigr) =\sum_{d=0}^M \bigl(\widetilde{\kappa}_{t_2}(d)-\widetilde{\kappa}_{t_1}(d)\bigr)=0.
\]
Then, there exists $d_0\in\{0,\ldots,M\}$ such that
\[
r(d)\le1\quad\text{for }d\le d_0,\qquad r(d)\ge1\quad\text{for }d>d_0.
\]
Set
\[
\Delta(d)=\widetilde{\kappa}_{t_2}(d)-\widetilde{\kappa}_{t_1}(d)=\widetilde{\kappa}_{t_1}(d)\bigl(r(d)-1\bigr).
\]
Since \(\widetilde{\kappa}_{t_1}(d)>0\), we have
\[
\Delta(d)\leq0
\quad\text{for }d\leq d_0,
\qquad
\Delta(d)\geq0
\quad\text{for }d>d_0.
\]
Now fix \(m\in\{0,\ldots,M\}\). If \(m>d_0\), then
\[
\sum_{d=m}^M\Delta(d)\geq0
\]
because every term in the sum is nonnegative. If \(m\leq d_0\), then,
using \(\sum_{d=0}^M\Delta(d)=0\), we obtain
\[
\sum_{d=m}^M\Delta(d)
=
-\sum_{d=0}^{m-1}\Delta(d)\geq0,
\]
because \(\Delta(d)\leq0\) for \(d\leq m-1\). Therefore, we have
\[
\sum_{d=m}^M
\left(
\widetilde{\kappa}_{t_2}(d)
-
\widetilde{\kappa}_{t_1}(d)
\right)
\geq0
\]
for every \(m\in\{0,\ldots,M\}\).

For part (iii), define
\[
K_t(m)=\sum_{d=m}^M\widetilde{\kappa}_t(d),
\qquad m=1,\ldots,M.
\]
By part (ii), we have
\[
K_{t_2}(m)\geq K_{t_1}(m)
\]
for every \(m\). Let \(\psi:\{0,\ldots,M\}\to\mathbb R\) be increasing. Since
\[
\psi(d)
=
\psi(0)+\sum_{m=1}^d\bigl(\psi(m)-\psi(m-1)\bigr),
\]
we obtain
\[
\sum_{d=0}^M\psi(d)\widetilde{\kappa}_t(d)
=
\psi(0)
+
\sum_{m=1}^M
\bigl(\psi(m)-\psi(m-1)\bigr)K_t(m).
\]
Therefore, we deduce 
\begin{align*}
&\sum_{d=0}^M\psi(d)\left(\widetilde{\kappa}_{t_2}(d)-\widetilde{\kappa}_{t_1}(d)\right)\\
=&
\sum_{m=1}^M
\bigl(\psi(m)-\psi(m-1)\bigr)
\bigl(K_{t_2}(m)-K_{t_1}(m)\bigr)
\geq0,
\end{align*}
because both factors in every term are nonnegative. This shows that
\[
\sum_{d=0}^M\psi(d)\widetilde{\kappa}_{t_2}(d)
\geq
\sum_{d=0}^M\psi(d)\widetilde{\kappa}_{t_1}(d).
\]
The inequality is strict whenever, for some \(m\), we have
\[
\psi(m)>\psi(m-1)
\quad\text{and}\quad
K_{t_2}(m)>K_{t_1}(m).
\]

\end{proof}

\begin{lemma}\label{lem:cross}
For every \(\alpha>0\), the map
\[
\sigma\longmapsto
\frac{\partial}{\partial\alpha}
\log\Phic(\sigma)
\]
is strictly decreasing on \((0,\infty)\).
\end{lemma}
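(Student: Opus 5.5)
Write $\Phic(\sigma)=\sum_{d=0}^M e^{-\alpha d}\,\mathcal P_\sigma(d)$, where $\mathcal P_\sigma$ is the law of the one-coordinate distance process $\mathcal D_\sigma$ from Lemma~\ref{lem:time_state_TP2}. This is just Eq.~\eqref{eq_main}, with terms grouped by the value of $d_1(0,X_\sigma)$.

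Differentiating in $\alpha$ gives
\[
\frac{\partial}{\partial\alpha}\log\Phic(\sigma)
=-\frac{\sum_{d=0}^M d\,e^{-\alpha d}\mathcal P_\sigma(d)}{\sum_{d=0}^M e^{-\alpha d}\mathcal P_\sigma(d)}
=-\sum_{d=0}^M d\,\widetilde{\kappa}_\sigma(d).
\]
Here $\widetilde{\kappa}_\sigma$ is the tilt of $\kappa_\sigma=\mathcal P_\sigma$ by the weight $w(d)=e^{-\alpha d}>0$. So the claim reduces to showing that $\sigma\mapsto\sum_d d\,\widetilde{\kappa}_\sigma(d)$ is strictly increasing.

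First I check the hypotheses of Lemma~\ref{lem:mlr_order}. Strict positivity of $\mathcal P_\sigma(d)$ for $\sigma>0$ holds because the walk on the finite cycle is irreducible in continuous time, so every state has positive probability at every positive time. The TP2 property of $(\sigma,d)\mapsto\mathcal P_\sigma(d)$ is Lemma~\ref{lem:time_state_TP2}. Lemma~\ref{lem:mlr_order}(iii), applied with $\psi(d)=d$, then gives that the map is nondecreasing. To get strictness for $0<\sigma_1<\sigma_2$, I need $\widetilde{\kappa}_{\sigma_1}\neq\widetilde{\kappa}_{\sigma_2}$. Since $\psi$ is strictly increasing, the final remark in the proof of (iii) then yields strict inequality: some tail sum $K_{\sigma_2}(m)-K_{\sigma_1}(m)$ must be positive, because these are nonnegative and not all zero.

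The main obstacle is proving $\widetilde{\kappa}_{\sigma_1}\neq\widetilde{\kappa}_{\sigma_2}$. The tilt is a bijection between the families $\kappa$ and $\widetilde{\kappa}$, since one can untilt by $1/w$ and renormalize. It therefore suffices to show $\mathcal P_{\sigma_1}\neq\mathcal P_{\sigma_2}$. I look at the atom at $0$:
\[
\mathcal P_\sigma(0)=\PP(X_\sigma=0)=\frac1L\sum_{p=0}^{L-1}e^{-\sigma(1-c_p)},
\]
which follows from \eqref{eq:skellam} with $m=0$. For $L\ge2$ the term $p=1$ has $1-c_1>0$, so this quantity is strictly decreasing in $\sigma$. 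Hence $\mathcal P_{\sigma_1}(0)>\mathcal P_{\sigma_2}(0)$, the two distributions differ, and so their tilts differ.

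Putting this together, $\sigma\mapsto\sum_d d\,\widetilde{\kappa}_\sigma(d)$ is strictly increasing on $(0,\infty)$. Therefore $\partial_\alpha\log\Phic(\sigma)$ is strictly decreasing in $\sigma$.

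In the case $M=0$ the argument would degenerate, but $L\ge2$ forces $M\ge1$, so this case does not arise. Strictness can also be read off directly: $K_\sigma(1)=1-\widetilde{\kappa}_\sigma(0)$, and this tail differs between $\sigma_1$ and $\sigma_2$.
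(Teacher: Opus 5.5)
Your proof is correct and follows the same route as the paper. You write $\partial_\alpha\log\Phic(\sigma)$ as minus the mean of $\mathcal D_\sigma$ under the $e^{-\alpha d}$-tilt of $\mathcal P_\sigma$, take the TP2 property from Lemma~\ref{lem:time_state_TP2}, and apply Lemma~\ref{lem:mlr_order}(iii) with $\psi(d)=d$. You also check two hypotheses the paper leaves implicit: strict positivity of $\mathcal P_\sigma(d)$ for $\sigma>0$, and $\widetilde\kappa_{\sigma_1}\neq\widetilde\kappa_{\sigma_2}$, which follows from the strictly decreasing atom $\mathcal P_\sigma(0)=\frac1L\sum_p e^{-\sigma(1-c_p)}$ and is exactly what the strict inequality needs.
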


\begin{proof}
By Eq.~\eqref{eq_main}, we have
\[
\Phic(\sigma)=\E_0\left[e^{-\alpha \mathcal{D}_\sigma}\right],
\]
where $\mathcal{D}_\sigma$ is the one-coordinate distance process defined in Lemma \ref{lem:time_state_TP2}. 
Then, we obtain
\begin{equation}\label{eq_200}
\begin{split}
\frac{\partial}{\partial\alpha}\log\Phic(\sigma)=\frac{\frac{\partial}{\partial\alpha}\Phic(\sigma)}{\Phic(\sigma)}&=-\frac{\E_0\left[\mathcal{D}_\sigma e^{-\alpha \mathcal{D}_\sigma} \right]}{\Phic(\sigma)}\\
&=
-\E_{\PP^{(L)}_{\sigma,\alpha}}[\mathcal{D}_\sigma],
\end{split}
\end{equation}
where $\PP^{(L)}_{\sigma,\alpha}$ is the discrete probability measure on $\{0,1,\ldots,M\}$ given by 
{
\[
\PP^{(L)}_{\sigma,\alpha}(d)=\dfrac{e^{-\alpha d}\mathcal P_\sigma(d)}{\Phic(\sigma)}.
\]
}
Using Lemma \ref{lem:time_state_TP2}, the kernel $(\sigma,d)\longmapsto \mathcal P_\sigma(d)$ is totally positive of order two.
Then, Lemma \ref{lem:mlr_order}(i) shows that the tilted kernel
\[
(\sigma,d)\longmapsto \PP^{(L)}_{\sigma,\alpha}(d)
\]
is also totally positive of order two. Hence, by Lemma \ref{lem:mlr_order}(iii) with $\psi(d)=d$, the map
\[
\sigma\longmapsto \E_{\PP^{(L)}_{\sigma,\alpha}}[\mathcal{D}_\sigma]
\]
is strictly increasing on $(0,\infty)$, which completes the proof.
\end{proof}

The next lemma turns this monotonicity into a monotonicity result for the generalized Laplace transform.

\begin{lemma}\label{lem:X_monotone}
For $a>0$, define
\[
\Theta(a)=\frac{\partial}{\partial\alpha}\log\LnL(\alpha,a).
\]
Then $\Theta(a)<0$ for every $a>0$, and the map $a\mapsto \Theta(a)$ is strictly increasing on $(0,\infty)$.
\end{lemma}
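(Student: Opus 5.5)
Write $\mathcal L_n^{(L)}(\alpha,a)=\int_0^\infty F_\alpha(\sigma)e^{-na\sigma}\,d\sigma$ with $F_\alpha(\sigma)=[\Phic(\sigma)]^n>0$. Since $\Phic(\sigma)$ is a finite sum of terms $e^{-\alpha d}\mathcal P_\sigma(d)$ with bounded $d$, we may differentiate under the integral sign, and $\partial_\alpha F_\alpha(\sigma)=F_\alpha(\sigma)\,g(\sigma)$ with $g(\sigma):=n\,\partial_\alpha\log\Phic(\sigma)$. Hence
\[
\Theta(a)=\frac{\int_0^\infty g(\sigma)F_\alpha(\sigma)e^{-na\sigma}\,d\sigma}{\int_0^\infty F_\alpha(\sigma)e^{-na\sigma}\,d\sigma}=\E_{\rho_a}[g],
\]
where $\rho_a(d\sigma)\propto F_\alpha(\sigma)e^{-na\sigma}\,d\sigma$ is a probability density on $(0,\infty)$. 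By Eq.~\eqref{eq_200}, $g(\sigma)=-n\,\E_{\PP^{(L)}_{\sigma,\alpha}}[\mathcal D_\sigma]$. For $\sigma>0$ every state of the distance process has positive probability, so $\mathcal D_\sigma\ge0$ is not almost surely zero under the tilted law. Thus $g(\sigma)<0$ for all $\sigma>0$, and therefore $\Theta(a)<0$.

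For monotonicity in $a$, I will use Lemma~\ref{lem:cross}: $g$ is strictly decreasing in $\sigma$. The family $\rho_a$ is an exponential tilt. For $a_1<a_2$, the ratio $\rho_{a_2}(\sigma)/\rho_{a_1}(\sigma)\propto e^{-n(a_2-a_1)\sigma}$ is strictly decreasing in $\sigma$. So $\rho_{a_2}$ is smaller than $\rho_{a_1}$ in likelihood-ratio order, hence also stochastically smaller. The expectation of a strictly decreasing function is therefore larger under $\rho_{a_2}$.

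To make this rigorous directly, I will use the covariance identity
\[
\frac{d}{da}\Theta(a)=-n\,\Cov_{\rho_a}\bigl(g(\sigma),\sigma\bigr),
\]
which follows by differentiating the ratio under the integral sign. This is justified because $g$ is bounded ($|g|\le nM$) and all moments of $\rho_a$ are finite due to the factor $e^{-na\sigma}$. Because $g$ is strictly decreasing and $\sigma$ is strictly increasing, Chebyshev's integral inequality gives
\[
\Cov_{\rho_a}(g,\sigma)=\tfrac12\iint\bigl(g(s)-g(t)\bigr)(s-t)\,\rho_a(ds)\,\rho_a(dt)<0.
\]
The inequality is strict because $\rho_a$ has a positive density, so it is not a point mass and the integrand is strictly negative off the diagonal. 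Hence $\Theta'(a)>0$, and $\Theta$ is strictly increasing.

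The main obstacle I expect is the regularity bookkeeping: differentiating in both $\alpha$ and $a$ under the integral sign, and confirming that $g$ is finite and continuous on $(0,\infty)$ with a limit as $\sigma\to0^+$. The spectral formula \eqref{eq:Phidef} handles this, since $\Phic$ and $\partial_\alpha\Phic$ are finite exponential sums and $\Phic\ge \hatphi(0)/L>0$, which follows from Lemma~\ref{lem:phi_properties}(iii) and (v). Alternatively, strict monotonicity in $a$ can be obtained from the likelihood-ratio argument, as in Lemma~\ref{lem:mlr_order}, transposed to a continuous variable.
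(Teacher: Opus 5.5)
Your proposal is correct and follows essentially the same route as the paper. Both write $\Theta(a)$ as the expectation of $n\,\partial_\alpha\log\Phic(\sigma)$ under the exponentially tilted measure, use Lemma~\ref{lem:cross} and Chebyshev's covariance inequality for monotonicity, and use Eq.~\eqref{eq_200} for negativity. The only differences are minor: you differentiate in $a$ and take the covariance with $\sigma$, where the paper compares $a'>a$ directly through the covariance with the decreasing likelihood ratio $\propto e^{-n(a'-a)\sigma}$; and your observation that $\mathcal D_\sigma$ charges every state for $\sigma>0$ justifies the strict negativity, which the paper only asserts.
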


\begin{proof}
 Differentiating \eqref{eq:Ldef} with respect to \(\alpha\), we obtain
\begin{align*}
\Theta(a)
&=
\frac{1}{\LnL(\alpha,a)}
\frac{\partial}{\partial\alpha}\LnL(\alpha,a)
\nonumber\\
&=
\frac{1}{\LnL(\alpha,a)}
\int_0^\infty
n[\Phic(\sigma)]^{n-1}
\frac{\partial}{\partial\alpha}\Phic(\sigma)
e^{-na\sigma}\,d\sigma
\nonumber\\
&=
\int_0^\infty
n\,
\frac{\frac{\partial}{\partial\alpha}\Phic(\sigma)}
{\Phic(\sigma)}
\frac{[\Phic(\sigma)]^n e^{-na\sigma}}
{\LnL(\alpha,a)}
\,d\sigma
\nonumber\\
&=
\mathbb E_{\mathcal Q_{\alpha,a}^{(L)}}
\left[
n\,\frac{\partial}{\partial\alpha}\log\Phic(\sigma)
\right],
\end{align*}
where \(\mathcal{Q}^{(L)}_{\alpha,a}\) is the probability measure on
\((0,\infty)\) defined by
\[
d\mathcal{Q}^{(L)}_{\alpha,a}(\sigma)
=
\frac{[\Phic(\sigma)]^n e^{-na\sigma}}
{\LnL(\alpha,a)}\,d\sigma.
\]

By Lemma~\ref{lem:cross}, the map
\[
\sigma\longmapsto
n\,\frac{\partial}{\partial\alpha}\log\Phic(\sigma)
\]
is strictly decreasing.

Now fix \(a'>a\). We have
\[
\frac{d\mathcal Q^{(L)}_{\alpha,a'}}
{d\mathcal Q^{(L)}_{\alpha,a}}(\sigma)
=
\frac{\LnL(\alpha,a)}{\LnL(\alpha,a')}
e^{-n(a'-a)\sigma},
\]
which is also strictly decreasing in \(\sigma\) and has expectation \(1\)
under \(\mathcal Q^{(L)}_{\alpha,a}\). Using Chebyshev's covariance inequality, we obtain
\begin{align*}
&\quad\Theta(a')-\Theta(a)\\
&=
\mathbb E_{\mathcal Q^{(L)}_{\alpha,a}}
\left[
n\,\frac{\partial}{\partial\alpha}\log\Phic(\sigma)
\frac{d\mathcal Q^{(L)}_{\alpha,a'}}
{d\mathcal Q^{(L)}_{\alpha,a}}(\sigma)
\right]\\
&\qquad\qquad\qquad-
\mathbb E_{\mathcal Q^{(L)}_{\alpha,a}}
\left[
n\,\frac{\partial}{\partial\alpha}\log\Phic(\sigma)
\right] \\
&=
\mathbb E_{\mathcal Q^{(L)}_{\alpha,a}}
\left[
n\,\frac{\partial}{\partial\alpha}\log\Phic(\sigma)
\frac{d\mathcal Q^{(L)}_{\alpha,a'}}
{d\mathcal Q^{(L)}_{\alpha,a}}(\sigma)
\right] \\
&\quad-
\mathbb E_{\mathcal Q^{(L)}_{\alpha,a}}
\left[
n\,\frac{\partial}{\partial\alpha}\log\Phic(\sigma)
\right]
\mathbb E_{\mathcal Q^{(L)}_{\alpha,a}}
\left[
\frac{d\mathcal Q^{(L)}_{\alpha,a'}}
{d\mathcal Q^{(L)}_{\alpha,a}}(\sigma)
\right] \\
&=
\operatorname{Cov}_{\mathcal Q^{(L)}_{\alpha,a}}
\left(
n\,\frac{\partial}{\partial\alpha}\log\Phic(\sigma),
\frac{d\mathcal Q^{(L)}_{\alpha,a'}}
{d\mathcal Q^{(L)}_{\alpha,a}}(\sigma)
\right)>0.
\end{align*}
The inequality is strict because both functions are strictly decreasing and
\(\mathcal Q^{(L)}_{\alpha,a}\) has a positive density on
\((0,\infty)\). Hence, \(a\mapsto\Theta(a)\) is strictly increasing.

Finally, Eq.~\eqref{eq_200} gives
\[
\frac{\partial}{\partial\alpha}\log\Phic(\sigma)
=
-\E_{\PP^{(L)}_{\sigma,\alpha}}[\mathcal{D}_\sigma]
<0.
\]
This shows that
\[
\Theta(a)
=
\mathbb E_{\mathcal Q_{\alpha,a}^{(L)}}
\left[
n\,\frac{\partial}{\partial\alpha}\log\Phic(\sigma)
\right]
<0,
\]
for every \(a>0\).
\end{proof}

Now, we are ready for our first main result.
\begin{theorem}\label{thm:alpha}
The map $\alpha\mapsto\bc(\alpha)$ is strictly decreasing on $(0,\infty)$. 
\end{theorem}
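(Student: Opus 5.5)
The plan is to reduce everything to ratios of generalized Laplace transforms and use Lemma~\ref{lem:X_monotone}. Fix $\mu,\nu,n,L$ and set
\[
a_1=\frac{1}{2\nu}<a_2=\frac{1+2\mu}{2\nu}<a_3=\frac{3+2\mu}{2\nu},\qquad \mathcal L_i=\LnL(\alpha,a_i).
\]
Define $\rho_2=\mathcal L_2/\mathcal L_1$ and $\rho_3=\mathcal L_3/\mathcal L_1$. Dividing numerator and denominator of Eq.~\eqref{eq:betaexplicit} by $\mathcal L_1>0$, and using the identity for the numerator minus the denominator noted after Eq.~\eqref{eq:betaexplicit}, gives
\[
\bc(\alpha)=1+\frac{2(1+\mu)(1+2\mu)\,(1-\rho_2)}{\widetilde D},
\]
where
\[
\widetilde D=(1+\mu)(1+2\mu)\rho_2+\mu(3+2\mu)\rho_3-(1+2\mu).
\]

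\textbf{Step 1: the ratios increase in $\alpha$.} For $i=2,3$,
\[
\frac{\partial}{\partial\alpha}\log\rho_i=\Theta(a_i)-\Theta(a_1)>0,
\]
because $a_i>a_1$ and $\Theta$ is strictly increasing by Lemma~\ref{lem:X_monotone}. Hence $\rho_2$ and $\rho_3$ are strictly increasing in $\alpha$. Moreover $\rho_2<1$, since $x\mapsto\LnL(\alpha,x)$ is strictly decreasing (Appendix~\ref{ln_is_decreasing}). Consequently:
\begin{itemize}
\item the factor $1-\rho_2$ is positive and strictly decreasing in $\alpha$;
\item $\widetilde D$ is strictly increasing in $\alpha$, because its coefficients of $\rho_2$ and $\rho_3$ are positive.
\end{itemize}
If $\widetilde D>0$, the fraction is therefore a positive, strictly decreasing numerator over a positive, strictly increasing denominator. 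So $\bc(\alpha)$ is strictly decreasing, which is the claim.

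\textbf{Step 2: positivity of $\widetilde D$ (the main obstacle).} This is the step I expect to need care, since the text before Result~\ref{thm:torus-alpha-monotonicity} implicitly assumes the denominator is positive. I would write $\mathcal L_1\widetilde D$ as a single integral using Eq.~\eqref{eq:Ldef}:
\[
\mathcal L_1\widetilde D=\int_0^\infty[\Phic(\sigma)]^n\,w(n\sigma)\,d\sigma,
\]
with
\[
w(s)=(1+\mu)(1+2\mu)e^{-2\mu s}\,e^{-s/(2\nu)\cdot 0}\cdots
\]
More cleanly, substituting the exponents $e^{-n a_i\sigma}$, the weight is
\[
w(\sigma)=(1+\mu)(1+2\mu)e^{-na_2\sigma}+\mu(3+2\mu)e^{-na_3\sigma}-(1+2\mu)e^{-na_1\sigma}.
\]
This weight has the following properties:
\begin{itemize}
\item It has total integral zero, since
\[
\int_0^\infty w\,d\sigma\propto (1+\mu)+\mu-(1+2\mu)=0.
\]
\item At $\sigma=0$ it equals $4\mu(1+\mu)>0$.
\item It changes sign exactly once, from positive to negative. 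Indeed, $w(\sigma)e^{na_1\sigma}$ is a sum of positive decreasing exponentials minus a constant, so it is strictly decreasing and crosses zero once.
\end{itemize}
Since $\sigma\mapsto[\Phic(\sigma)]^n$ is strictly decreasing (Lemma~\ref{lem:phi_properties}(v)), a standard single-crossing argument applies: subtract $[\Phic(\sigma_0)]^n\int_0^\infty w=0$ at the crossing point $\sigma_0$. The integrand
\[
\bigl([\Phic(\sigma)]^n-[\Phic(\sigma_0)]^n\bigr)w(\sigma)
\]
is then nonnegative everywhere and positive on a set of positive measure, so $\widetilde D>0$.

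This also justifies the earlier claim $\bc>1$ rigorously. Combining Steps 1 and 2 gives strict monotonicity on $(0,\infty)$.
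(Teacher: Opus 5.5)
Your proof is correct and follows the paper's route: both use Lemma~\ref{lem:X_monotone} to show that $\mathcal R_2=\rho_2$ and $\mathcal R_3=\rho_3$ increase in $\alpha$ with $\rho_2<1$, except that the paper checks $\partial\bc/\partial\mathcal R_2<0$ and $\partial\bc/\partial\mathcal R_3<0$ by direct differentiation, whereas you read off monotonicity from the form $1+2(1+\mu)(1+2\mu)(1-\rho_2)/\widetilde D$. Your single-crossing proof of $\widetilde D>0$ is a valid addition that the paper's proof leaves implicit (its derivative signs only need $\widetilde D\neq0$, and positivity of the denominator is proved only later, via $\Xi_2>0$ in the proof of Theorem~\ref{thm:mu}); just delete the garbled first attempt at defining $w$.
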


\begin{proof}
Rewrite \eqref{eq:betaexplicit} as
\begin{equation}\label{main:form}
\begin{split}
&\bc(\alpha)=\\
&\frac{(1+2\mu)^2+\mu(3+2\mu)\mathcal R_3(n,\alpha)-(1+\mu)(1+2\mu)\mathcal R_2(n,\alpha)}{(1+\mu)(1+2\mu)\mathcal R_2(n,\alpha)+\mu(3+2\mu)\mathcal R_3(n,\alpha)-(1+2\mu)},
\end{split}
\end{equation}
where
\[
\mathcal{R}_i(n,\alpha)=\frac{\LnL(\alpha,a_i)}{\LnL(\alpha,a_1)},
\]
for $i=2,3$. Here, $a_1=1/(2\nu)$, $a_2=(1+2\mu)/(2\nu)$ and $a_3=(3+2\mu)/(2\nu)$. By Lemma \ref{lem:X_monotone}, note that
\[
\frac{\partial}{\partial\alpha}\log \mathcal{R}_i(n,\alpha)=\Theta(a_i)-\Theta(a_1)>0,
\]
as $a_i>a_1$, for $i=2,3$. This shows that $\alpha\to \mathcal{R}_i(n,\alpha)$ is increasing.

Now, differentiating $\bc(\alpha)$ with respect to $\mathcal{R}_2$ leads
\[
\begin{aligned}
&\frac{\partial \bc(\alpha)}{\partial \mathcal{R}_2}=\\
&-\frac{(1+\mu)(1+2\mu)\bigl(2\mu(1+2\mu)+2\mu(3+2\mu)\mathcal{R}_3\bigr)}
{\bigl((1+\mu)(1+2\mu)\mathcal{R}_2+\mu(3+2\mu)\mathcal{R}_3-(1+2\mu)\bigr)^2}<0.
\end{aligned}
\]
In addition, we have
\[
\begin{aligned}
&\frac{\partial \bc(\alpha)}{\partial \mathcal{R}_3}=\\
&\frac{2\mu(3+2\mu)(1+\mu)(1+2\mu)(\mathcal{R}_2-1)}{\bigl((1+\mu)(1+2\mu)\mathcal{R}_2+\mu(3+2\mu)\mathcal{R}_3-(1+2\mu)\bigr)^2}.
\end{aligned}
\]
On the other hand, since  $x\mapsto\LnL(\alpha,x)$ is strictly decreasing and $a_2>a_1$, we deduce that 
\[
\mathcal{R}_2(n,\alpha)=\frac{\LnL(\alpha,a_2)}{\LnL(\alpha,a_1)}<1,
\]
and therefore
\[
\frac{\partial \bc}{\partial \mathcal{R}_3}<0.
\]
So $\bc(\alpha)$ decreases in each of the two variables $\mathcal{R}_2$ and $\mathcal{R}_3$, while both variables increase with respect to $\alpha$. It follows that
\[
\frac{\partial}{\partial\alpha}\bc(\alpha)<0.
\]
\end{proof}

\begin{corollary}\label{cor:alpha}
The threshold reaches its minimum in the limit $\alpha\to\infty$. In addition, we have
\[
\lim_{\alpha\to0^+}\bc(\alpha)=\infty.
\]
\end{corollary}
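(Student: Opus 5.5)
The plan has two parts: the minimum at $\alpha\to\infty$ follows from Theorem~\ref{thm:alpha}, and the divergence as $\alpha\to0^+$ comes from showing the denominator of \eqref{main:form} vanishes while the numerator stays positive.

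\textbf{Minimum.} Theorem~\ref{thm:alpha} shows that $\alpha\mapsto\bc(\alpha)$ is strictly decreasing, so the infimum over $(0,\infty)$ is $\lim_{\alpha\to\infty}\bc(\alpha)$. To identify this limit, note that $e^{-\alpha d_1(0,m)}\to\mathbf 1_{\{m=0\}}$. Hence $\Phic(\sigma)\to\PP_0(X_\sigma=0)$ uniformly in $\sigma$, since the kernel is bounded by $1$. Dominated convergence in \eqref{eq:Ldef}, or directly in the closed form \eqref{eq:closed_form} with $\hatphi(p)\to1$, gives convergence of each $\LnL(\alpha,a_i)$ to its exact-matching value. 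The limit of $\bc$ is then the exact-matching threshold, which is finite and not attained at any finite $\alpha$.

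\textbf{Limit $\alpha\to0^+$.} In \eqref{main:form}, let $\alpha\to0$, so that $\Phic\to1$. Then $\LnL(\alpha,x)\to\int_0^\infty e^{-nx\sigma}d\sigma=1/(nx)$, and therefore
\[
\mathcal R_2\to\frac{a_1}{a_2}=\frac{1}{1+2\mu},\qquad \mathcal R_3\to\frac{a_1}{a_3}=\frac{1}{3+2\mu}.
\]
This convergence can be justified either by dominated convergence ($0<\Phic\le1$) or from \eqref{eq:closed_form}: as $\alpha\to0$ we have $\hatphi(p)\to L\,\mathbf 1_{\{p=0\}}$, so only the all-zero mode survives, giving $1/(nx)$.

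\textbf{Evaluating the limits.} Substituting these limits, the denominator of \eqref{main:form} tends to
\[
(1+\mu)+\mu-(1+2\mu)=0 .
\]
The numerator tends to
\[
(1+2\mu)^2+\mu-(1+\mu)=(1+2\mu)^2>0 .
\]
The denominator is positive for every $\alpha>0$, because $\bc>1$ and numerator minus denominator is positive. Indeed, the denominator equals $\frac{4\nu}{n\,\LnL(\alpha,a_1)}(1+\mu)(1+2\mu)(G-H)$ up to a positive factor, and a positive numerator together with $\bc>1$ forces a positive denominator. So the ratio tends to $+\infty$.

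The main obstacle is confirming that the denominator approaches $0$ from above rather than merely tending to $0$. This follows from positivity at each fixed $\alpha$, established in Sec.~\ref{sec:threshold}, combined with the limit computation. The only other care needed is ensuring uniform domination of $[\Phic]^n e^{-na\sigma}$, which holds trivially by $e^{-na\sigma}$.
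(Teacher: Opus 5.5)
Your argument follows the paper's own proof. The minimum statement comes from the strict monotonicity in Theorem~\ref{thm:alpha}. For $\alpha\to0^+$, dominated convergence gives $\LnL(\alpha,x)\to1/(nx)$, so the denominator of the threshold tends to $0$ while the numerator has a positive limit. Working with \eqref{main:form} instead of \eqref{eq:betaexplicit} only divides both by $\LnL(\alpha,a_1)$.

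There are two small corrections.

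First, the numerator limit is $(1+2\mu)^2+\mu-(1+\mu)=(1+2\mu)^2-1=4\mu(1+\mu)$, not $(1+2\mu)^2$. It is still positive because $\mu>0$. It also agrees with the paper's limit $8\mu(1+\mu)\nu/n$ once multiplied by $\lim\LnL(\alpha,a_1)=2\nu/n$.

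Second, your justification that the denominator is positive is circular. The claim $\bc>1$ in Sec.~\ref{sec:threshold} is deduced from ``numerator minus denominator $>0$''. That yields $\bc>1$ only if the denominator is already known to be positive, so citing $\bc>1$ to get a positive denominator assumes what you want. A non-circular argument is available from the proof of Theorem~\ref{thm:mu}. By Lemma~\ref{lem:stieltjes_measure}, the denominator of \eqref{eq:betaexplicit} equals $\frac{2\nu}{n}\int\Xi_2(\lambda,\mu)\,d\rho(\lambda)$. Here $\Xi_2(\lambda,\mu)>0$ for $\lambda>0$, and $\rho((0,\infty))>0$ for every $\alpha>0$, because all $\hatphi(p)>0$ and $L\ge2$. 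Hence the denominator tends to $0$ through strictly positive values, and $\bc\to+\infty$ rather than merely $|\bc|\to\infty$.

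The paper's proof leaves this sign point implicit. Making it explicit, with the corrected justification, is a genuine improvement on the written argument.
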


\begin{proof}
By the previous theorem, it is clear that the threshold reaches its minimum in the limit $\alpha\to\infty$. Now, consider the case $\alpha\to0^{+}$. By the bounded convergence theorem, we have
\[
\lim_{\alpha\to0^+}\Phic(\sigma)=\lim_{\alpha\to0^+}\E_0\left[e^{-\alpha \mathcal{D}_\sigma}\right]= 1,
\]
for every $\sigma\ge0$. This shows that 
\[
\lim_{\alpha\to0^+}\left(\Phic(\sigma)\right)^{n}e^{-nx\sigma}=e^{-nx\sigma}.
\]
In addition, we have
\[
0\leq [\Phic(\sigma)]^n e^{-nx\sigma}\leq e^{-nx\sigma},
\]
where $x\mapsto e^{-nx\sigma}$ is integrable on $(0,\infty)$.
Then the Lebesgue dominated convergence theorem yields
\[
\lim_{\alpha\to0^+}\LnL(\alpha,x)=\int_0^\infty e^{-nx\sigma}\,d\sigma=\frac1{nx}.
\]
Evaluating this limit for $x=1/(2\nu),(1+2\mu)/(2\nu),(3+2\mu)/(2\nu)$ and substituting the results into Eq.~\eqref{eq:betaexplicit}, the denominator of $\bc(\alpha)$ converges to
\[
\begin{aligned}
&-(1+2\mu)\frac{2\nu}{n}
+(1+\mu)(1+2\mu)\frac{2\nu}{n(1+2\mu)}\\
&\quad+\mu(3+2\mu)\frac{2\nu}{n(3+2\mu)}
=0
\end{aligned}
\]
and the numerator converges to
\[
\begin{aligned}
&(1+2\mu)^2\frac{2\nu}{n}
+\mu(3+2\mu)\frac{2\nu}{n(3+2\mu)}\\
&
\quad-(1+\mu)(1+2\mu)\frac{2\nu}{n(1+2\mu)}
=
\frac{8\mu(1+\mu)\nu}{n}>0.
\end{aligned}
\]
Therefore, we deduce that
\begin{equation}\label{eq:alpha_zero_threshold_limit}
\lim_{\alpha\to0^+}\bc(\alpha)=\infty.
\end{equation}

\end{proof}


\section{\label{AppendixC}Effect of Phenotype-Space Dimension}
In this section, we will study how the dimension of the phenotype space affects the threshold $\beta_n^{(L)}(\alpha)$.
The remainder is organized as follows. Lemma~\ref{lem:variance} first establishes a strict convexity property for the logarithm of the function appearing in the integral representation of the threshold. Lemma~\ref{lem:g_negative} then uses this convexity to determine the sign and monotonicity of an auxiliary function required in the comparison argument. Proposition~\ref{prop:dimension} applies a covariance inequality to show that the relevant ratios of integrals are strictly increasing with the phenotype-space dimension. Theorem~\ref{thm:dimension} combines this monotonicity with the explicit threshold formula to prove that the threshold is strictly decreasing with the dimension. Finally, Corollary~\ref{cor:n_infty} derives the large-dimension asymptotic behavior of the underlying integrals and determines the limiting value of the threshold as the dimension tends to infinity.

\begin{lemma}\label{lem:variance}
For every $\alpha>0$ and every $\sigma>0$, we have
\[
\frac{\partial^2}{\partial \sigma^2}\log \Phic(\sigma)>0.
\]
\end{lemma}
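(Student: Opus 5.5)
The plan is to use the spectral representation \eqref{eq:Phidef}:
\[
\Phic(\sigma)=\frac1L\sum_{p=0}^{L-1}\hatphi(p)\,e^{-\sigma\lambda_p},\qquad \lambda_p:=1-c_p\ge0 .
\]
By Lemma~\ref{lem:phi_properties}(i), every weight $w_p:=\hatphi(p)/L$ is strictly positive. Hence $\Phic$ is a finite mixture of exponentials with positive weights. Its logarithm is then a cumulant-generating-type function, and log-convexity follows from a variance identity, as the lemma's label suggests.

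First I define, for each fixed $\sigma>0$, a probability measure $\mathbb Q_\sigma$ on the finite set of exponents $\{\lambda_p\}$:
\[
\mathbb Q_\sigma(p)=\frac{w_p e^{-\sigma\lambda_p}}{\Phic(\sigma)} .
\]
This measure is strictly positive on every index $p$. Differentiating term by term is legitimate because the sum is finite, and it gives
\[
\partial_\sigma\log\Phic(\sigma)=-\E_{\mathbb Q_\sigma}[\lambda],\qquad
\partial_\sigma^2\log\Phic(\sigma)=\frac{\Phic''}{\Phic}-\Bigl(\frac{\Phic'}{\Phic}\Bigr)^2=\E_{\mathbb Q_\sigma}[\lambda^2]-\bigl(\E_{\mathbb Q_\sigma}[\lambda]\bigr)^2 .
\]
So the second derivative is the variance of $\lambda$ under $\mathbb Q_\sigma$. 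Equivalently, Cauchy--Schwarz gives $(\sum w_p\lambda_p e^{-\sigma\lambda_p})^2\le(\sum w_pe^{-\sigma\lambda_p})(\sum w_p\lambda_p^2e^{-\sigma\lambda_p})$.

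It remains to prove that this variance is strictly positive. Equivalently, $\lambda$ must not be $\mathbb Q_\sigma$-almost surely constant. Since $\mathbb Q_\sigma$ charges every $p$, it suffices that the $\lambda_p$ are not all equal. This holds for every $L\ge2$: we have $\lambda_0=0$, while $\lambda_1=1-\cos(2\pi/L)>0$. The case $L=2$ is the extreme one, with $\lambda_1=2$, and it still gives two distinct values. Strict positivity of all weights is exactly where Lemma~\ref{lem:phi_properties}(i) enters. Without it, some mode could carry negative weight and the mixture argument would fail. This is the only delicate point.

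I expect no real obstacle beyond confirming that $\hatphi(p)>0$ for all $p$, which was already established. A minor remark: the inequality also holds at $\sigma=0$. The statement is restricted to $\sigma>0$, so no boundary issue arises.
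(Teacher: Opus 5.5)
Your proposal is correct and follows essentially the same route as the paper. The paper also normalizes the spectral weights $\hatphi(p)e^{-\sigma(1-c_p)}/(L\Phic(\sigma))$ into a probability measure, identifies $\partial_\sigma^2\log\Phic(\sigma)$ with the variance of $1-c_p$, and obtains strictness from the positivity of every $\hatphi(p)$ in Lemma~\ref{lem:phi_properties}(i); your explicit observation that $\lambda_0=0<\lambda_1=1-\cos(2\pi/L)$ for every $L\ge2$ simply spells out the paper's remark that this measure charges at least two distinct values of $c_p$.
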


\begin{proof}
Differentiating Eq.~\eqref{eq:Phidef} yields
\begin{equation}\label{eq27}
\begin{split}
\frac{\partial}{\partial \sigma}\log \Phic(\sigma)&=\frac{\frac{\partial}{\partial \sigma} \Phic(\sigma)}{\Phic(\sigma)}\\
&=-\sum_{p=0}^{L-1}(1-c_p)\frac{\hatphi(p)e^{-\sigma(1-c_p)}}{L\,\Phic(\sigma)}\\
&=-\E_{\tilde{\PP}^{(L)}_{\alpha,\sigma}}[1-c_p],
\end{split}
\end{equation}
where $\tilde{\PP}^{(L)}_{\alpha,\sigma}$ is the probability measure on $\{0,1,\ldots,L-1\}$ given by 
\[
\tilde{\PP}^{(L)}_{\alpha,\sigma}(p)
=
\frac{\hatphi(p)e^{-\sigma(1-c_p)}}{L\,\Phic(\sigma)}
\]
and $c_p=\cos\left(2\pi p/L\right)$.
Differentiating once more with respect to $\sigma$, we obtain
\begin{equation}\label{eq20}
\begin{split}
&\frac{\partial^2}{\partial \sigma^2}\log \Phic(\sigma)\\
=&\sum_{p=0}^{L-1}(1-c_p)^2\frac{\hatphi(p)e^{-\sigma(1-c_p)}}{L\,\Phic(\sigma)}\\
&+\frac{\frac{\partial}{\partial \sigma} \Phic(\sigma)}{\Phic(\sigma)}\sum_{p=0}^{L-1}(1-c_p)\frac{\hatphi(p)e^{-\sigma(1-c_p)}}{L\,\Phic(\sigma)}\\
=&\E_{\tilde{\PP}^{(L)}_{\alpha,\sigma}}\left[(1-c_p)^2\right]-\left(\E_{\tilde{\PP}^{(L)}_{\alpha,\sigma}}[1-c_p]\right)^2\\
=&\Var_{\tilde{\PP}^{(L)}_{\alpha,\sigma}}(1-c_p)=\Var_{\tilde{\PP}^{(L)}_{\alpha,\sigma}}(c_p).
\end{split}
\end{equation}
On the other hand, using Lemma \ref{lem:phi_properties}, all coefficients $\{\hatphi(p)\}_{p=0}^{L-1}$ are strictly positive. Then, the measure $\tilde{\PP}^{(L)}_{\alpha,\sigma}$ gives positive weight to at least two different values of $c_p$, so the variance in \eqref{eq20} is strictly positive.
\end{proof}

The next lemma introduces the function that controls the dependence on $n$.

\begin{lemma}\label{lem:g_negative}
Define
\begin{equation}\label{eq21}
g_\alpha^{(L)}(\sigma)=\log \Phic(\sigma)-\sigma \frac{\partial}{\partial \sigma}\log \Phic(\sigma).
\end{equation}
for any $\sigma>0$.
Then, we have $g_\alpha^{(L)}(\sigma)<0$, for every $\sigma>0$. Moreover, $\sigma \mapsto g_\alpha^{(L)}(\sigma)$ is strictly decreasing on $(0,\infty)$.
\end{lemma}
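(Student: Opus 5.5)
Write $\ell(\sigma)=\log\Phic(\sigma)$, so that $g_\alpha^{(L)}(\sigma)=\ell(\sigma)-\sigma\ell'(\sigma)$. The plan is to get the monotonicity from the convexity in Lemma~\ref{lem:variance}, and the sign from the value at $\sigma=0$.

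\textbf{Step 1: monotonicity.} Differentiating directly gives
\[
\frac{d}{d\sigma}g_\alpha^{(L)}(\sigma)=\ell'(\sigma)-\ell'(\sigma)-\sigma\ell''(\sigma)=-\sigma\,\ell''(\sigma).
\]
By Lemma~\ref{lem:variance}, $\ell''(\sigma)>0$ for every $\sigma>0$. Hence the derivative is strictly negative on $(0,\infty)$, and $g_\alpha^{(L)}$ is strictly decreasing there. The differentiation is legitimate because $\Phic$ is a finite positive combination of exponentials, so it is smooth and positive.

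\textbf{Step 2: boundary value at $\sigma=0$.} Extend $g_\alpha^{(L)}$ continuously to $\sigma=0$, which is possible by the same smoothness. Lemma~\ref{lem:phi_properties}(ii) gives $\Phic(0)=1$, so $\ell(0)=0$. Lemma~\ref{lem:phi_properties}(iv) gives $\ell'(0)=-\gamma_\alpha$, which is finite, so $\sigma\ell'(\sigma)\to0$. Therefore $g_\alpha^{(L)}(0^+)=0$.

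\textbf{Step 3: sign.} For $\sigma>0$, the mean value theorem on $[0,\sigma]$ gives $g_\alpha^{(L)}(\sigma)-g_\alpha^{(L)}(0)=-\xi\ell''(\xi)\,\sigma$ for some $\xi\in(0,\sigma)$. This is strictly negative, so $g_\alpha^{(L)}(\sigma)<0$. Equivalently, strict convexity of $\ell$ means the tangent line at $\sigma$ lies strictly below $\ell$ at $0$: $\ell(0)>\ell(\sigma)-\sigma\ell'(\sigma)$, which is exactly $0>g_\alpha^{(L)}(\sigma)$.

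The only delicate point is the boundary limit in Step 2. It is handled by the finite spectral sum \eqref{eq:Phidef}, which makes $\ell$ smooth on $[0,\infty)$ and gives a bounded $\ell'$ near $0$.
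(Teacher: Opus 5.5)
Your proof is correct and matches the paper's argument: you differentiate to get $\frac{d}{d\sigma}g_\alpha^{(L)}(\sigma)=-\sigma\,\partial_\sigma^2\log\Phic(\sigma)<0$ via Lemma~\ref{lem:variance}, then use continuity on $[0,\infty)$ together with $g_\alpha^{(L)}(0)=0$ to conclude $g_\alpha^{(L)}(\sigma)<0$. Your mean-value (or tangent-line) phrasing of the last step is simply a more explicit version of the paper's ``strictly decreasing and continuous at $0$'' step.
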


\begin{proof}
Differentiating Eq.~\eqref{eq21} yields
\[
\frac{\partial}{\partial\sigma}g_\alpha^{(L)}(\sigma)
=
-\sigma \frac{\partial^2}{\partial \sigma^2}\log \Phic(\sigma).
\]
Using Lemma \ref{lem:variance}, we deduce that \(\frac{\partial}{\partial\sigma}g_\alpha^{(L)}(\sigma)<0\) for every \(\sigma>0\), so $\sigma \mapsto g_\alpha^{(L)}(\sigma)$ is strictly decreasing.

To show that $g_\alpha^{(L)}(\sigma)<0$, note the continuity of $\sigma \mapsto g_\alpha^{(L)}(\sigma)$ on $[0,\infty)$. In addition, since $\sigma \mapsto g_\alpha^{(L)}(\sigma)$ decreases strictly, we conclude that 
\[
g_\alpha^{(L)}(\sigma)<g_\alpha^{(L)}(0)=0
\qquad\text{for every }\sigma>0.
\]
\end{proof}

\begin{proposition}\label{prop:dimension}
Fix \(b>a>0\). Then
\[
n\mapsto \Lambda_{n,\alpha}^{(L)}(a,b):=\frac{\LnL(\alpha,b)}{\LnL(\alpha,a)}
\]
is strictly increasing on $(0,\infty)$.
\end{proposition}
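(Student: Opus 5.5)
The plan is to treat $n$ as a continuous parameter and to rescale time so that the dependence on $n$ sits entirely in a factor whose logarithmic $n$-derivative is the function $g_\alpha^{(L)}$ of Lemma~\ref{lem:g_negative}. Substituting $\sigma=u/n$ in Eq.~\eqref{eq:Ldef} gives
\[
\LnL(\alpha,x)=\frac1n\int_0^\infty F_n(u)\,e^{-xu}\,du,
\qquad
F_n(u):=\Bigl[\Phic\!\left(\tfrac{u}{n}\right)\Bigr]^n .
\]
The prefactor $1/n$ cancels in the ratio. Hence
\[
\Lambda_{n,\alpha}^{(L)}(a,b)=\frac{\int_0^\infty F_n(u)e^{-bu}\,du}{\int_0^\infty F_n(u)e^{-au}\,du}.
\]
A direct computation then gives
\[
\frac{\partial}{\partial n}\log F_n(u)=\log\Phic\!\left(\tfrac un\right)-\tfrac un\,\frac{\partial}{\partial\sigma}\log\Phic\!\left(\tfrac un\right)=g_\alpha^{(L)}\!\left(\tfrac un\right).
\]
By Lemma~\ref{lem:g_negative}, this is a strictly decreasing function of $u$ for each fixed $n>0$.

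Next I would differentiate $\log\Lambda_{n,\alpha}^{(L)}(a,b)$ in $n$ under the integral sign. For $x>0$, introduce the probability measures $Q_x$ on $(0,\infty)$ with density proportional to $F_n(u)e^{-xu}$. The derivative becomes
\[
\frac{\partial}{\partial n}\log\Lambda_{n,\alpha}^{(L)}(a,b)=\E_{Q_b}\!\left[g_\alpha^{(L)}(u/n)\right]-\E_{Q_a}\!\left[g_\alpha^{(L)}(u/n)\right].
\]
As in the proof of Lemma~\ref{lem:X_monotone}, this difference can be written as a covariance under $Q_a$:
\[
\Cov_{Q_a}\!\left(g_\alpha^{(L)}(u/n),\,\frac{dQ_b}{dQ_a}(u)\right),
\]
where $dQ_b/dQ_a\propto e^{-(b-a)u}$ has $Q_a$-mean one and is strictly decreasing since $b>a$. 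Both functions are strictly decreasing, and $Q_a$ has a strictly positive density on $(0,\infty)$. Chebyshev's covariance inequality therefore gives a strictly positive covariance, so $n\mapsto\Lambda_{n,\alpha}^{(L)}(a,b)$ is strictly increasing.

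The main technical point is justifying differentiation under the integral sign, locally uniformly in $n$ on compact subsets of $(0,\infty)$. I would handle it by showing that $g_\alpha^{(L)}$ is bounded on $[0,\infty)$.
\begin{itemize}
\item Near $\sigma=0$ it is continuous with $g_\alpha^{(L)}(0)=0$.
\item As $\sigma\to\infty$, Lemma~\ref{lem:phi_properties}(iii) gives $\log\Phic(\sigma)\to\log(\hatphi(0)/L)$.
\item By the spectral formula Eq.~\eqref{eq:Phidef}, $\partial_\sigma\log\Phic(\sigma)$ decays like $e^{-\sigma(1-c_1)}$, so $\sigma\,\partial_\sigma\log\Phic(\sigma)\to0$.
\end{itemize}
Together with $0<F_n\le1$ and the integrability of $e^{-au}$, this gives a dominating function $C\,e^{-au}$, and dominated convergence justifies the interchange.
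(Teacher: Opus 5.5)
Your proposal is correct and follows the paper's proof: the same rescaling $\sigma=u/n$, the same identity $\partial_n\log\bigl[\Phi_\alpha^{(L)}(u/n)\bigr]^n=g_\alpha^{(L)}(u/n)$, and a Chebyshev covariance argument based on the monotonicity in Lemma~\ref{lem:g_negative}. The differences are minor. The paper differentiates once more in $x$ to get $\partial_x\partial_n\log f_\alpha^{(L)}(n,x)=-\Cov\bigl(u,g_\alpha^{(L)}(u/n)\bigr)>0$ and then integrates over $[a,b]$, whereas you compare $Q_a$ and $Q_b$ directly through the likelihood ratio $\propto e^{-(b-a)u}$, as in Lemma~\ref{lem:X_monotone}. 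Your boundedness argument for $g_\alpha^{(L)}$ also supplies the justification for differentiating under the integral, which the paper leaves implicit.
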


\begin{proof}
By a simple substitution, $\LnL$ can be written as
\[
\LnL(\alpha,x)=\int_0^\infty [\Phic(\sigma)]^n e^{-nx\sigma}\,d\sigma
=
\frac{1}{n}f_{\alpha}^{(L)}(n,x),
\]
where \[f_{\alpha}^{(L)}(n,x)=\int_0^\infty \bigl[\Phic(u/n)\bigr]^ne^{-xu}\,du.\]
Then, we deduce that
\[
\Lambda_{n,\alpha}^{(L)}(a,b)=\frac{f_{\alpha}^{(L)}(n,b)}{f_{\alpha}^{(L)}(n,a)}.
\]
Differentiating this expression with respect to $n$, we have
\begin{align*}
&\frac{\partial}{\partial n} f_{\alpha}^{(L)}(n,x)\\
=&\frac{\partial}{\partial n}\int_0^\infty e^{n\log\Phic(u/n)}e^{-xu}\,du\\
=&\int_0^\infty\left(\log\Phic\left(\frac{u}{n}\right)-\frac{u}{n}\frac{\partial}{\partial\sigma}\log\Phic\left(\frac{u}{n}\right)\right)\\
&\qquad\qquad\qquad\quad\times\bigl[\Phic(u/n)\bigr]^ne^{-xu}\,du\\
=&\int_0^\infty g_\alpha^{(L)}\left(u/n\right)\bigl[\Phic(u/n)\bigr]^ne^{-xu}\,du.
\end{align*}
Then, we deduce that
\begin{align}
\frac{\partial}{\partial n}\log f_{\alpha}^{(L)}(n,x)
=&\int_0^\infty g_\alpha^{(L)}\left(u/n\right)\frac{\bigl[\Phic(u/n)\bigr]^ne^{-xu}}{f_{\alpha}^{(L)}(n,x)}\,du\nonumber\\
=&\mathbb{E}_{\tilde{\mathcal{Q}}^{(L)}_{n,x}}\left[g_\alpha^{(L)}\left(u/n\right)\right],\label{eq22}
\end{align}
where \(\tilde{\mathcal{Q}}^{(L)}_{n,x}\) is the probability measure on \((0,\infty)\) defined by
\[
d\tilde{\mathcal{Q}}^{(L)}_{n,x}(u)
=
\frac{\bigl[\Phic(u/n)\bigr]^ne^{-xu}}{f_{\alpha}^{(L)}(n,x)}\,du.
\]
Differentiating once more \eqref{eq22} with respect to $x$, we get
\begin{equation}\label{eq23}
\begin{split}
&\frac{\partial^2}{\partial x\partial n}\log f_{\alpha}^{(L)}(n,x)\\
&=-\int_0^\infty ug_\alpha^{(L)}\left(\frac{u}{n}\right)\frac{\bigl[\Phic\left(\frac{u}{n}\right)\bigr]^ne^{-xu}}{f_{\alpha}^{(L)}(n,x)}\,du\\
&\quad
-\int_0^\infty g_\alpha^{(L)}\left(\frac{u}{n}\right)\frac{\bigl[\Phic\left(\frac{u}{n}\right)\bigr]^ne^{-xu}}{f_{\alpha}^{(L)}(n,x)}\,du\frac{\frac{\partial}{\partial x}f_{\alpha}^{(L)}(n,x)}{f_{\alpha}^{(L)}(n,x)}\\
&=-\mathbb{E}_{\tilde{\mathcal{Q}}^{(L)}_{n,x}}\left[ug_\alpha^{(L)}\left(\frac{u}{n}\right)\right]-\mathbb{E}_{\tilde{\mathcal{Q}}^{(L)}_{n,x}}\left[g_\alpha^{(L)}\left(\frac{u}{n}\right)\right]\frac{\frac{\partial}{\partial x}f_{\alpha}^{(L)}(n,x)}{f_{\alpha}^{(L)}(n,x)}.
\end{split}
\end{equation}
On the other hand, note
\begin{equation*}
\begin{split}
\frac{\frac{\partial}{\partial x}f_{\alpha}^{(L)}(n,x)}{f_{\alpha}^{(L)}(n,x)}&=-\int_0^\infty u\frac{\bigl[\Phic(u/n)\bigr]^ne^{-xu}}{f_{\alpha}^{(L)}(n,x)}\,du\\
&=-\mathbb{E}_{\tilde{\mathcal{Q}}^{(L)}_{n,x}}\left[u\right].
\end{split}
\end{equation*}
Inserting this identity in Eq.~\eqref{eq23} leads to
\begin{equation*}
\frac{\partial^2}{\partial x\partial n}\log f_{\alpha}^{(L)}(n,x)=-\Cov_{\tilde{\mathcal{Q}}^{(L)}_{n,x}}\left(u,\,g_\alpha^{(L)}(u/n)\right).
\end{equation*}
Since \(u\mapsto u\) is increasing and \(u\mapsto g_\alpha^{(L)}(u/n)\) is strictly decreasing (Lemma \ref{lem:g_negative}), Chebyshev's covariance inequality shows that
\begin{equation}\label{eq24}
\frac{\partial^2}{\partial x\partial n}\log f_{\alpha}^{(L)}(n,x)=-\Cov_{\tilde{\mathcal{Q}}^{(L)}_{n,x}}\left(u,\,g_\alpha^{(L)}(u/n)\right)>0.
\end{equation}
The inequality is strict since neither function is \(\tilde{\mathcal{Q}}^{(L)}_{n,x}\)-a.s.\ constant.
Integrating \eqref{eq24} with respect to $x$ from $a$ to $b$, we find
\[
\begin{aligned}
\frac{\partial}{\partial n}\log \Lambda_{n,\alpha}^{(L)}(a,b)&=\frac{\partial}{\partial n}\log\frac{f_{\alpha}^{(L)}(n,b)}{f_{\alpha}^{(L)}(n,a)}\\
&=\int_a^b\frac{\partial^2}{\partial x\,\partial n}\log f_{\alpha}^{(L)}(n,x)\,dx>0,
\end{aligned}
\]
which completes the proof.
\end{proof}


\begin{theorem}\label{thm:dimension}
For every integer $n\ge1$, we have
\[
\beta_{n+1}^{(L)}(\alpha)<\beta_n^{(L)}(\alpha).
\]
In fact, the map $n\longmapsto \beta_n^{(L)}(\alpha)$ extends to a strictly decreasing function on $(0,\infty)$.
\end{theorem}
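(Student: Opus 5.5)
The plan is to follow exactly the structure used for Theorem~\ref{thm:alpha}, with the dimension $n$ now playing the role of $\alpha$. First rewrite the explicit threshold \eqref{eq:betaexplicit} in the normalized form \eqref{main:form}:
\[
\bc=\frac{(1+2\mu)^2+\mu(3+2\mu)\mathcal R_3-(1+\mu)(1+2\mu)\mathcal R_2}{(1+\mu)(1+2\mu)\mathcal R_2+\mu(3+2\mu)\mathcal R_3-(1+2\mu)}.
\]
Here $\mathcal R_i=\mathcal R_i(n,\alpha)=\LnL(\alpha,a_i)/\LnL(\alpha,a_1)$ and $a_1<a_2<a_3$ are as before. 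In the notation of Proposition~\ref{prop:dimension}, $\mathcal R_i=\Lambda^{(L)}_{n,\alpha}(a_1,a_i)$. The definition \eqref{eq:Ldef} makes sense for every real $n>0$, since $\Phic>0$ and $[\Phic]^n$ is well defined, so $\bc$ extends naturally to a function of $n\in(0,\infty)$.

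The key input is Proposition~\ref{prop:dimension}. Applied with $(a,b)=(a_1,a_2)$ and $(a,b)=(a_1,a_3)$, it gives that $n\mapsto\mathcal R_2$ and $n\mapsto\mathcal R_3$ are both strictly increasing on $(0,\infty)$.

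Next, we reuse the partial-derivative computation from the proof of Theorem~\ref{thm:alpha}. That computation was purely algebraic in $(\mathcal R_2,\mathcal R_3)$ and used only two facts: $\mathcal R_3>0$, and $\mathcal R_2<1$, which follows from $x\mapsto\LnL(\alpha,x)$ being strictly decreasing (Appendix~\ref{ln_is_decreasing}). Both facts hold for every real $n>0$. Hence $\partial\bc/\partial\mathcal R_2<0$ and $\partial\bc/\partial\mathcal R_3<0$. We must also check that the denominator of \eqref{main:form} stays strictly positive for all $n>0$. This is equivalent to $\GnL-\HnL>0$ for the extended quantities, which we read off from the fact that numerator minus denominator equals $2(1+\mu)(1+2\mu)\LnL(\alpha,a_1)(1-\mathcal R_2)>0$ together with positivity of the original denominator; if necessary, positivity can instead be verified directly from the coalescent representation, which remains valid formally for real $n$.

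By the chain rule,
\[
\frac{d\bc}{dn}=\frac{\partial\bc}{\partial\mathcal R_2}\frac{\partial\mathcal R_2}{\partial n}+\frac{\partial\bc}{\partial\mathcal R_3}\frac{\partial\mathcal R_3}{\partial n}<0.
\]
Here $\partial\mathcal R_i/\partial n>0$ comes from the strict inequality $\partial_n\log\Lambda>0$ established in the proof of Proposition~\ref{prop:dimension}. Therefore $n\mapsto\bc$ is strictly decreasing on $(0,\infty)$, and in particular $\beta^{(L)}_{n+1}(\alpha)<\beta^{(L)}_n(\alpha)$.

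The main obstacle is justifying the extension to real $n$. Differentiation under the integral sign in $f^{(L)}_\alpha(n,x)$ is needed, and it is legitimate: $0<\Phic\le1$, $\Phic$ is bounded below by $\hatphi(0)/L>0$ by Lemma~\ref{lem:phi_properties}, so $|\log\Phic|$ is bounded, and $g^{(L)}_\alpha$ grows at most linearly in $u$, giving an integrable dominating function against $e^{-xu}$. The remaining concern is the sign of the denominator for non-integer $n$. For this I would use the closed form \eqref{eq:closed_form} and the strict decrease of $x\mapsto\LnL$: the denominator equals $(1+2\mu)\bigl[(1+\mu)\mathcal R_2-1\bigr]+\mu(3+2\mu)\mathcal R_3$, and I would try to bound it below using convexity of $x\mapsto x\LnL(\alpha,x)$ type estimates. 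This step is where I expect genuine work if positivity for integer $n$ is not already taken as given.
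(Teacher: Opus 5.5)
Your argument is the paper's proof: rewrite the threshold in the form \eqref{main:form}, use Proposition~\ref{prop:dimension} to get that $\mathcal R_2,\mathcal R_3$ increase in $n$, and combine this by the chain rule with the signs $\partial\bc/\partial\mathcal R_2<0$ and $\partial\bc/\partial\mathcal R_3<0$ from the proof of Theorem~\ref{thm:alpha}. The one point you leave open is positivity of the denominator for non-integer $n$, which the paper does not address and which does not follow from the numerator-minus-denominator identity; it is settled by a short single-crossing argument: the denominator of \eqref{eq:betaexplicit} equals $\frac{2\nu}{n}\int_0^\infty[\Phic(2\nu\tau/n)]^n w(\tau)\,d\tau$ with $w(\tau)=(1+\mu)(1+2\mu)e^{-(1+2\mu)\tau}+\mu(3+2\mu)e^{-(3+2\mu)\tau}-(1+2\mu)e^{-\tau}$, where $\int_0^\infty w=0$ and $e^{\tau}w(\tau)$ decreases strictly from $4\mu(1+\mu)>0$ to $-(1+2\mu)$, so $w$ changes sign once, from positive to negative, and the strict decrease of $[\Phic]^n$ for every real $n>0$ makes the integral positive.
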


\begin{proof}
Differentiating \eqref{main:form} with respect to $n$ leads to
\begin{equation}
\frac{\partial}{\partial n}\bc(\alpha)=\frac{\partial \bc}{\partial \mathcal{R}_2}\times\frac{\partial}{\partial n} \mathcal{R}_2(n,\alpha)+\frac{\partial \bc}{\partial \mathcal{R}_3}\times\frac{\partial}{\partial n} \mathcal{R}_3(n,\alpha).
\end{equation}
In the proof of Theorem \ref{thm:alpha}, we have shown that \(\dfrac{\partial \bc}{\partial \mathcal{R}_2}<0\) and \(\dfrac{\partial \bc}{\partial \mathcal{R}_3}<0\).
To complete the proof, observe that $\mathcal{R}_2(n,\alpha)=\Lambda_{n,\alpha}^{(L)}(a_1,a_2)$ and $\mathcal{R}_3(n,\alpha)=\Lambda_{n,\alpha}^{(L)}(a_1,a_3)$, which are increasing with respect to $n$ by Proposition \ref{prop:dimension}.
\end{proof}

The next theorem gives the limiting threshold when the dimension becomes very large.

\begin{theorem}\label{cor:n_infty}
We have
\begin{equation}\label{eq:beta_infty}
\lim_{n\to\infty}\beta_n^{(L)}(\alpha)=\frac{4\gamma_\alpha^2\nu^2+(8\mu+6)\gamma_\alpha\nu+4\mu^2+8\mu+3}{4\gamma_\alpha\nu(\gamma_\alpha\nu+\mu+1)}.
\end{equation}
\end{theorem}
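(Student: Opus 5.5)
The plan is to pass to the limit $n\to\infty$ inside the explicit threshold formula \eqref{eq:betaexplicit}, using the rescaled representation from the proof of Proposition~\ref{prop:dimension}:
\[
n\,\LnL(\alpha,x)=f_\alpha^{(L)}(n,x)=\int_0^\infty \bigl[\Phic(u/n)\bigr]^n e^{-xu}\,du .
\]
Since $\beta_n^{(L)}(\alpha)$ in \eqref{eq:betaexplicit} is a ratio of linear combinations of $\LnL(\alpha,a_i)$ with $a_1=1/(2\nu)$, $a_2=(1+2\mu)/(2\nu)$, $a_3=(3+2\mu)/(2\nu)$, I will multiply numerator and denominator by $n$. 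Then it suffices to know $\lim_{n\to\infty}f_\alpha^{(L)}(n,a_i)$ for $i=1,2,3$, provided the limiting denominator is nonzero.

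The key step is the pointwise limit
\[
\bigl[\Phic(u/n)\bigr]^n=\exp\!\bigl(n\log\Phic(u/n)\bigr)\longrightarrow e^{-\gamma_\alpha u}.
\]
This follows from Lemma~\ref{lem:phi_properties}(ii),(iv): we have $\Phic(0)=1$ and $\partial_\sigma\Phic(0)=-\gamma_\alpha$. Also $\Phic$ is smooth, being a finite exponential sum by \eqref{eq:Phidef}. Hence $\log\Phic(\sigma)=-\gamma_\alpha\sigma+O(\sigma^2)$ near $0$, so $n\log\Phic(u/n)\to-\gamma_\alpha u$ for each fixed $u$.

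For domination, $0<\Phic\le 1$, since $\Phic$ is an expectation of $e^{-\alpha\mathcal D_\sigma}\in(0,1]$. The integrand is therefore bounded by $e^{-xu}$, which is integrable for $x>0$. Dominated convergence then gives
\[
\lim_{n\to\infty}n\,\LnL(\alpha,x)=\int_0^\infty e^{-(\gamma_\alpha+x)u}\,du=\frac{1}{\gamma_\alpha+x}.
\]
At $x=a_i$ this equals $2\nu/(2\gamma_\alpha\nu+c_i)$, with $c_1=1$, $c_2=1+2\mu$, $c_3=3+2\mu$.

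Substituting, and writing $g=2\gamma_\alpha\nu$, the limit of the threshold is
\[
\frac{\dfrac{(1+2\mu)^2}{g+1}+\dfrac{\mu(3+2\mu)}{g+3+2\mu}-\dfrac{(1+\mu)(1+2\mu)}{g+1+2\mu}}{\dfrac{(1+\mu)(1+2\mu)}{g+1+2\mu}+\dfrac{\mu(3+2\mu)}{g+3+2\mu}-\dfrac{1+2\mu}{g+1}}.
\]
The common factor $2\nu$ cancels. I will then clear denominators by multiplying through by $(g+1)(g+1+2\mu)(g+3+2\mu)$ and expand both sides as polynomials in $g$. I expect a common factor to cancel, leaving the quadratic numerator $g^2+(4\mu+3)g+(1+2\mu)(3+2\mu)$. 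In terms of $\gamma_\alpha\nu$ this is $4\gamma_\alpha^2\nu^2+(8\mu+6)\gamma_\alpha\nu+4\mu^2+8\mu+3$. The denominator should reduce to $g(g+2\mu+2)=4\gamma_\alpha\nu(\gamma_\alpha\nu+\mu+1)$, which matches \eqref{eq:beta_infty}.

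The main obstacle is this algebraic simplification, together with checking that the limiting denominator is nonzero, which the interchange of limit and ratio requires. Once the denominator factors as $g(g+2\mu+2)$ times positive factors, positivity is immediate because $\gamma_\alpha>0$ and $\nu>0$. If the factorisation does not come out cleanly, I would first test it at $\mu=0$, where the $\mathcal R_3$ term vanishes, to locate the common factor. The analytic part, namely the exchange of limit and integral, is routine given Lemma~\ref{lem:phi_properties}.
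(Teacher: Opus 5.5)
Your proposal is correct and follows the paper's route. You prove $n\,\mathcal{L}_n^{(L)}(\alpha,x)\to 1/(x+\gamma_\alpha)$ by dominated convergence on the rescaled integral and substitute into \eqref{eq:betaexplicit}. Your global domination by $e^{-xu}$, which uses $0<\Phi_\alpha^{(L)}\le 1$ and $x>0$, is simpler than the paper's split into a local part $I_n$ and a tail $J_n$. Your anticipated factorisation is also right: the limiting numerator and denominator are exactly $\Xi_1(g,\mu)$ and $\Xi_2(g,\mu)$ from Appendix~\ref{AppendixE} with $\lambda=g=2\gamma_\alpha\nu$, and their common factor $4\mu(1+\mu)/[(g+1)(g+1+2\mu)(g+3+2\mu)]$ is nonzero for $\mu>0$. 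This settles that the limiting denominator does not vanish, a point the paper leaves implicit.
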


\begin{proof}
To establish this limit, we first determine the asymptotic behavior of $\mathcal{L}_n(\alpha,x)$ for $n\to\infty$.
More precisely, we will show 
\[
n\LnL(\alpha,x) \sim\frac{1}{x+\gamma_\alpha}, 
\]
as $n\to\infty$. Recall the integral representation 
\begin{equation}\label{eq25}
n\LnL(\alpha,x)=\int_0^\infty \exp\Bigl(n\log \Phic(t/n)-xt\Bigr)\,dt.
\end{equation}
Because $\Phic$ is constructed as a finite sum of exponentials, it is $C^\infty$ on $[0,\infty)$. By Lemma \ref{lem:phi_properties}(iv), its local behavior near zero is governed by the expansion
\begin{equation}\label{eq26}
\log \Phic(\sigma) = -\gamma_\alpha \sigma + O(\sigma^2),
\end{equation}
as $\sigma \downarrow 0$.
Fix a small constant $\varepsilon \in (0,x+\gamma_\alpha)$. There exists $\delta>0$ such that
\begin{equation}\label{eq:local_log_bound}
-(\gamma_\alpha+\varepsilon)\sigma \le \log \Phic(\sigma) \le -(\gamma_\alpha-\varepsilon)\sigma, 
\end{equation}
for $0\le \sigma\le \delta$.
Splitting the integral in \eqref{eq25} into the sum of a local component, $I_n$, and a tail component, $J_n$,
where
\begin{align*}
I_n&=\int_0^{n\delta}\exp\!\Bigl(n\log \Phic(t/n)-xt\Bigr)\,dt,\\
J_n&=\int_{n\delta}^\infty\exp\!\Bigl(n\log \Phic(t/n)-xt\Bigr)\,dt.
\end{align*}

Let us first examine the main contribution, $I_n$. For any fixed $t\ge 0$, a consequence of \eqref{eq26} is
\[
n\log \Phic(t/n)\sim -\gamma_\alpha t,
\]
as $n\to\infty$.
Therefore, we have
\[
\lim_{n\to\infty}\exp\!\Bigl(n\log \Phic(t/n)-xt\Bigr)=e^{-(x+\gamma_\alpha)t}.
\]
Moreover, for \(0\le t\le n\delta\), Eq.~\eqref{eq:local_log_bound} leads to
\[
0\le \exp\!\Bigl(n\log \Phic(t/n)-xt\Bigr)\le e^{-(x+\gamma_\alpha-\varepsilon)t}.
\]
As the right-hand side is integrable on \([0,\infty)\), the Lebesgue dominated convergence theorem yields
\[
\lim_{n\rightarrow\infty}I_n=\int_0^\infty e^{-(x+\gamma_\alpha)t}\,dt=\frac{1}{x+\gamma_\alpha}.
\]

It remains to show that the tail term \(J_n\) is negligible as
\(n\to\infty\). Since \(\sigma\mapsto \Phic(\sigma)\) is decreasing on
\([0,\infty)\) (Lemma \ref{lem:phi_properties}(\textit{v})), we have, for every \(t\ge n\delta\),
\[
\Phic(t/n)\le \Phic(\delta)=:r.
\]
Moreover, \(0<r<1\), because \(\delta>0\) and \(\Phic(0)=1\). Hence
\[
0\le
\exp\!\Bigl(n\log \Phic(t/n)-xt\Bigr)
=
\Phic(t/n)^n e^{-xt}
\le r^n e^{-xt}.
\]
Therefore,
\[
0\le J_n
\le r^n\int_{n\delta}^{\infty}e^{-xt}\,dt
=
\frac{r^n}{x}e^{-xn\delta}.
\]
Since \(0<r<1\) and \(x,\delta>0\), the right-hand side tends to zero as
\(n\to\infty\). By the squeeze theorem,
\[
\lim_{n\to\infty}J_n=0.
\]

By combining the dominant term $I_n$ with the vanishing tail $J_n$, we have the asymptotic expansion
\[
\LnL(\alpha,x)=\frac{1}{n(x+\gamma_\alpha)}+o\!\left(\frac1n\right)
\qquad\text{as }n\to\infty.
\]
Evaluating this asymptotic formula for $x=\frac{1}{2\nu},\,\frac{1+2\mu}{2\nu},\,\frac{3+2\mu}{2\nu}$,
and substituting the results into the explicit formula \eqref{eq:betaexplicit}, we obtain
\[
\lim_{n\to\infty}\beta_n^{(L)}(\alpha)
=
\frac{4\gamma_\alpha^2\nu^2+(8\mu+6)\gamma_\alpha\nu+4\mu^2+8\mu+3}
{4\gamma_\alpha\nu(\gamma_\alpha\nu+\mu+1)}.
\]
This completes the proof.
\end{proof}


\section{\label{AppendixD}Effect of the Phenotype Mutation Rate}
In this section, we will study the effect of the phenotype mutation rate $\nu$ on the threshold $\beta_n^{(L)}(\alpha)$. To keep the notation light, we continue to write \(\beta_n^{(L)}(\alpha)\), although this quantity also depends on the fixed strategy mutation rate \(\mu\) and on the phenotype mutation rate \(\nu\).

\begin{lemma}\label{Lemma_D1}
We have the following expansion
\begin{equation}\label{eq28}
\LnL(\alpha,x)=\frac{1}{nx}-\frac{\gamma_\alpha}{nx^2}+\mathcal{O}(x^{-3}),
\end{equation}
as $x\to\infty$.
\end{lemma}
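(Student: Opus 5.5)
We need the large-$x$ expansion of $\LnL(\alpha,x)=\int_0^\infty[\Phic(\sigma)]^n e^{-nx\sigma}\,d\sigma$. This is a Watson-lemma type statement. The plan is to expand $[\Phic(\sigma)]^n$ near $\sigma=0$ to first order and bound the remainder uniformly.

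\textbf{Step 1: local expansion of the integrand.} Set $F(\sigma)=[\Phic(\sigma)]^n$. Since $\Phic$ is a finite sum of exponentials by \eqref{eq:Phidef}, $F$ is $C^\infty$ on $[0,\infty)$. By Lemma~\ref{lem:phi_properties}(ii) and (iv), $F(0)=1$ and $F'(0)=n\,\partial_\sigma\Phic(0)=-n\gamma_\alpha$. Taylor's theorem with Lagrange remainder then gives
\[
F(\sigma)=1-n\gamma_\alpha\sigma+R(\sigma),\qquad |R(\sigma)|\le \tfrac12 K\sigma^2,
\]
where $K=\sup_{\sigma\ge0}|F''(\sigma)|$. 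This supremum is finite. Differentiating \eqref{eq:Phidef} twice gives $|\partial_\sigma^k\Phic(\sigma)|\le \frac1L\sum_p (1-c_p)^k\hatphi(p)\le 2^k\hatphi(0)$, because every $e^{-\sigma(1-c_p)}\le1$ and $\hatphi(p)>0$. Moreover $0<\Phic\le1$ by Lemma~\ref{lem:phi_properties}(ii),(v). Hence $F''=n(n-1)\Phic^{\,n-2}(\Phic')^2+n\Phic^{\,n-1}\Phic''$ is bounded on all of $[0,\infty)$. This global bound lets us avoid splitting the integral into a local part and a tail.

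\textbf{Step 2: integrate term by term.} For $x>0$,
\[
\int_0^\infty e^{-nx\sigma}d\sigma=\frac1{nx},\qquad \int_0^\infty \sigma e^{-nx\sigma}d\sigma=\frac1{n^2x^2},\qquad \int_0^\infty \sigma^2e^{-nx\sigma}d\sigma=\frac{2}{n^3x^3}.
\]
Substituting the expansion from Step 1, we obtain
\[
\LnL(\alpha,x)=\frac1{nx}-n\gamma_\alpha\cdot\frac1{n^2x^2}+\int_0^\infty R(\sigma)e^{-nx\sigma}\,d\sigma .
\]
The remainder integral is bounded in absolute value by $K/(n^3x^3)=\mathcal O(x^{-3})$. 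The middle term equals $-\gamma_\alpha/(nx^2)$, which proves \eqref{eq28}.

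\textbf{Main obstacle.} The only delicate point is justifying that the second-derivative bound $K$ is uniform on the whole half-line, so that no tail estimate is needed. The spectral representation \eqref{eq:Phidef} handles this directly, since all modes decay and the coefficients $\hatphi(p)$ are positive.

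As an independent check, one can instead expand the closed form \eqref{eq:closed_form} in powers of $1/(nx)$:
\[
\frac{1}{nx+\lambda}=\frac1{nx}-\frac{\lambda}{n^2x^2}+\mathcal O(x^{-3}),\qquad \lambda=n-\sum_j c_{p_j}.
\]
This reduces the claim to the identity
\[
\frac1{L^n}\sum_{p_1,\ldots,p_n}\prod_j\hatphi(p_j)\Bigl(n-\sum_j c_{p_j}\Bigr)=-F'(0)=n\gamma_\alpha,
\]
which follows from $\frac1L\sum_p\hatphi(p)=\Phic(0)=1$ together with Lemma~\ref{lem:phi_properties}(iv).
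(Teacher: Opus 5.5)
Your proof is correct and is essentially the paper's argument: a first-order Taylor expansion of $[\Phic(\sigma)]^n$ at $\sigma=0$, integrated term by term against $e^{-nx\sigma}$. The only difference is that you bound $F''$ uniformly on $[0,\infty)$ using the finite spectral sum, which replaces the paper's split into $[0,\delta]$ plus an exponentially small tail and gives the explicit remainder bound $K/(n^3x^3)$. Your alternative route via \eqref{eq:closed_form} is also a complete proof on its own: the weights $L^{-n}\prod_j\hatphi(p_j)$ are positive with total mass $1$, and $\lambda=n-\sum_j c_{p_j}\in[0,2n]$, so the remainder $\lambda^2/\bigl((nx)^2(nx+\lambda)\bigr)$ is uniformly $\mathcal{O}(x^{-3})$.
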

\begin{proof}
By Lemma~\ref{lem:phi_properties}, we have $\Phi_\alpha^{(L)}(0)=1$ and $\frac{\partial}{\partial\sigma} \Phi_\alpha^{(L)}(0)=-\gamma_\alpha$.
In addition, since \(\Phi_\alpha^{(L)}\) is a finite sum of exponentials, it is \(C^2\) near
\(0\). Hence, for some \(\delta>0\), we have
\[
\Phi_\alpha^{(L)}(\sigma)=1-\gamma_\alpha\sigma+\mathcal{O}(\sigma^2),
\]
for $0<\sigma<\delta$. This means that
\[
\left[\Phi_\alpha^{(L)}(\sigma)\right]^n=1-n\gamma_\alpha\sigma+\mathcal{O}(\sigma^2).
\]

Now split the integral representation as
\begin{align}
&\LnL(\alpha,x)\label{eqq1}\\
&=\int_0^\delta \left[\Phi_\alpha^{(L)}(\sigma)\right]^n e^{-nx\sigma}\,d\sigma+\int_\delta^\infty \left[\Phi_\alpha^{(L)}(\sigma)\right]^n e^{-nx\sigma}\,d\sigma.\nonumber
\end{align}
On \([0,\delta]\), the previous expansion gives
\[
\int_0^\delta \left[\Phi_\alpha^{(L)}(\sigma)\right]^n e^{-nx\sigma}\,d\sigma=\int_0^\delta\bigl(1-n\gamma_\alpha\sigma+\mathcal{O}(\sigma^2)\bigr)e^{-nx\sigma}\,d\sigma .
\]
On the other hand, note that
\[
\begin{aligned}
\int_0^\infty e^{-nx\sigma}\,d\sigma&=\frac1{nx},\\
\int_0^\infty \sigma e^{-nx\sigma}\,d\sigma&=\frac1{(nx)^2},\\
\int_0^\infty \sigma^2 e^{-nx\sigma}\,d\sigma&=\mathcal{O}(x^{-3}),
\end{aligned}
\]
and that replacing \([0,\delta]\) by \([0,\infty)\) only creates an exponentially small error. Then, we obtain
\begin{equation}\label{eqq2}
\int_0^\delta \left[\Phi_\alpha^{(L)}(\sigma)\right]^n e^{-nx\sigma}\,d\sigma =\frac1{nx} - \frac{\gamma_\alpha}{n x^2}+\mathcal{O}(x^{-3}).
\end{equation}

On \([\delta,\infty)\), since \(0\le \left[\Phi_\alpha^{(L)}(\sigma)\right]^n\le1\), we have
\begin{equation}\label{eqq3}
\begin{split}
0\le&\int_\delta^\infty \left[\Phi_\alpha^{(L)}(\sigma)\right]^n e^{-nx\sigma}\,d\sigma\\
&\qquad\le\int_\delta^\infty e^{-nx\sigma}\,d\sigma=\frac{e^{-nx\delta}}{nx}=
\mathcal{O}(x^{-3}).
\end{split}
\end{equation}
Combining the local and tail estimates in \eqref{eqq2} and \eqref{eqq3} in \eqref{eqq1} yields
\[
\LnL(\alpha,x)
=
\frac{1}{nx}
-
\frac{\gamma_\alpha}{n x^2}
+
\mathcal{O}(x^{-3}),
\]
as $x\to\infty$.
\end{proof}

\begin{lemma}\label{lem:Laplace_small_x}
As \(x\to0^+\), we have
\begin{equation}
\LnL(\alpha,x)=\frac{\left(\hatphi(0)\right)^n}{nL^n x}+C_n^{(L)}(\alpha)+\mathcal{O}(x),
\end{equation}
where
\begin{equation}
C_n^{(L)}(\alpha)=\frac{1}{L^n}\sum_{\substack{(p_1,\ldots,p_n)\in\{0,\ldots,L-1\}^n\\(p_1,\ldots,p_n)\ne(0,\ldots,0)}}\frac{\prod_{j=1}^n \hatphi(p_j)}{n-\sum_{j=1}^n c_{p_j}}.
\end{equation}
\end{lemma}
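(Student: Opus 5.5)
The plan is to start from the closed-form spectral representation \eqref{eq:closed_form} from Appendix~\ref{ln_is_decreasing},
\[
\LnL(\alpha,x)=\frac{1}{L^n}\sum_{p_1,\ldots,p_n=0}^{L-1}\frac{\prod_{j=1}^n \hatphi(p_j)}{n-\sum_{j=1}^n c_{p_j}+nx},
\]
which is valid for every $x>0$ and is a finite sum. Everything then reduces to expanding finitely many rational functions of $x$ near $x=0$, so no integral estimates are needed.

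First isolate the term $(p_1,\ldots,p_n)=(0,\ldots,0)$. Since $c_0=1$, its denominator equals $nx$, so this term is exactly $(\hatphi(0))^n/(nL^n x)$. This is the singular part of the expansion.

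For every other multi-index, at least one $p_j\neq0$, and then $c_{p_j}=\cos(2\pi p_j/L)<1$ because $1\le p_j\le L-1$. Hence
\[
\lambda_{\mathbf p}:=n-\sum_{j=1}^n c_{p_j}>0.
\]
Let $\lambda_{\min}>0$ be the minimum of $\lambda_{\mathbf p}$ over the finitely many nonzero multi-indices. For each such $\mathbf p$ we have
\[
\frac{1}{\lambda_{\mathbf p}+nx}-\frac{1}{\lambda_{\mathbf p}}=-\frac{nx}{\lambda_{\mathbf p}(\lambda_{\mathbf p}+nx)},
\]
whose absolute value is at most $nx/\lambda_{\min}^2$. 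The weights $\prod_j\hatphi(p_j)$ are positive by Lemma~\ref{lem:phi_properties}(i), and they are bounded independently of $x$; for instance $|\hatphi(p)|\le\hatphi(0)$. Summing over the at most $L^n-1$ nonzero indices therefore gives
\[
\frac{1}{L^n}\sum_{\mathbf p\ne\mathbf 0}\frac{\prod_j\hatphi(p_j)}{\lambda_{\mathbf p}+nx}=C_n^{(L)}(\alpha)+\mathcal O(x),
\]
with an explicit constant in the error term of order $n(\hatphi(0))^n/\lambda_{\min}^2$. Adding this to the singular term yields the claimed expansion.

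The only point requiring care is this strict positivity of $\lambda_{\mathbf p}$ for $\mathbf p\ne\mathbf 0$, since it gives the uniform lower bound $\lambda_{\min}$ and hence the uniform $\mathcal O(x)$ bound. It also shows that $C_n^{(L)}(\alpha)$ is finite and strictly positive.

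This is where the finiteness of the torus enters. On $\mathbb Z^n$ the spectrum has no gap at the zero mode, whereas here the gap is at least $1-\cos(2\pi/L)>0$.
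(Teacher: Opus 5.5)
Your proposal is correct and follows the same route as the paper: start from the finite spectral sum \eqref{eq:closed_form}, isolate the zero mode to get the $1/x$ term, and expand the remaining finitely many terms using $n-\sum_j c_{p_j}>0$ for every nonzero multi-index. Your uniform bound via $\lambda_{\min}$ and $|\hatphi(p)|\le\hatphi(0)$ simply makes the paper's $\mathcal O(x)$ step explicit.
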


\begin{proof}
From the spectral representation,
\[
\LnL(\alpha,x)=\frac{1}{L^n}\sum_{p_1,\ldots,p_n=0}^{L-1}\frac{\prod_{j=1}^n \hatphi(p_j)}{n-\sum_{j=1}^n c_{p_j}+nx}.
\]
Since \(c_0=1\), the term corresponds to the zero mode \((p_1,\ldots,p_n)=(0,\ldots,0)\) equals
\[
\frac{1}{L^n}\frac{\left(\hatphi(0)\right)^n}{nx}=\frac{\left(\hatphi(0)\right)^n}{nL^n x}.
\]
For every nonzero mode, we have
\(
n-\sum_{j=1}^n c_{p_j}>0.
\)
Hence, as \(x\to0^+\), we obtain
\[
\frac{1}{n-\sum_{j=1}^n c_{p_j}+nx}=\frac{1}{n-\sum_{j=1}^n c_{p_j}}+\mathcal{O}(x).
\]
This proves the desired expansion.
\end{proof}

\begin{theorem}\label{thm:nu_asymptotics}
Fix $\alpha>0$, $L\ge2$, $n\ge1$, and $\mu>0$. Then, as $\nu\to0^+$, we have
\begin{equation}\label{eq:beta_nu_small}
\beta_n^{(L)}(\alpha)\sim\frac{(1+2\mu)(3+2\mu)}{4(1+\mu)\gamma_\alpha}\,\frac1{\nu}.
\end{equation}
In addition, as $\nu\to\infty$, we have
\begin{equation}\label{eq:beta_nu_large}
\beta_n^{(L)}(\alpha)\sim\frac{2\,\left(\hatphi(0)\right)^n}{nL^n\,C_n^{(L)}(\alpha)}\nu.
\end{equation}
\end{theorem}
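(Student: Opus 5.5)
The plan is to substitute the two asymptotic expansions of the generalized Laplace transform, Lemma~\ref{Lemma_D1} for $\nu\to0^+$ and Lemma~\ref{lem:Laplace_small_x} for $\nu\to\infty$, into the explicit threshold \eqref{eq:betaexplicit}. All three arguments have the form $x=k/(2\nu)$ with $k\in\{1,\,1+2\mu,\,3+2\mu\}$ and $\mu$ fixed, so $x\to\infty$ as $\nu\to0^+$ and $x\to0^+$ as $\nu\to\infty$. The key structural fact is a cancellation: the leading $1/(nx)$ parts, weighted by the coefficients in the denominator of \eqref{eq:betaexplicit}, sum to zero. This is the same identity used in Corollary~\ref{cor:alpha}, namely $-(1+2\mu)+(1+\mu)+\mu=0$. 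In each regime the denominator is therefore governed by the next-order term, while the numerator is governed by the leading term.

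\emph{Small $\nu$.} I write $\mathcal L_n^{(L)}(\alpha,k/(2\nu))=\frac{2\nu}{nk}-\frac{4\gamma_\alpha\nu^2}{nk^2}+\mathcal O(\nu^3)$. In the numerator the order-$\nu$ term is
\[
\frac{2\nu}{n}\bigl[(1+2\mu)^2+\mu-(1+\mu)\bigr]=\frac{8\mu(1+\mu)\nu}{n}\neq0 .
\]
In the denominator the order-$\nu$ term vanishes by the cancellation above. The order-$\nu^2$ term is
\[
-\frac{4\gamma_\alpha\nu^2}{n}\Bigl[\frac{1+\mu}{1+2\mu}+\frac{\mu}{3+2\mu}-(1+2\mu)\Bigr].
\]
Over the common denominator $(1+2\mu)(3+2\mu)$ the bracket has numerator
\[
(1+\mu)(3+2\mu)+\mu(1+2\mu)-(1+2\mu)^2(3+2\mu)=-2\mu(1+\mu)(3+2\mu)\cdot\frac{\cdots}{\cdots},
\]
which I will simplify explicitly and check is strictly negative, so that the denominator is positive. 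Dividing the two leading terms gives $\beta\sim K/\nu$, and the remaining algebra should reproduce $K=(1+2\mu)(3+2\mu)/(4(1+\mu)\gamma_\alpha)$. The $\mathcal O(\nu^3)$ errors are harmless because the denominator is of exact order $\nu^2$.

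\emph{Large $\nu$.} Lemma~\ref{lem:Laplace_small_x} gives $\mathcal L_n^{(L)}(\alpha,k/(2\nu))=\frac{2\nu(\hatphi(0))^n}{nL^nk}+C_n^{(L)}(\alpha)+\mathcal O(1/\nu)$. The order-$\nu$ parts have exactly the $1/k$ structure of the previous case. Hence the numerator is
\[
\frac{2\nu(\hatphi(0))^n}{nL^n}\,4\mu(1+\mu)+\mathcal O(1),
\]
and the order-$\nu$ part of the denominator cancels. The constant term of the denominator is
\[
C_n^{(L)}(\alpha)\bigl[(1+\mu)(1+2\mu)+\mu(3+2\mu)-(1+2\mu)\bigr]=4\mu(1+\mu)\,C_n^{(L)}(\alpha).
\]
The ratio is then $2(\hatphi(0))^n\nu/(nL^nC_n^{(L)}(\alpha))$, as claimed. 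For this I need $C_n^{(L)}(\alpha)>0$. This holds because the sum is nonempty when $L\ge2$, every $\hatphi(p)>0$ by Lemma~\ref{lem:phi_properties}(i), and each denominator $n-\sum_j c_{p_j}$ is positive for a nonzero mode.

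The main obstacle is the bookkeeping around the cancellation. In each regime the denominator's leading order vanishes, so I must verify that the next-order coefficient is nonzero and has the right sign. I must also check that the error terms from the lemmas are genuinely of smaller order than that surviving term. In the small-$\nu$ case this means being careful that the $\mathcal O(x^{-3})$ bound in Lemma~\ref{Lemma_D1} holds for each of the three fixed multiples $k$, which it does since $\mu$ is fixed. Finally, the nonmonotonicity claimed in Result~\ref{thm:nu-effect-section} follows from these two limits. Since $\beta\to\infty$ at both ends and $\nu\mapsto\beta$ is continuous on $(0,\infty)$, $\beta$ attains a global minimum at some interior $\nu^*>0$.
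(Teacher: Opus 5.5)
Your proposal is correct and follows the paper's proof essentially step for step. You substitute the large-$x$ expansion of Lemma~\ref{Lemma_D1} (for $\nu\to0^+$) and the small-$x$ expansion of Lemma~\ref{lem:Laplace_small_x} (for $\nu\to\infty$) into Eq.~\eqref{eq:betaexplicit}. You use the cancellation $(1+\mu)+\mu-(1+2\mu)=0$ to remove the leading order of the denominator, and then divide the surviving terms. Your explicit check that $C_n^{(L)}(\alpha)>0$ is a small point the paper leaves implicit.

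The bracket you deferred simplifies to $-8\mu(1+\mu)^2/\bigl((1+2\mu)(3+2\mu)\bigr)$, so the factor $3+2\mu$ in your placeholder does not actually appear. This gives the denominator $32\gamma_\alpha\mu(1+\mu)^2\nu^2/\bigl(n(1+2\mu)(3+2\mu)\bigr)$ and hence exactly the constant $(1+2\mu)(3+2\mu)/\bigl(4(1+\mu)\gamma_\alpha\bigr)$.
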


\begin{proof}
First, let $\nu\to0^+$. Evaluating the expansion \eqref{eq28} in Lemma \ref{Lemma_D1} for $x_1=\frac{1}{2\nu}$, $x_2=\frac{1+2\mu}{2\nu}$, and $x_3=\frac{3+2\mu}{2\nu}$ and substituting these expansions into \eqref{eq:betaexplicit}, the numerator of $\beta_n^{(L)}(\alpha)$ satisfies
\begin{align*}
&(1+2\mu)^2\LnL(\alpha,x_1)+\mu(3+2\mu)\LnL(\alpha,x_3)\\
&-(1+\mu)(1+2\mu)\LnL(\alpha,x_2)\\
=&(1+2\mu)^2\frac{2\nu}{n}+\mu(3+2\mu)\frac{2\nu}{n(3+2\mu)}\\
&-(1+\mu)(1+2\mu)\frac{2\nu}{n(1+2\mu)}+\mathcal{O}(\nu^2)\\
=&\frac{8\mu(1+\mu)}{n}\nu+\mathcal{O}(\nu^2),
\end{align*}
while the denominator can be expanded as 
\begin{align*}
&(1+\mu)(1+2\mu)\LnL(\alpha,x_2)+\mu(3+2\mu)\LnL(\alpha,x_3)\\
&-(1+2\mu)\LnL(\alpha,x_1)\\
=&(1+\mu)(1+2\mu)\left[\frac{2\nu}{n(1+2\mu)}-\frac{4\gamma_\alpha\nu^2}{n(1+2\mu)^2}\right]\\
&+\mu(3+2\mu)\left[\frac{2\nu}{n(3+2\mu)}-\frac{4\gamma_\alpha\nu^2}{n(3+2\mu)^2}\right]\\
&-(1+2\mu)\left[\frac{2\nu}{n}-\frac{4\gamma_\alpha\nu^2}{n}\right]+\mathcal{O}(\nu^3)\\
&=\frac{32\gamma_\alpha\mu(1+\mu)^2}{n(1+2\mu)(3+2\mu)}\nu^2+\mathcal{O}(\nu^3).
\end{align*}
Dividing these expansions yields
\begin{equation}
\beta_n^{(L)}(\alpha)=\frac{(1+2\mu)(3+2\mu)}{4(1+\mu)\gamma_\alpha}\frac{1}{\nu}+\mathcal{O}(1).
\end{equation}

Now, let $\nu\to\infty$. Using Lemma \ref{lem:Laplace_small_x} for \(x=\frac{1}{2\nu},\frac{1+2\mu}{2\nu},\frac{3+2\mu}{2\nu}\), and substituting the results into \eqref{eq:betaexplicit}, the numerator of \(\beta_n^{(L)}(\alpha)\) will take the form
\begin{align*}
&(1+2\mu)^2\LnL(\alpha,x_1)+\mu(3+2\mu)\LnL(\alpha,x_3)\\
&-(1+\mu)(1+2\mu)\LnL(\alpha,x_2)\\
=&(1+2\mu)^2\frac{2\left(\hatphi(0)\right)^n\nu}{nL^n}+\mu(3+2\mu)\frac{2\left(\hatphi(0)\right)^n\nu}{nL^n(3+2\mu)}\\
&-(1+\mu)(1+2\mu)  \frac{2\left(\hatphi(0)\right)^n\nu}{nL^n(1+2\mu)}+\mathcal{O}(1)\\
=&\frac{8\left(\hatphi(0)\right)^n\mu(1+\mu)}{nL^n}\nu+\mathcal{O}(1).
\end{align*}

For the denominator, the leading \(O(\nu)\) terms cancel and we obtain
\begin{align*}
&(1+\mu)(1+2\mu)\LnL(\alpha,x_2)+\mu(3+2\mu)\LnL(\alpha,x_3)\\
&\qquad-(1+2\mu)\LnL(\alpha,x_1)\\
=&(1+\mu)(1+2\mu)\left[\frac{2\left(\hatphi(0)\right)^n\nu}{nL^n(1+2\mu)}+C_n^{(L)}(\alpha)\right]\\
&+\mu(3+2\mu)\left[\frac{2\left(\hatphi(0)\right)^n\nu}{nL^n(3+2\mu)}+C_n^{(L)}(\alpha)\right]\\
&-(1+2\mu)\left[\frac{2\left(\hatphi(0)\right)^n\nu}{nL^n}+C_n^{(L)}(\alpha)\right]+\mathcal{O}(\nu^{-1})\\
=&4\mu(1+\mu)C_n^{(L)}(\alpha)+\mathcal{O}(\nu^{-1}).
\end{align*}

Dividing these expansions gives
\begin{equation}
\beta_n^{(L)}(\alpha)=\frac{2\left(\hatphi(0)\right)^n}{nL^nC_n^{(L)}(\alpha)}\nu+\mathcal{O}(1).
\end{equation}
\end{proof}

\begin{remark}
The asymptotic expansions \eqref{eq:beta_nu_small} and
\eqref{eq:beta_nu_large} show that
\[
\lim_{\nu\to0^+}\beta_n^{(L)}(\alpha)
=
\lim_{\nu\to\infty}\beta_n^{(L)}(\alpha)
=
\infty.
\]
This means that neither extremely rare nor extremely frequent phenotype mutation can make it possible for selection to favor the abundance of cooperation. When $\nu$ is very small, phenotypic variation is generated too slowly for discrimination to create effective assortment. When $\nu$ is very large, phenotypes are mixed too rapidly across the torus, so phenotypic similarity no longer provides a reliable signal of common strategy.
\end{remark}

\begin{corollary}\label{cor:nu_optimal}
For fixed \(\alpha>0\), \(L\ge2\), \(n\ge1\), and \(\mu>0\), the map \(\nu\longmapsto \beta_n^{(L)}(\alpha)\) is not monotone on \((0,\infty)\) and attains at least one global minimum at some interior point \(\nu^*>0\).
\end{corollary}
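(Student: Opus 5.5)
We will deduce the corollary from Theorem~\ref{thm:nu_asymptotics} by a continuity and compactness argument. Write $B(\nu)$ for $\beta_n^{(L)}(\alpha)$ regarded as a function of $\nu\in(0,\infty)$, with $\alpha,L,n,\mu$ fixed.

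\emph{Continuity.} The first step is to check that $B$ is continuous on $(0,\infty)$. By the closed form \eqref{eq:closed_form}, each $\mathcal L_n^{(L)}(\alpha,x)$ is a finite sum of terms $1/(n-\sum_j c_{p_j}+nx)$ with positive denominators. Hence $x\mapsto\mathcal L_n^{(L)}(\alpha,x)$ is continuous (indeed smooth) on $(0,\infty)$. Substituting $x=1/(2\nu)$, $(1+2\mu)/(2\nu)$, $(3+2\mu)/(2\nu)$ shows that the numerator and denominator in \eqref{eq:betaexplicit} are continuous in $\nu$. The denominator equals $G-H$ up to a positive factor. It is strictly positive for every $\nu>0$, because \eqref{main:form} together with $\partial\beta/\partial\mathcal R_i<0$ and the finiteness of $\beta$ requires this; more directly, we will verify that $G_n^{(L)}>H_n^{(L)}$ from the covariance structure. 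If that direct verification is awkward, we use the representation \eqref{main:form}. There the denominator is increasing in $\mathcal R_2$ and $\mathcal R_3$, both of which are bounded below by their $\alpha\to0^+$ limits $1/(1+2\mu)$ and $1/(3+2\mu)$, and at those values the denominator vanishes. Strict monotonicity in $\alpha$ (Lemma~\ref{lem:X_monotone}) then gives strict positivity. This positivity step is the only point needing care. Once it holds, $B$ is continuous and finite on $(0,\infty)$.

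\emph{Coercivity.} By Theorem~\ref{thm:nu_asymptotics}, $B(\nu)\sim K_0/\nu$ as $\nu\to0^+$ with $K_0>0$, and $B(\nu)\sim K_\infty\nu$ as $\nu\to\infty$. Here $K_\infty=2(\hatphi(0))^n/(nL^nC_n^{(L)}(\alpha))$ is positive, since $\hatphi(p)>0$ by Lemma~\ref{lem:phi_properties}(i), which makes $C_n^{(L)}(\alpha)>0$. Therefore $B(\nu)\to\infty$ at both ends. Fix any $\nu_0>0$. We can then choose $0<a<\nu_0<b$ such that $B(\nu)>B(\nu_0)$ for $\nu\le a$ and for $\nu\ge b$.

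\emph{Conclusion.} On the compact interval $[a,b]$ the continuous function $B$ attains a minimum at some $\nu^*\in[a,b]$. Then $B(\nu^*)\le B(\nu_0)<B(\nu)$ for every $\nu\notin(a,b)$. Hence $\nu^*$ is a global minimizer on $(0,\infty)$, and it lies in the interior of $[a,b]$ and therefore in $(0,\infty)$.

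Non-monotonicity follows immediately. If $B$ were nonincreasing, then $B(\nu)\le B(\nu_0)$ for all $\nu>\nu_0$, which contradicts $B\to\infty$ as $\nu\to\infty$. If $B$ were nondecreasing, the same contradiction arises as $\nu\to0^+$.
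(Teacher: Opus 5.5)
Your argument is correct and follows the paper's proof. Both use the two-sided divergence from Theorem~\ref{thm:nu_asymptotics} together with continuity and finiteness of $\nu\mapsto\beta_n^{(L)}(\alpha)$ on $(0,\infty)$ to get an interior global minimizer and rule out monotonicity; your compactness step just makes this explicit. You are more careful than the paper, which takes finiteness (positivity of the denominator) for granted. Your first justification (``finiteness of $\beta$ requires this'') is circular, but your fallback via strict $\alpha$-monotonicity of $\mathcal R_2,\mathcal R_3$ and their $\alpha\to0^+$ limits is valid. Positivity also follows more directly from the proof of Theorem~\ref{thm:mu}: the denominator equals $\frac{2\nu}{n}\int\Xi_2(\lambda,\mu)\,d\rho(\lambda)$, with $\Xi_2>0$ for $\lambda>0$ and $\rho((0,\infty))>0$.
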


\begin{proof}
By Theorem \ref{thm:nu_asymptotics}, \(\beta_n^{(L)}(\alpha)\to\infty\) at both endpoints of \((0,\infty)\). Since \(\nu\mapsto\beta_n^{(L)}(\alpha)\) is continuous on $(0,\infty)$ and finite for every finite \(\nu>0\), it attains at least one global minimum at an interior point \(\nu^*>0\). The divergence at both endpoints also rules out monotonicity on $(0,\infty)$.
\end{proof}

\section{\label{AppendixE}Effect of the Strategy Mutation Rate}
In this section, we will study the effect of the strategy mutation rate \(\mu\) on the threshold $\beta_n^{(L)}(\alpha)$.
\begin{lemma}\label{lem:stieltjes_measure}
Define the discrete probability measure \(\rho\) on \([0,\infty)\) by
\[
\rho
=
\frac{1}{L^n}
\sum_{p_1,\ldots,p_n=0}^{L-1}
\left(\prod_{j=1}^n \hatphi(p_j)\right)
\delta_{\lambda_{p_1,\ldots,p_n}},
\]
where \(\lambda_{p_1,\ldots,p_n}=\frac{2\nu}{n}\left(n-\sum_{j=1}^n c_{p_j}\right)\). Then, for every $s>0$, we have
\begin{equation}\label{lemma:stieljes_form}
\LnL\left(\alpha,\frac{s}{2\nu}\right)=\frac{2\nu}{n}\int_{[0,\infty)}\frac{1}{\lambda+s}\,d\rho(\lambda).
\end{equation}
\end{lemma}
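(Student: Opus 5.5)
Starting from the closed form \eqref{eq:closed_form} established in Appendix~\ref{ln_is_decreasing}, the proof is a direct algebraic rescaling. Evaluate that formula at $x=s/(2\nu)$ to obtain
\[
\LnL\left(\alpha,\frac{s}{2\nu}\right)=\frac{1}{L^n}\sum_{p_1,\ldots,p_n=0}^{L-1}\frac{\prod_{j=1}^n\hatphi(p_j)}{n-\sum_{j=1}^n c_{p_j}+\frac{ns}{2\nu}}.
\]
Then multiply numerator and denominator of each summand by $2\nu/n$. This turns the denominator into $\frac{2\nu}{n}\bigl(n-\sum_j c_{p_j}\bigr)+s=\lambda_{p_1,\ldots,p_n}+s$ and produces an overall factor $2\nu/n$. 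The sum is then exactly $\frac{2\nu}{n}\int_{[0,\infty)}(\lambda+s)^{-1}\,d\rho(\lambda)$, with $\rho$ the stated atomic measure.

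It remains to verify that $\rho$ is a probability measure on $[0,\infty)$.

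\textbf{Support.} Since $c_p\le 1$, every atom $\lambda_{p_1,\ldots,p_n}\ge0$. Equality holds only for the zero mode.

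\textbf{Positivity.} The weights are strictly positive by Lemma~\ref{lem:phi_properties}(i).

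\textbf{Total mass.} Factorize the sum over multi-indices as a product:
\[
\frac{1}{L^n}\sum_{p_1,\ldots,p_n}\prod_j\hatphi(p_j)=\Bigl(\frac1L\sum_{p=0}^{L-1}\hatphi(p)\Bigr)^n.
\]
By the definition \eqref{eq:DFT_main} and the orthogonality relation $\sum_p e^{2\pi\mathrm{i}pm/L}=L\mathbf 1_{\{m=0\}}$, the inner average equals $e^{-\alpha d_1(0,0)}=1$. Equivalently, it equals $\Phic(0)=1$ by \eqref{eq:Phidef} and Lemma~\ref{lem:phi_properties}(ii). Hence $\rho$ has total mass one.

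Convergence of each term is guaranteed for $s>0$, since $\lambda+s\ge s>0$. There is no real obstacle. The only point requiring care is the normalization check, which is where the discrete Fourier inversion (or $\Phic(0)=1$) enters.
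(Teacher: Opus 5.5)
Your proposal is correct and follows the paper's proof: evaluate the closed form \eqref{eq:closed_form} at $x=s/(2\nu)$ and factor out $2\nu/n$, so that each denominator becomes $\lambda_{p_1,\ldots,p_n}+s$. Your additional checks that $\rho$ is a probability measure are also correct, and the paper asserts this without proof: the atoms are nonnegative, the weights are positive by Lemma~\ref{lem:phi_properties}(i), and the total mass is one by Fourier inversion, equivalently $\Phic(0)=1$.
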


\begin{proof}
The proof is a direct consequence of rewriting Eq.~\eqref{eq:closed_form} as
\begin{align*}
\LnL\left(\alpha,\frac{s}{2\nu}\right)&=\frac{1}{L^n}\sum_{p_1,\ldots,p_n=0}^{L-1}\frac{\prod_{j=1}^n \hatphi(p_j)}{n-\sum_{j=1}^n c_{p_j}+\frac{ns}{2\nu}}\\
&=\frac{2\nu}{nL^n}\sum_{p_1,\ldots,p_n=0}^{L-1}\frac{\prod_{j=1}^n \hatphi(p_j)}{\lambda_{p_1,\ldots,p_n}+s}\\
&=\frac{2\nu}{n}\int_{[0,\infty)}\frac{1}{\lambda+s}\,d\rho(\lambda).
\end{align*}
\end{proof}

\begin{theorem}\label{thm:mu}
Fix $\alpha>0$, $L\ge2$, $n\ge1$, and $\nu>0$. Then the map $\mu\longmapsto \beta_n^{(L)}(\alpha)$ is strictly increasing on $(0,\infty)$.
\end{theorem}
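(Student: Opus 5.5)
The plan is to work with the Stieltjes representation of Lemma~\ref{lem:stieltjes_measure}. Put $F(s)=\int_{[0,\infty)}(\lambda+s)^{-1}\,d\rho(\lambda)$, so that $\LnL(\alpha,s/(2\nu))=\frac{2\nu}{n}F(s)$. Then $\nu$, $n$, $L$ and $\alpha$ enter only through the fixed measure $\rho$, and all dependence on $\mu$ sits in the explicit coefficients and in the evaluation points $a=1+2\mu$ and $b=3+2\mu$. Write $\mathcal N(\mu)$ and $\mathcal D(\mu)$ for the numerator and denominator of Eq.~\eqref{eq:betaexplicit}, with each $\mathcal L$ replaced by the corresponding $F$. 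As shown after Eq.~\eqref{eq:betaexplicit}, $\mathcal N-\mathcal D=2(1+\mu)(1+2\mu)[F(1)-F(a)]$. Hence
\[
\beta_n^{(L)}(\alpha)=1+\frac{1}{Q(\mu)},\qquad Q(\mu)=\frac{\mathcal D(\mu)}{2(1+\mu)(1+2\mu)[F(1)-F(a)]},
\]
and it suffices to prove that $Q$ is strictly decreasing in $\mu$.

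Next I would make the integrands explicit, pointwise in $\lambda$. First, $F(1)-F(a)=2\mu\int\bigl[(\lambda+1)(\lambda+a)\bigr]^{-1}d\rho$. Second, the pointwise kernel of $\mathcal D$,
\[
d_\mu(\lambda)=\frac{(1+\mu)a}{\lambda+a}+\frac{\mu b}{\lambda+b}-\frac{a}{\lambda+1},
\]
vanishes at $\lambda=0$; this is the zero-mode cancellation already used in Corollary~\ref{cor:alpha} and Theorem~\ref{thm:nu_asymptotics}. Putting $d_\mu$ over the common denominator $(\lambda+1)(\lambda+a)(\lambda+b)$ and factoring out $\lambda$, I expect to get $d_\mu(\lambda)=\lambda\,P_\mu(\lambda)/[(\lambda+1)(\lambda+a)(\lambda+b)]$ with $P_\mu$ a polynomial of degree at most one in $\lambda$ and positive coefficients. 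Consequently,
\[
Q(\mu)=\frac{\int k_\mu(\lambda)\,d\pi_\mu(\lambda)}{4\mu(1+\mu)(1+2\mu)},\qquad
d\pi_\mu\propto\frac{d\rho(\lambda)}{(\lambda+1)(\lambda+a)},\qquad
k_\mu(\lambda)=\frac{\lambda P_\mu(\lambda)}{\lambda+b}.
\]
That is, $Q$ is an expectation of an explicit rational function under a $\mu$-dependent probability measure. The atom of $\rho$ at $\lambda=0$ contributes nothing to the numerator, which is what keeps $Q$ finite.

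Finally, I would differentiate in $\mu$, as in Lemma~\ref{lem:X_monotone}. This gives
\[
Q'(\mu)=\E_{\pi_\mu}\Bigl[\partial_\mu\tilde k_\mu\Bigr]+\Cov_{\pi_\mu}\Bigl(\tilde k_\mu,\ \partial_\mu\log\tfrac{d\pi_\mu}{d\rho}\Bigr),
\]
where $\tilde k_\mu=k_\mu/[4\mu(1+\mu)(1+2\mu)]$. Here $\partial_\mu\log\frac{d\pi_\mu}{d\rho}=-\frac{2}{\lambda+a}+\text{const}$, which is increasing in $\lambda$. Since $\tilde k_\mu$ is also increasing in $\lambda$, Chebyshev's covariance inequality makes this covariance term \emph{positive}, i.e. it works against us. This is the main obstacle. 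What I would try first is to avoid splitting the derivative and instead show directly that the pointwise quantity
\[
\partial_\mu\!\left[\tilde k_\mu(\lambda)\right]+\tilde k_\mu(\lambda)\left(-\frac{2}{\lambda+a}\right)
\]
dominates after recentring, by absorbing the tilt into the kernel. Concretely, I would write $Q(\mu)=\int m_\mu\,d\rho\big/\int w_\mu\,d\rho$ with $m_\mu=d_\mu/(2(1+\mu)(1+2\mu))$ and $w_\mu=2\mu/[(\lambda+1)(\lambda+a)]$. The claim $Q'<0$ then becomes
\[
\iint\bigl[\partial_\mu m_\mu(\lambda)\,w_\mu(\lambda')-m_\mu(\lambda)\,\partial_\mu w_\mu(\lambda')\bigr]\,d\rho(\lambda)\,d\rho(\lambda')<0 .
\]
I would symmetrize this in $(\lambda,\lambda')$ and reduce it to the sign of an explicit two-variable rational expression. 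If that expression can be shown to be negative by clearing denominators, which is a polynomial check in $\lambda,\lambda'\ge0$ and $\mu>0$, then strictness follows because $\rho$ charges at least one $\lambda>0$ by Lemma~\ref{lem:phi_properties}(i). As a sanity check, the one-atom case $\rho=\delta_\lambda$ must reproduce a monotone function of $\mu$; the $n\to\infty$ formula of Theorem~\ref{cor:n_infty} is of exactly this form, and I would verify it first.
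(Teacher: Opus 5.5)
Your plan is the paper's proof. You write the threshold as a ratio of $\rho$-integrals, differentiate in $\mu$, symmetrize the double integral, and reduce to the sign of an explicit two-variable kernel; the rewrite $\beta=1+1/Q$ does not change that kernel, because $(\mathcal N-\mathcal D)'\mathcal D-(\mathcal N-\mathcal D)\mathcal D'=\mathcal N'\mathcal D-\mathcal N\mathcal D'$, so your symmetrized integrand is exactly $-\tfrac12 c(\mu)^2K_\mu(\lambda,\lambda')$ with $c(\mu)=[2(1+\mu)(1+2\mu)]^{-1}$. The sign check you leave conditional is therefore precisely the paper's factorization $K_\mu=32\mu^2(1+\mu)^2P_\mu/(D_\mu(\lambda)D_\mu(\lambda'))$, where $P_\mu$ has nonnegative coefficients and is strictly positive off the origin; this is the one computation you still have to carry out to complete the proof.
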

\begin{proof}
Evaluating \eqref{lemma:stieljes_form} at
\(s=1,1+2\mu,3+2\mu\) and substituting into
Eq.~\eqref{eq:betaexplicit}, we can write the threshold as
\begin{equation}\label{eq:beta_mu}
\beta_n^{(L)}(\alpha) =\frac{\int_{[0,\infty)}\Xi_1(\lambda,\mu)\,d\rho(\lambda)}{\int_{[0,\infty)}\Xi_2(\lambda,\mu)\,d\rho(\lambda)},
\end{equation}
where
\begin{align}
\Xi_1(\lambda,\mu)&=\frac{(1+2\mu)^2}{\lambda+1}-\frac{(1+\mu)(1+2\mu)}{\lambda+1+2\mu}+\frac{\mu(3+2\mu)}{\lambda+3+2\mu}\nonumber\\
&=\frac{4\mu(1+\mu)\bigl(\lambda^2+(4\mu+3)\lambda+4\mu^2+8\mu+3\bigr)}{(\lambda+1)(\lambda+2\mu+1)(\lambda+2\mu+3)},\nonumber
\\
\Xi_2(\lambda,\mu)
&=
-\frac{1+2\mu}{\lambda+1}
+\frac{(1+\mu)(1+2\mu)}{\lambda+1+2\mu}
+\frac{\mu(3+2\mu)}{\lambda+3+2\mu}\nonumber\\
&=
\frac{
4\lambda\mu(1+\mu)(\lambda+2\mu+2)}
{(\lambda+1)(\lambda+2\mu+1)(\lambda+2\mu+3)}.\label{eq:Fmu}
\end{align}

Since \(\Xi_2(\lambda,\mu)>0\) for \(\lambda>0\), and since
\(\rho\) assigns positive mass to \((0,\infty)\), we have
\[
\int_{[0,\infty)}\Xi_2(\lambda,\mu)\,d\rho(\lambda)>0.
\]
Moreover, \(\rho\) has finite support, so differentiation under the
integrals is immediate. Differentiating \eqref{eq:beta_mu} with respect
to \(\mu\) and symmetrizing the resulting double integral, we obtain
\[
\frac{\partial}{\partial\mu}\beta_n^{(L)}(\alpha)
=
\frac{
\frac12
\iint_{[0,\infty)^2}
K_\mu(\lambda,\eta)\,
d\rho(\lambda)\,d\rho(\eta)}
{
\left(
\int_{[0,\infty)}
\Xi_2(\lambda,\mu)\,d\rho(\lambda)
\right)^2},
\]
where
\begin{align}\label{kernel:eq1}
K_\mu(\lambda,\eta)&=
\Xi_2(\eta,\mu)
\frac{\partial}{\partial\mu}\Xi_1(\lambda,\mu)
+
\Xi_2(\lambda,\mu)
\frac{\partial}{\partial\mu}\Xi_1(\eta,\mu)\nonumber\\
&-
\Xi_1(\eta,\mu)
\frac{\partial}{\partial\mu}\Xi_2(\lambda,\mu)
-
\Xi_1(\lambda,\mu)
\frac{\partial}{\partial\mu}\Xi_2(\eta,\mu).
\end{align} 
Then, to show that $\mu\longmapsto \beta_n^{(L)}(\alpha)$ is strictly increasing on $(0,\infty)$ it is enough to show that 
\[
\iint_{[0,\infty)^2}
K_\mu(\lambda,\eta)\,
d\rho(\lambda)\,d\rho(\eta)>0.
\]

A direct substitution of \eqref{eq:Fmu} into
\eqref{kernel:eq1}, followed by reduction to a common denominator, gives
\begin{equation}\label{eq:Kmu_factor}
K_\mu(\lambda,\eta)
=
\frac{
32\mu^2(1+\mu)^2 P_\mu(\lambda,\eta)}
{D_\mu(\lambda)D_\mu(\eta)},
\end{equation}
where
\begin{align*}
&P_\mu(\lambda,\eta)\\
=&(2\mu+3)^2(\lambda+\eta)^3\lambda\eta+(2\mu+1)^2(2\mu+3)^2(\lambda+\eta)^3\\
&+2(2\mu+3)(\lambda+\eta)^2\lambda^2\eta^2+(\lambda+\eta)\lambda^3\eta^3\\
&+16(1+\mu)^2(4\mu+5)(\lambda+\eta)^2\lambda\eta+2(4\mu+1)\lambda^3\eta^3\\
&+4(1+\mu)(2\mu+1)^2(2\mu+3)^2(\lambda+\eta)^2\\
&+(44\mu^2+80\mu+31)(\lambda+\eta)\lambda^2\eta^2\\
&+(176\mu^4+736\mu^3+1160\mu^2+808\mu+203)(\lambda+\eta)\lambda\eta \\
&+(2\mu+1)^2(2\mu+3)^2(4\mu^2+8\mu+5)(\lambda+\eta)\\
&+4(16\mu^3+44\mu^2+36\mu+7)\lambda^2\eta^2\\
&+2(2\mu+1)(2\mu+3)(16\mu^3+52\mu^2+68\mu+31)\lambda\eta
\end{align*}
and 
\[
D_\mu(x)
=
(x+1)(x+2\mu+1)^2(x+2\mu+3)^2.
\]
Since \(\lambda,\eta\geq0\), we have \(P_\mu(\lambda,\eta)\geq0\), and the inequality is strict if \((\lambda,\eta)\neq(0,0)\). Consequently, $K_\mu(\lambda,\eta)>0$ whenever $(\lambda,\eta)\neq(0,0)$, whereas \(K_\mu(0,0)=0\).
As \(\rho\) is not concentrated at \(0\), where $\rho((0,\infty))>0$, it follows that
\[
\iint_{[0,\infty)^2}
K_\mu(\lambda,\eta)\,
d\rho(\lambda)\,d\rho(\eta)>0,
\]
which completes the proof.
\end{proof}

\begin{corollary}\label{cor:mu_infty_revised}
For fixed $\alpha>0$, $L\ge2$, $n\ge1$, and $\nu>0$, we have
\[
\lim_{\mu\to\infty}\beta_n^{(L)}(\alpha)=\infty.
\]
\end{corollary}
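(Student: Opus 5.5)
The plan is to work with the Stieltjes representation \eqref{eq:beta_mu}, where the threshold is the ratio of $\int \Xi_1\,d\rho$ and $\int \Xi_2\,d\rho$. The measure $\rho$ from Lemma~\ref{lem:stieltjes_measure} does not depend on $\mu$, has finite support, and satisfies $\rho((0,\infty))>0$. So it suffices to find the leading behaviour of $\Xi_1(\lambda,\mu)$ and $\Xi_2(\lambda,\mu)$ as $\mu\to\infty$ for each fixed $\lambda$ in the support, and then sum. Since the sum is finite, no dominated-convergence argument is needed; limits can be taken term by term.

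From the factored forms in the proof of Theorem~\ref{thm:mu}, the numerator of $\Xi_1$ is $4\mu(1+\mu)(4\mu^2+O(\mu))$, of order $16\mu^4$. The denominator is $(\lambda+1)(2\mu+O(1))^2$, of order $4\mu^2(\lambda+1)$. Hence
\[
\Xi_1(\lambda,\mu)\sim \frac{4\mu^2}{\lambda+1}.
\]
The numerator of $\Xi_2$ is $4\lambda\mu(1+\mu)(2\mu+O(1))$, of order $8\lambda\mu^3$. Hence
\[
\Xi_2(\lambda,\mu)\sim \frac{2\lambda\mu}{\lambda+1}.
\]
Summing against $\rho$ then gives
\[
\beta_n^{(L)}(\alpha)\sim 2\mu\,\frac{\int(\lambda+1)^{-1}d\rho}{\int \lambda(\lambda+1)^{-1}d\rho}.
\]
The integral in the denominator is strictly positive because $\rho((0,\infty))>0$. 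The integral in the numerator is positive and finite. Therefore $\beta_n^{(L)}(\alpha)\to\infty$, linearly in $\mu$.

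To match the coefficient stated in the main text, I would express both integrals through $\mathcal L$. Lemma~\ref{lem:stieltjes_measure} with $s=1$ gives $\int(\lambda+1)^{-1}d\rho=\frac{n}{2\nu}\mathcal L_n^{(L)}(\alpha,\frac1{2\nu})$. For the other integral, write $\lambda/(\lambda+1)=1-1/(\lambda+1)$ and use that $\rho$ is a probability measure; this requires $\hatphi$ summing to $L$, i.e.\ $\sum_p\hatphi(p)=L\,e^{0}=L$ by DFT inversion at $m=0$. Together these yield the coefficient
\[
\frac{2\mathcal L}{\frac{2\nu}{n}-\mathcal L},
\]
where $\mathcal L=\mathcal L_n^{(L)}(\alpha,\frac1{2\nu})$.

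The only delicate step is bookkeeping the leading-order cancellations in $\Xi_1$ and $\Xi_2$. Using the factored forms avoids these cancellations, so the step should be routine. Strict positivity of the denominator comes directly from the nondegeneracy of $\rho$ already established in the proof of Theorem~\ref{thm:mu}.
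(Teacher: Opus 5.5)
Your argument is correct and reaches the same asymptotic as the paper, $\beta_n^{(L)}(\alpha)\sim\frac{2\mathcal L}{2\nu/n-\mathcal L}\,\mu$ with $\mathcal L=\mathcal L_n^{(L)}(\alpha,\tfrac{1}{2\nu})$, but by a different route.

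\textbf{The paper's route.} It does not use the spectral measure. It keeps $x_1=1/(2\nu)$ fixed and notes that $x_2=(1+2\mu)/(2\nu)$ and $x_3=(3+2\mu)/(2\nu)$ tend to infinity. It then substitutes the large-$x$ expansion \eqref{eq28} of Lemma~\ref{Lemma_D1} directly into \eqref{eq:betaexplicit}. Positivity of the coefficient comes from $\Phi_\alpha^{(L)}(\sigma)<1$ for $\sigma>0$, that is, $\mathcal L<\tfrac{2\nu}{n}$.

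\textbf{Your route.} You work with the exact finite representation \eqref{eq:beta_mu}, in which $\Xi_1$ and $\Xi_2$ are explicit rational functions of $\mu$. The limit is then an elementary termwise computation, with no Taylor remainders or tail integrals to control. Positivity appears as $\tfrac{2\nu}{n}-\mathcal L=\tfrac{2\nu}{n}\int\frac{\lambda}{\lambda+1}\,d\rho>0$. This is the paper's strict inequality seen spectrally.

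\textbf{How the two connect.} Your check that $\rho$ has unit mass, $\sum_p\hatphi(p)=L$, is exactly the leading term $1/(nx)$ in Lemma~\ref{Lemma_D1}, equivalently $\Phi_\alpha^{(L)}(0)=1$. It also fills in the paper's unproved assertion that $\rho$ is a probability measure.

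\textbf{Trade-off.} The paper's version is shorter because it reuses Lemma~\ref{Lemma_D1} from the $\nu\to0^+$ analysis. Yours is more self-contained and fully exact.

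\textbf{Presentation.} At the atom $\lambda=0$ you have $\Xi_2\equiv0$, so the relation $\Xi_2\sim 2\lambda\mu/(\lambda+1)$ is degenerate there. State the termwise limits as $\mu^{-2}\Xi_1(\lambda,\mu)\to 4/(\lambda+1)$ and $\mu^{-1}\Xi_2(\lambda,\mu)\to 2\lambda/(\lambda+1)$ instead.
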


\begin{proof}
 The asymptotic formula in \eqref{eq28} gives
\begin{equation}\label{eq30}
\beta_n^{(L)}(\alpha)
\sim
\frac{2\LnL(\alpha,\frac1{2\nu})}
{\frac{2\nu}{n}-\LnL(\alpha,\frac1{2\nu})}\,\mu,
\end{equation}
as $\mu\to\infty$.
The coefficient of $\mu$ in \eqref{eq30} is strictly positive because
\[
\begin{aligned}
\LnL\left(\alpha,\frac1{2\nu}\right)&=\int_0^\infty [\Phic(\sigma)]^n e^{-n\sigma/(2\nu)}\,d\sigma\\
&<
\int_0^\infty e^{-n\sigma/(2\nu)}\,d\sigma
=
\frac{2\nu}{n}.
\end{aligned}
\]
Therefore, the threshold $\beta_n^{(L)}(\alpha)$ grows linearly in $\mu$ and tends to infinity as $\mu\to\infty$.
\end{proof}

\bibliographystyle{apsrev4-2}
\bibliography{apssamp}

\end{document}